\documentclass[11pt]{article}

\usepackage[T1]{fontenc}
\usepackage[margin=1in]{geometry}
\usepackage{color}
\usepackage{setspace}
\usepackage{amsmath,amsfonts,graphicx,amssymb,bm,amsthm}
\usepackage{appendix}
\usepackage{hyperref}
\usepackage{color}
\usepackage{xcolor}
\usepackage{multirow}
\usepackage{makecell}
\usepackage{placeins}
\usepackage[ruled, vlined]{algorithm2e}
\usepackage[numbers]{natbib}
\usepackage{mathtools}
\usepackage{cleveref}
\usepackage{titlesec}
\crefname{assumption}{Assumption}{Assumption}
\crefname{algorithm}{Algorithm}{Algorithm}

\newcounter{setting}
\crefname{setting}{Setting}{Settings}
\SetKwInput{KwParams}{Parameters}
\makeatletter
\newenvironment{setting}[1][t]{%
  \edef\setting@savedalgocf{\the\value{algocf}}%
  \setcounter{algocf}{\value{setting}}%
  \let\setting@hyrefstep\hyper@refstepcounter
  \def\hyper@refstepcounter##1{\def\theHalgocf{setting.\arabic{algocf}}\setting@hyrefstep{##1}}%
  \crefalias{algocf}{setting}%
  \begin{algorithm}[#1]%
}{%
  \end{algorithm}%
  \setcounter{setting}{\value{algocf}}%
  \setcounter{algocf}{\setting@savedalgocf}%
}
\makeatother

\newcommand{\E}{\mathbb{E}}
\newcommand{\indi}[1]{{\mathbb{I}\left[#1\right]}}

\newcommand{\sumi}{\sum_{i=1}^n}
\newcommand{\sumt}{\sum_{t=1}^T}
\newcommand{\opt}{\mathrm{OPT}}

\newcommand{\vmax}{{\bar{v}}}
\newcommand{\alphalower}{{\underline{\alpha}}}

\newcommand{\muupper}{\bar \mu}
\newcommand{\valuespace}{{[0, \vmax]^n}}
\newcommand{\ppad}{$\mathsf{PPAD}$}

\newcommand{\supp}{{\operatorname{supp}}}

\newcommand{\dd}{\mathrm{d}}

\newcommand{\regret}{\mathrm{Regret}}
\theoremstyle{plain}

\crefname{budgettheorem}{Theorem}{Theorems}
\Crefname{budgettheorem}{Theorem}{Theorems}

\newtheorem{lemma}{Lemma}[section]
\newtheorem{definition}{Definition}[section]
\newtheorem{claim}{Claim}[section]
\crefname{claim}{claim}{claims}
\Crefname{claim}{Claim}{Claims}

\newcommand{\informalname}{}
\newtheorem*{informal*}{\informalname}
\newenvironment{informal}[1]
  {\renewcommand{\informalname}{\Cref{#1}}\begin{informal*}[Informal]}
  {\end{informal*}}

\usepackage{thmtools}

\newif\ifsppefullstatement

\titleformat{\paragraph}[runin]
  {\normalfont\bfseries}  
  {\theparagraph}         
  {1em}                   
  {}                       
  
\title{Impossibility of No-Regret in Multi-Agent Bidding}
\author{Zongjun Yang\thanks{Columbia University. Email: \texttt{zy2684@columbia.edu}}  \qquad Rachitesh Kumar\thanks{Carnegie Mellon University. Email: \texttt{rachitesh@cmu.edu}} \qquad Christian Kroer\thanks{Columbia University. Email: \texttt{ck2945@columbia.edu}}  }
\date{}
\begin{document}

\maketitle
\onehalfspacing
\begin{abstract}
    We consider $T$ repeated second-price auctions across $n$ pacing bidders with \textit{i.i.d.}\ item arrivals, and show that it is computationally impossible for all bidders to achieve no-regret simultaneously while satisfying their respective budget or return-on-investment constraints. Therefore, in the prominent internet auction systems where buyers simultaneously deploy pacing algorithms or automated bidding agents, we cannot hope to achieve joint no-regret guarantees in general. 
    
    We establish a deterministic relationship that high variance in online bidding systems implies high regret. By showing that one can extract a pacing equilibrium from a low-variance system, we prove \ppad-hardness of attaining joint no-regret. For several widely used bidding algorithms, we also show evidence from dynamical systems and numerical simulations that they can indeed display high-variance behavior through cycling, even on simple market instances.
\end{abstract}
 
\section{Introduction}

Online auctions are the dominant mechanism for internet monetization, and play a central role in allocating items in various markets, such as internet advertising, sponsored search, and content recommendation~\citep{edelman2007internet,varian2007position}. In practice, it is increasingly common for auction participants to adopt bidding algorithms, software, or automated bidding agents, which optimize the bids in a sequence of auctions on behalf of the participants under their long-term objectives or constraints, such as budgets or return-on-investment (ROI) targets~\citep{aggarwal2019autobidding,balseiro2021landscape,aggarwal2024auto}.

Online bidding algorithms are well understood in the literature 
from the standpoint of a single agent, where the competing bidders are treated as a part of an exogenous environment. 
In this line of research, a single bidder faces an online learning problem where the competition environment is generated by nature, 
and the algorithm's performance is measured by its \textit{regret}, i.e., the gap to the best outcome the bidder could have obtained in hindsight. 
In the stationary paradigm where items and competing bids are generated in an ideal \textit{i.i.d.}\ manner, 
simple and efficient algorithms are known to achieve regret guarantees that are sublinear in horizon length, for various constraint types and auction formats~\citep{badanidiyuru2018bandits,balseiro2019learning,balseiro2023best,wang2023learning,feng2023online,castiglioni2023online}.
In the worst-case paradigm with adversarial arrivals, there are also results on constant-factor approximation~\citep{balseiro2019learning,castiglioni2022online,balseiro2023best}, despite the fact that any bidding algorithm must suffer from linear regret in general.
In both input paradigms, the competitors' behavior is simulated by the environment in an exogenous manner, and does not respond to the bidder's algorithmic decisions.

In modern internet advertising, a large and growing fraction of buyers delegate their bidding to automated agents, and many of these agents are provided by the auctioneer platform itself~\citep{aggarwal2019autobidding,aggarwal2024auto}. As a result, the competition faced by each bidder is simultaneously affected by the learning algorithms other bidders are using, which adapt to the past auction outcomes and, in turn, depend on the bidder's own decisions.
This multi-agent environment cannot be captured by either the stationary or the adversarial paradigm, and is far less understood. It is unknown what types of guarantees the platform could still promise to the individual participants when offering the bidding agents or software as part of their service, and if centralized coordination between the agents could improve these guarantees.

In this paper, we focus on a scenario with $T$ sequential second-price auctions and $n$ bidders, each with budget or ROI constraints. Here, one dominant practice is to scale the bids by pacing multipliers associated with the constraints, and adjust the scale over time~\citep{balseiro2019learning,conitzer2022multiplicative,balseiro2023best,balseiro2023field,zhang2016feedback,yang2019bid,tashman2020dynamic}.
However, it is known to be computationally difficult, specifically \ppad-hard, to compute a pacing equilibrium where agents use fixed multipliers to scale their bids~\citep{chen2024complexity,chen2025constant,liu2026tight,li2024vulnerabilities}, so it is unlikely for efficient dynamics to converge to such an equilibrium. On the other hand, it is unknown whether non-convergence actually harms each individual bidder, i.e., whether or not the computational hardness of pacing equilibria alone rules out the possibility of achieving \textit{no-regret} guarantees under joint multiplier learning. Indeed, in standard normal-form games, strong no-regret guarantees are known when players simultaneously deploy learning algorithms, despite the \ppad-hardness of computing a Nash equilibrium~\citep{chen2009settling,daskalakis2009complexity,daskalakis2011near,rakhlin2013optimization,syrgkanis2015fast,bailey2019fast,chen2020hedging,daskalakis2021near,anagnostides2022uncoupled,farina2022near}. It is natural to ask if similar results can be found in the online auction setting, and this leads to our central question: 
\begin{quote}
    \textit{Does there exist a pacing algorithm which leads to sub-linear regret when simultaneously deployed by every agent?}
\end{quote}

\subsection{Our Contributions}

\paragraph{High variance implies high regret.}
We first establish the relationship between each bidder's regret and the extent of instability of their own pacing multipliers over time. We show that when an individual bidder's payoff satisfies a smoothness condition, their regret can be lower-bounded by a \textit{variance} term of their own pacing multiplier sequence (\Cref{thm: strongly-convex-regret-decomposition}), plus a slack term and negligible errors. Therefore, high-variance behavior of a bidder's pacing strategy, e.g. cycling and oscillation, implies high regret. This result naturally extends to general online resource allocation systems with multiple agents running primal-dual algorithms. 

Intuitively, we show that non-convergence is indeed costly for individual bidders. This suggests that platforms can often monitor the fluctuation of pacing multipliers as a simple and directly observable indicator of the regret. If the platform aims to achieve low regret for all bidders, the multi-agent system as a whole must behave in a stabilized manner. 

\paragraph{Computational impossibility of joint no-regret.}

Building on the connection between high multiplier variance and high regret, we then establish our main computational hardness result. We show that, unless \ppad-hard problems can be solved by polynomial-time randomized algorithms, it is impossible to guarantee sublinear regret for all bidders while satisfying their budget or ROI constraints using efficient dynamics.

\begin{informal}{thm: fixed-polynomial-regret,thm: fixed-polynomial-ROI-regret}
Unless \ppad-hard problems can be solved by polynomial-time randomized algorithms, no polynomial-time multi-agent bidding dynamic can guarantee $O(n^{-\varepsilon}T)$ regret and constraint violation for every bidder, for any fixed $\varepsilon>0$.
\end{informal}

In other words, the regret or constraint violation of some bidder must grow linearly in the number of auctions $T$, which is asymptotically no better than a single bidder facing an adversarially chosen environment.\footnote{The result applies to any horizon length $T$ that is polynomial in $n$. With exponentially many auctions, the bidders could afford to compute an equilibrium by brute force; see the discussion in \Cref{subsection: budget-1}.} Hence, the sublinear regret guarantees known for a single bidder in a stationary \textit{i.i.d.} environment cannot be extended to the competitive multi-agent setting in general. Moreover, since $\varepsilon$ can be arbitrarily small, we cannot expect the regret to decrease substantially when the number of participants in the market becomes large.

From a practical perspective, our result shows that a platform offering autobidding services cannot, in general, promise every advertiser an outcome close to the best one they could have obtained in hindsight. Our results also hold for centralized dynamics, where the platform could potentially coordinate the pacing multipliers across bidders within a round. Theoretical guarantees for such pacing systems thus need to rely on additional market assumptions, or target alternative aggregate objectives.

\paragraph{Dynamic system evidence of high variance.}
In addition to our theoretical results, we provide numerical evidence of high variance on a simple and intuitive instance with only $n=3$ bidders, using the construction of \citet{paes2024complex}. We simulate both utility maximizers with budgets and value maximizers with ROI targets, under two families of widely used algorithms, namely PID controllers~\citep{balseiro2023field,tashman2020dynamic,yang2019bid,zhang2016feedback,smirnov2016online} and online dual mirror descent~\citep{balseiro2023best}. In all cases, the bid multipliers cycle, causing both the variance and the regret of every bidder to grow linearly over time. This is consistent with our observation in \Cref{thm: strongly-convex-regret-decomposition}.

\subsection{Related Work}
\paragraph{Computational complexity of pacing equilibria.}
Our computational hardness results build on a line of work on the complexity of pacing equilibria in second-price auctions. For budget-constrained, utility-maximizing bidders, \citet{conitzer2022multiplicative} introduce second-price pacing equilibrium (SPPE), a steady state in which every bidder scales their values by a multiplier that best responds to the other bidders under their budget constraint. They show that an SPPE always exists, while computing one that maximizes revenue or welfare is NP-hard. \citet{chen2024complexity} show that computing an SPPE is $\mathsf{PPAD}$-complete, and the inapproximability is later strengthened to any constant factor~\citep{chen2025constant,liu2026tight}. For ROI-constrained, value-maximizing bidders, \citet{li2024vulnerabilities} consider the analogous notion called autobidding equilibrium, and show its existence and $\mathsf{PPAD}$-hardness of approximation. In our online setting, their \ppad-hardness results imply that no efficient online algorithm can converge to SPPE (or autobidding equilibrium), but cannot rule out the possibility of converging to other low-regret states or non-convergence result with regret guarantees. Recent works also find exact algorithms for markets when there are only a few buyers or goods~\citep{yan2026pacing,huang2026pacing}. More broadly, the notion of pacing equilibrium is also studied in throttling~\citep{chen2021throttling,liu2026tight}, first-price auctions~\citep{conitzer2022pacing,yang2024online}, and more general mechanisms~\citep{afiouni2026pacing}. 

\paragraph{Dynamical systems for autobidding.}
Recent works study automated bidding systems as continuous-time dynamical systems. \citet{paes2024complex} consider ROI-constrained value maximizers whose multipliers follow PID-controller dynamics, and show that stable cycles can occur with three or more bidders. We extend their findings to the discrete-time setting for dual mirror descent algorithms, and to budget-constrained bidders. We also show in our experiments that the cycles indeed incur high regret for every bidder under both constraint types. \citet{anagnostides2026chaos} further show in the ROI setting that the same dynamics from \citep{paes2024complex} can simulate a broad class of nonlinear systems, including chaotic ones, and that their discrete-time analogues can indeed be chaotic. Although these works do not establish theoretical results for all online algorithms, they provide evidence of high variance in common multi-agent bidding systems, which by our \Cref{thm: strongly-convex-regret-decomposition} translate into high regret.

\paragraph{Learning in games.}
In repeated games without constraints, no-regret guarantees are known in the multi-agent learning setting. When all players run optimistic learning dynamics, the joint play changes slowly and becomes predictable, so each player's regret can be much smaller than in the adversarial case. Such guarantees were first established for two-player zero-sum games~\citep{daskalakis2011near,rakhlin2013optimization}, and later extended to general-sum games~\citep{syrgkanis2015fast,chen2020hedging,daskalakis2021near,soleymani2025faster} and convex games~\citep{farina2022near}, as well as to the stronger notion of swap regret~\citep{anagnostides2022near,anagnostides2022uncoupled}. In auctions without constraints, \citet{feng2021convergence} show that mean-based no-regret bidders converge to the equilibrium in repeated second-price and first-price auctions under certain conditions, while \citet{kolumbus2022auctions} and \citet{deng2022nash} show that convergence may fail without these conditions. 
Different from these works, in our setting the players have long-term constraints on their actions, which are budgets or ROI targets. With these constraints, no algorithm can guarantee sublinear regret against adversarial competing bids~\citep{balseiro2019learning,immorlica2022adversarial,balseiro2023best}. Hence, unlike in unconstrained games, joint no-regret cannot follow from individual worst-case guarantees.

\paragraph{Positive results on multi-agent bidding.}
When all bidders adopt pacing algorithms simultaneously, positive results are known under additional market structure assumptions or for objectives other than individual regret. \citet{balseiro2019learning} consider budget-constrained bidders whose valuations are independent, and show that adaptive pacing can converge under a strong monotonicity assumption, a joint condition across all bidders that can fail in many instances. They also show that adaptive pacing constitutes an approximate Nash equilibrium in large markets. Without requiring convergence, \citet{gaitonde2022budget} show that if all bidders use gradient-based pacing, the liquid welfare, a group-level welfare metric, is at least half of the optimum for a broad class of auctions. Our impossibility result directly answers the question left open by \citet{gaitonde2022budget} on whether all agents can achieve vanishing regret at once. For individual bidders, \citet{gaitonde2022budget} also measure a different notion of regret against an alternative benchmark, defined as the sequence of multipliers under which the expected spending in each round equals the per-round budget, which need not be optimal in hindsight. 
\citet{lucier2023autobidders} extend the liquid welfare guarantee to bidders with both budget and ROI constraints. \citet{fikioris2023liquid} obtain liquid welfare guarantees for budget-constrained bidders in first-price auctions, and \citet{feng2024strategic} bound the liquid welfare when advertisers strategically choose the budgets of their autobidders. For a different objective, \citet{fikioris2026robust} design a budget-pacing algorithm under which, when adopted by all bidders, every bidder wins a number of auctions proportional to their budget. 

\section{Preliminaries}
\label{section: preliminaries}

\paragraph{Multi-agent bidding in repeated auctions.}

We consider the market where $n$ bidders compete for $T$ sequential items via repeated second-price auctions. 
At each round $t\in[T]$, an item arrives, and each bidder $i\in[n]$ observes their private value $v_{i, t}\geq 0$ for the current item, and submits a bid $b_{i,t}\geq 0$. We let $\vec v_t =(v_{1,t}, \cdots, v_{n,t})$ denote the value vector of agents, and assume that all values are bounded by a constant $\vmax >0$. The item is then allocated to the highest bidder; let $x_{i,t}$ be the allocation bidder $i$ receives. In the case of a tie, the item can be divided arbitrarily across the top bidders.\footnote{Our impossibility results will always hold regardless of how the decision maker breaks ties.} Let $h_t$ and $p_t$ be the highest and second highest among bids $(b_{i,t})$. In round $t$, the bidder gains value $u_{i,t} = x_{i,t}v_{i,t}$ and pays $c_{i,t} = x_{i,t} p_t$. 

The bidders may seek to maximize different objectives under different constraints on their total expenditure. We consider the setting where all bidders belong to one of the following two well-studied types:
\begin{itemize}
    \item \textit{Utility maximizers with budgets.} Each bidder $i$ aims to maximize their quasi-linear utility gain $\sumt (u_{i,t} - c_{i,t}) $, subject to a budget constraint $\sumt c_{i,t} \leq T \rho_i $, which bounds the total amount of money spent in the auctions; $\rho_i$ is bidder $i$'s per-round budget. 
    \item \textit{Value maximizers with return-on-investment (ROI) targets.} Each bidder $i$ aims to maximize their total value derived from the items, $\sumt u_{i,t}$. Each bidder $i$ has an ROI target $r_i>0$, which requires the total value gained to be at least $r_i$ times the total expenditure, i.e., $r_i \sumt c_{i,t} \leq \sumt u_{i,t}$. 
\end{itemize}

We consider the stochastic setting where value vectors across rounds are independent and identically distributed (\textit{i.i.d.}), while allowing correlations across bidders within a round; see \Cref{alg: multi-agent-pacing} for details. We let $F$ be the distribution of each item value vector $\vec v_t$, and denote the entire sequence of item arrivals as $\bm v$. 

\paragraph{Bid-scaling.}
We restrict the bidders' strategy to \textit{pacing}, where bidder $i$ commits to a \textit{pacing multiplier} $\alpha_{i,t}$ at the beginning of round $t$ \textit{before} observing the private value, and bids $b_{i,t} = \alpha_{i,t} v_{i,t}$, i.e., scales their value with the multiplier $\alpha_{i,t}$. Notably, we also allow $\alpha_{i,t}$ to be chosen in a centralized manner before the valuations are revealed. We let $\vec \alpha_t = (\alpha_{1,t}, \cdots, \alpha_{n,t})$ be the collection of pacing multipliers in the $t$'th auction, and let $\vec \alpha_{-i, t}$ be the multipliers of bidders $j\neq i$. For a general value vector $\vec v$ and pacing multiplier profile $\vec \alpha$, we write $h(\vec v; \vec \alpha)$ and $p(\vec v; \vec \alpha)$ for the highest and second-highest bids among $(\alpha_i v_i)_{i\in [n]}$, respectively. 

Pacing is a dominant practice for bidding with budgets or ROI targets~\citep{balseiro2023field,aggarwal2024auto}, and is optimal up to an additive error of $\vmax$, i.e., a single item. For utility maximizers with budgets, we denote $\alphalower_i = \rho_i/\vmax$ and assume without loss of generality that all pacing multipliers from bidder $i$ are in the range $[{\alphalower}_i, 1]$, which always contains a hindsight optimal pacing multiplier. Similarly, for value maximizers with ROI constraints, we assume without loss of generality that all pacing multipliers from bidder $i$ are in the range $(\alphalower_i, +\infty)$ for $\alphalower_{i}  = 1/r_i$.

\begin{setting}[t]
\caption{Multi-Agent Bidding in Repeated Second-Price Auctions}
\label{alg: multi-agent-pacing}
\KwParams{Number of bidders $n$, total time periods $T$, per-round budgets $(\rho_i)_{i\in[n]}$ or ROI targets $(r_i)_{i\in[n]}$, value distribution $F$.}
\For{$t=1,\ldots,T$}{
    Each bidder $i\in [n]$ chooses a pacing multiplier $\alpha_{i, t}$. 

    The $t$'th value vector is generated independently, $\vec v_t = (v_{1,t}, \cdots, v_{n,t}) \sim F$. 

    Each bidder $i\in [n]$ submits the bid $b_{i, t} = \alpha_{i, t} v_{i, t}$.  

    Let $h_t$ and $p_t$ be the highest and second-highest bids. 

    Choose allocation $(x_{i,t})_{i\in[n]}$ such that a highest bidder gets the item (arbitrary tie-breaking).

    Each bidder $i$ receives allocation $x_{i,t}$, gains value $u_{i,t} = v_{i,t} x_{i,t}$, and pays $c_{i,t}= p_t  x_{i,t}$.
}
\end{setting}

\paragraph{An individual bidder's regret.} 

We characterize a single bidder's optimal reward in hindsight, when all item values and the competing bids in all rounds are fixed. Let the highest competing bid faced by bidder $i$ be $d_{i,t} = \max_{j\neq i} \alpha_{j,t }v_{j,t}$ in round $t$. Notice that conditional on the auction winner, the utility gain and the payment no longer depend on $i$'s own bid. Thus the hindsight optimization program can be simplified to a knapsack-type problem with binary decisions on whether to win each item or not. For a fixed arrival instance $\bm v$ and competing pacing multiplier sequence $\bm \alpha_{-i} = (\vec \alpha_{-i,t})_{t\in [T]}$, in the case with utility maximizers and budget constraints, the hindsight benchmark for bidder $i$ is given by
\begin{equation}
    \opt_i ( \bm v; \bm \alpha_{-i}) =  \max_{x_{i,t} \in \{0, 1\}} \left\{ \sumt (v_{i,t} -d_{i,t}) x_{i,t} : \sumt x_{i,t}  d_{i,t}\leq T\rho_i\right\}.
    \label{eq: primal}
\end{equation}
For value maximizers with ROI constraints, the hindsight optimum is given by
\begin{equation}
    \opt_i ( \bm v; \bm \alpha_{-i}) =  \max_{x_{i,t} \in \{0, 1\}} \left\{ \sumt v_{i,t} x_{i,t} : r_i\cdot \sumt x_{i,t}  d_{i,t}\leq \sumt x_{i,t} v_{i,t}\right\}.
    \label{eq: primal-roi}
\end{equation}

For notational simplicity we will denote the objective in \eqref{eq: primal} and \eqref{eq: primal-roi} as $\sumt f_{i,t}(x_{i,t})$, where $f_{i,t}(x_{i,t}) = (v_{i,t} - d_{i,t}) x_{i,t}$ for utility maximizers, and $f_{i,t}(x_{i,t}) = v_{i,t} x_{i,t}$ for value maximizers. An individual bidder's \textit{regret} is defined as the additive gap between the hindsight optimal objective and the actual utility (or value) derived from the auctions.  

\begin{definition}[Regret]
    For a fixed arrival instance $\bm v$ and a multi-agent bidding dynamic $\mathcal{A}$ for the stochastic setting, let $(\vec \alpha_t)_{t\in [T]}$ and $(x_{i,t})_{i\in[n], t\in [T]}$ be the pacing multipliers and allocations. Then the \textit{regret} of bidder $i\in[n]$ is defined as 
     $   \regret_{i, \mathcal{A}}(\bm v) = \opt_i(\bm v; \bm \alpha_{-i})  - \sumt f_{i,t}(x_{i,t}).
    $
\end{definition}

\paragraph{The fluid setting.} 

\begin{setting}[t]
\caption{Multi-Agent Bidding in Repeated Fluid-Item Second-Price Auctions}
\label{alg: multi-agent-pacing-fluid}
\KwParams{Number of bidders $n$, total time periods $T$, per-round budgets $(\rho_i)_{i\in[n]}$ or ROI targets $(r_i)_{i\in[n]}$, value distribution $F$.}
\For{$t=1,\ldots,T$}{
    Each bidder $i\in [n]$ chooses a pacing multiplier $\alpha_{i, t}$. 
    
    \For{\textnormal{each value instance $\vec v = (v_1, \cdots, v_n) \in \supp(F)$}
    }{
        Let $h( \vec v; \vec \alpha_t)$ and $p(\vec v; \vec \alpha_t)$ be the highest and second-highest bids among $(\alpha_{i,t} v_i)_{i\in [n]}$. \\
        Choose allocation $(x_{i,t}(\vec v))_{i\in [n]}$ such that a highest bidder gets the item (arbitrary tie-breaking). 
    }
    
    Each bidder $i$ gains value $u_{i, t}^F = \int v_{i}  x_{i, t}(\vec v) \dd F(\vec v)$ and pays $ c_{i, t}^F = \int p(\vec v; \vec \alpha_t) x_{i, t}(\vec v) \dd F(\vec v)$.
}
\end{setting}

A closely related setting is where auctions repeat on $T$ copies of a \textit{fluid} item, where the density (or mass) of each value vector corresponds to the density (or mass) in the distribution $F$; see \Cref{alg: multi-agent-pacing-fluid} for details. In each fluid-item auction, each bidder submits a bid $b_{i,t}(\vec v)$ for every value point $\vec v = (v_1, \cdots, v_n)\in \supp(F)$. Here, we also restrict bidders' strategy to pacing, so in auction $t$ bidder $i$ chooses a pacing multiplier $\alpha_{i,t}$ and bids $b_{i,t}(\vec v) = \alpha_{i,t} v_i$ uniformly over the fluid.  Allocation $x_{i,t}(\vec v)$ and price are determined point by point based on the highest bid $h(\vec v; \vec \alpha_t)$ and second-highest bid $p (\vec v; \vec \alpha_t)$. Each agent gains value $u_{i,t}^F = \int v_ix_{i,t}(\vec v) \dd F(\vec v)$, and pays $c_{i,t}^F = \int p(\vec v; \vec \alpha_t) x_{i,t}(\vec v) \dd F(\vec v)$. 

We also define an individual bidder's hindsight benchmark and regret in the fluid setting, given other bidders' pacing multipliers $\bm \alpha_{-i}=(\vec\alpha_{-i,t})_{t\in[T]}$. For a utility maximizer with a budget constraint, 
\begin{equation}
    \begin{aligned}
       \opt_i^F(\bm \alpha_{-i})=\max_{x_{i,t}(\vec v)\in\{0,1\}}\quad
       &\sumt\int x_{i,t}(\vec v)
       \left(v_i-\max_{j\neq i}\alpha_{j,t}v_j\right)\dd F(\vec v) \\
       \textnormal{s.t.}\quad
       &\sumt\int x_{i,t}(\vec v)
       \left(\max_{j\neq i}\alpha_{j,t}v_j\right)\dd F(\vec v)
       \leq T\rho_i.
    \end{aligned}
    \label{eq: fluid-hindsight-budget}
\end{equation}
For a value maximizer with an ROI constraint,
\begin{equation}
    \begin{aligned}
       \opt_i^F(\bm \alpha_{-i}) =  \max_{x_{t,i}(\vec v) \in \{0, 1\}} \quad & \sumt \int x_{t,i}(\vec v)  \cdot v_{i} \   \dd F(\vec v) \\
        \textnormal{s.t.} \quad & r_i \sumt \int x_{i,t}(\vec v)  \left(\max_{j\neq i} \alpha_{j, t} v_{j}\right) \dd F(\vec v) \leq \sumt  \int x_{i,t}(\vec v) \cdot v_i \ \dd F(\vec v) .
    \end{aligned}
    \label{eq: fluid-hindsight-roi}
\end{equation}
For simplicity, let $f_{i,t}^F (x_{t,i}(\vec v))$ be the per-round objective of bidder $i$. We also define a bidder's regret in the fluid setting. 

\begin{definition}[Fluid regret]
For a multi-agent bidding dynamic $\mathcal{A}$ for the fluid setting, let $\bm \alpha$ and $(x_{i,t}(\vec v))_{\vec v\in \supp(F), i\in [n], t\in[T]}$ be the pacing multipliers and allocations. Then the \textit{fluid regret} of bidder $i\in[n]$ is defined as
    $  \regret_{i, \mathcal{A}}^F = \opt_i^F(\bm \alpha_{-i})  - \sumt f^F_{i,t}(x_{i,t}(\vec v)).$
\end{definition}

\section{A General Regret Lower Bound}
\label{section: regret-lower-bound}

We establish a general deterministic lower-bound decomposition of each individual bidder's regret. We frame the individual bidder's hindsight benchmark in a general online optimization framework: 
\begin{equation}
\label{eq: hindsight-resource-allocation}
        \mathrm{OPT}
        = \max_{x_t\in\mathcal X}\left\{
         \sumt f_t(x_t) 
        :
         \sumt a_t(x_t)\leq T  \rho\right\}, 
\end{equation}
where $\mathcal{X}$ is a general decision space, $(f_t, a_t)$ are reward and cost functions that are revealed sequentially, and $\rho\geq 0$ is a general per-round constraint capacity. We drop all bidder subscripts $i$ for simplicity. Aligned with our auction setting, we assume the program is feasible and $0\leq f_t(\cdot )\leq \vmax$. The dual program of \eqref{eq: hindsight-resource-allocation} is as follows,
\begin{equation}
\label{eq: resource-allocation-dual}
    D^* = \inf_{\mu \geq 0} \sumt D_t(\mu), \quad \textnormal{where} \quad D_t(\mu) = \rho \mu + f_t^*(\mu),
\end{equation}
where $f_t^*(\mu)=\max_{x\in\mathcal X}
\{f_t(x)-\mu a_t(x)\}$. When the reward and cost spaces are compact, the duality gap $\Delta = D^* - \mathrm{OPT}$ is at most $\vmax$. 

\begin{restatable}[Bounded duality gap]{proposition}{dualitygapbound}
\label{prop: one-item-duality-gap}
When $\{(f_t(x), a_t(x)):x\in \mathcal{X}\}$ is compact for all $t \in[T]$, the duality gap satisfies $ 0 \leq \Delta \leq \vmax$.
\end{restatable}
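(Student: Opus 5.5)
The plan has two parts: weak duality for the lower bound, and a Shapley--Folkman (or Carathéodory-type) convexification for the upper bound.

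\emph{Lower bound.} For any feasible $(x_t)$ and any $\mu\ge 0$ we have
\[
\sumt f_t(x_t)\le \sumt f_t(x_t)+\mu\Bigl(T\rho-\sumt a_t(x_t)\Bigr)\le \sumt D_t(\mu).
\]
Taking the supremum over feasible $(x_t)$ and the infimum over $\mu\ge 0$ gives $\opt\le D^*$, so $\Delta\ge 0$.

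\emph{Upper bound.} Set $S_t=\{(f_t(x),a_t(x)):x\in\mathcal X\}\subset\mathbb R^2$, which is compact. Then $\opt=\max\{y:(y,z)\in \sum_t S_t,\ z\le T\rho\}$. Consider the convexified problem
\[
\opt^{c}=\max\Bigl\{y:(y,z)\in \textstyle\sum_t \mathrm{conv}(S_t),\ z\le T\rho\Bigr\}.
\]
This is a concave program with a single linear constraint over a compact convex set, and it is feasible because $\opt$ is. Its Lagrangian dual is exactly $D^*$, since $\max_{(y,z)\in\mathrm{conv}(S_t)}\{y-\mu z\}=f_t^*(\mu)$ by linearity. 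Strong duality for this one-constraint convex problem then gives $\opt^c=D^*$. The subtle point is that the constraint qualification may fail when $\rho$ is at the lower boundary of achievable cost. Without Slater's condition I would argue directly: by compactness the value $\opt^c(\rho')$ is concave and upper semicontinuous in $\rho'$, and perturbation/minimax theory for a compact convex set gives $\inf_{\mu\ge0}\sup$ of the Lagrangian equal to the $\sup\inf$ (Sion's theorem applied with the compact convex primal set).

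\emph{Rounding.} Take an optimal point $(y^*,z^*)=\sum_t (y_t,z_t)$ of the convexified problem with $(y_t,z_t)\in\mathrm{conv}(S_t)$. By the Shapley--Folkman lemma in $\mathbb R^2$, we can choose a representation in which all but at most $2$ indices have $(y_t,z_t)\in S_t$. I would sharpen this to at most one nonintegral index using the standard LP-vertex argument, which is the main obstacle. Write each $(y_t,z_t)$ as a convex combination of points of $S_t$ (Carathéodory), and consider the LP over the combination weights that maximizes $\sum y$ subject to the single cost constraint together with the $T$ simplex constraints. A basic optimal solution has at most $T+1$ nonzero weights, so at most one round is fractional, split between two points of $S_t$.

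For that round $t_0$, replace its point with the endpoint of lower cost $a$. The result is still feasible, since the cost does not increase, and the reward drops by at most $\vmax$ because $0\le f_t\le\vmax$. This yields a feasible point of the original problem with value at least $\opt^c-\vmax=D^*-\vmax$. Hence $\opt\ge D^*-\vmax$, i.e., $\Delta\le\vmax$. Compactness is used to attain the maxima and to make $\mathrm{conv}(S_t)$ compact, so that the optimum of the convexified problem exists.
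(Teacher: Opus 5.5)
Your proposal is correct and follows essentially the same route as the paper: weak duality gives $\Delta\geq 0$; Sion's minimax theorem on the relaxation over $\operatorname{conv} S_t$ identifies its value with $D^*$; the LP-vertex argument on the convex-combination weights (at most $T+1$ positive weights, hence at most one fractional round) plus rounding that round to its lower-cost point loses at most $\vmax$. The Shapley--Folkman remark is unnecessary, since the vertex argument alone already yields the single fractional round.
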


We derive the dual functions for our specific bidding setting in \Cref{app: general-proofs} and show that they indeed satisfy the above \Cref{prop: one-item-duality-gap}, so the duality gap can be essentially ignored.  

\paragraph{Primal-dual algorithms.}
We interpret pacing algorithms as primal-dual algorithms for online optimization. A primal-dual algorithm for optimizing \eqref{eq: hindsight-resource-allocation} is described in \Cref{alg: primal-dual}. The algorithm maintains a dynamic dual iterate $\mu_t$. At each step $t\in [T]$, the algorithm chooses an allocation $x_t$ that corresponds to the maximizer in $f_t^*(\mu)$ and then updates the dual iterate.
\begin{restatable}{proposition}{multiplierduals}
\label{prop: multipliers}
For each bidder $i$, bidding $b_{i,t}=\alpha_{i,t}v_{i,t}$ is equivalent to running \Cref{alg: primal-dual} with
\begin{equation*}
    \begin{cases}
        \mu_{i,t}=1/\alpha_{i,t}-1  & \textnormal{(budget constraint, utility maximizer)}\\
        \mu_{i, t} = 1/({\alpha_{i,t} r_{i} - 1}) & \textnormal{(ROI constraint, value maximizer)}
    \end{cases}.
\end{equation*}
\end{restatable}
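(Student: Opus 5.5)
The plan is to write out the per-round primal step of \Cref{alg: primal-dual} for each bidder type, using that bidder's hindsight program. We then check that maximizing $f_t(x)-\mu a_t(x)$ over $x\in\{0,1\}$ gives the same winning rule as submitting the paced bid $b_{i,t}=\alpha_{i,t}v_{i,t}$ against the highest competing bid $d_{i,t}$. Since the dual step only changes which $\mu_{i,t}$ is used in the next round, the equivalence reduces to a bijection between multipliers and duals. This bijection is the stated change of variables $\mu\leftrightarrow\alpha$, and it holds round by round. The multiplier sequence $\alpha_{i,t}$ can then be generated by any rule and translated into a dual sequence, and conversely.

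\textbf{Budget-constrained utility maximizer.} The program is \eqref{eq: primal}, with $f_t(x)=(v_{i,t}-d_{i,t})x$, $a_t(x)=d_{i,t}x$ and capacity $\rho_i$. The primal step chooses $x=1$ exactly when $v_{i,t}-d_{i,t}-\mu d_{i,t}\ge 0$, that is, when $v_{i,t}/(1+\mu)\ge d_{i,t}$. This is precisely the event that the bid $\alpha v_{i,t}$ with $\alpha=1/(1+\mu)$ is highest. Equivalently, $\mu=1/\alpha-1$. Under this map, $\alpha\in[\alphalower_i,1]$ corresponds to $\mu\in[0,\vmax/\rho_i-1]$, so $\mu\ge 0$ as the dual requires. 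Second-price payments equal $d_{i,t}$ exactly when the bidder wins, so the realized reward and cost coincide with $f_t(x_t)$ and $a_t(x_t)$.

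\textbf{ROI-constrained value maximizer.} The constraint in \eqref{eq: primal-roi} is not of the form $\sum a_t\le T\rho$ as written, so it must first be rewritten. We set $a_t(x)=(r_i d_{i,t}-v_{i,t})x$ with capacity $\rho=0$, and keep $f_t(x)=v_{i,t}x$. The primal step wins iff $v_{i,t}-\mu(r_id_{i,t}-v_{i,t})\ge 0$, i.e.\ iff $(1+\mu)v_{i,t}\ge \mu r_i d_{i,t}$. For $\mu>0$ this is $v_{i,t}(1+\mu)/(\mu r_i)\ge d_{i,t}$, so the bid corresponds to $\alpha=(1+\mu)/(\mu r_i)$. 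Solving gives $\alpha r_i\mu=1+\mu$, hence $\mu=1/(\alpha r_i-1)$. The map sends $\alpha\in(\alphalower_i,\infty)$ with $\alphalower_i=1/r_i$ bijectively onto $\mu\in(0,\infty)$. The boundary value $\mu=0$ corresponds to $\alpha=\infty$, which means always winning.

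\textbf{Main obstacle.} The delicate points are the boundary cases and ties. Ties occur when the primal step is indifferent, for example when $v_{i,t}=(1+\mu)d_{i,t}$. We handle them by letting the algorithm's maximizer selection match the auction's tie-breaking, which is allowed because the argmax may be chosen arbitrarily. We also have to check that restricting the multipliers to the stated ranges loses nothing, and we rely on the WLOG ranges stated in the preliminaries for this.
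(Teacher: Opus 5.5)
Your proposal is correct and follows essentially the same route as the paper. The paper substitutes the change of variables to write $f_t(x)-\mu_{i,t}a_t(x)$ as a positive multiple of $x(\alpha_{i,t}v_{i,t}-d_{i,t})$ and then invokes the bijection onto the dual domain; your winning-condition computation is the same calculation read in the other direction. One small point: the auction allows fractional splitting at ties, so the paper takes the maximizer over $x\in[0,1]$ rather than $\{0,1\}$. Your tie argument covers this once you note that at a tie the objective is identically zero in $x$, so every fractional split is also a maximizer.
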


\begin{algorithm}[t]
\caption{Primal-Dual Algorithm for Online Optimization}
\label{alg: primal-dual}
\KwIn{Total time periods $T$, average constraint capacity $\rho$, initial dual iterate $\mu_1$.}
\For{$t=1,\ldots,T$}{
    Choose allocation $x_t \gets \arg \max_{x \in \mathcal{X}_t} \left\{ f_t(x) - \mu_t a_t(x)\right\}.$

    Compute the next dual iterate $\mu_{t+1}$.
}
\end{algorithm}

It is useful to consider the dual iterates $(\mu_1, \cdots, \mu_T)$ as decisions for a dual-only online learning problem, with loss functions being $D_1, \cdots, D_t$. 
We define the \textit{dual regret} \textit{w.r.t.} $\mu \in \mathbb{R}_+$ as follows, 
\begin{equation*}
    R_D (\mu) = \sumt D_t(\mu_t) - \sumt D_t(\mu).
\end{equation*}

Notice that a subgradient of $D_t$ at $\mu_t$ is $(\rho - a_t(x_t))$. Accordingly, we can define the linearized dual loss as $\widetilde{D}(\mu) = (\rho - a_t(x_t)) \cdot \mu$. The \textit{linearized dual regret} is then given by
\begin{equation*}
    \widetilde R_D (\mu) = \sumt \widetilde D_t(\mu_t) - \sumt \widetilde D_t(\mu).
\end{equation*}

The convexity of the Lagrangian dual ensures that $D_t$ is convex, so the dual regret is always no larger than the linearized dual regret; see the following \Cref{prop: linearization-gap-nonnegativity}. 
\begin{restatable}{proposition}{linearizationgap}
For any $(f_t, a_t)_{t\in [T]}$ and  $\mu \geq 0$, we have $R_D (\mu) \leq \widetilde R_D (\mu)$.
\label{prop: linearization-gap-nonnegativity}
\end{restatable}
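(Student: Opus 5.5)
The inequality $R_D(\mu)\le \widetilde R_D(\mu)$ should follow from convexity of each $D_t$, applied round by round and then summed. Since both regrets are sums over $t$, it suffices to show for every $t$ and every $\mu\ge 0$ that
\begin{equation*}
D_t(\mu_t)-D_t(\mu)\;\le\;\widetilde D_t(\mu_t)-\widetilde D_t(\mu)\;=\;(\rho-a_t(x_t))(\mu_t-\mu),
\end{equation*}
which is the same as $D_t(\mu)\ge D_t(\mu_t)+(\rho-a_t(x_t))(\mu-\mu_t)$. In other words, I need $(\rho - a_t(x_t))$ to be a subgradient of $D_t$ at $\mu_t$, as the paper asserts.

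I would verify this subgradient inequality directly, without invoking general convex-analysis facts. The primal step of \Cref{alg: primal-dual} chooses $x_t$ to maximize $f_t(x)-\mu_t a_t(x)$, so
\begin{equation*}
D_t(\mu_t)=\rho\mu_t+f_t(x_t)-\mu_t a_t(x_t).
\end{equation*}
For any other $\mu\ge0$, $x_t$ is feasible in the maximization defining $f_t^*(\mu)$, so $f_t^*(\mu)\ge f_t(x_t)-\mu a_t(x_t)$. This gives
\begin{equation*}
D_t(\mu)\ge \rho\mu+f_t(x_t)-\mu a_t(x_t)=D_t(\mu_t)+(\rho-a_t(x_t))(\mu-\mu_t).
\end{equation*}
Rearranging yields the per-round inequality, and summing over $t\in[T]$ yields $R_D(\mu)\le\widetilde R_D(\mu)$.

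The only step needing care is that the maximizer $x_t$ exists and lies in the same decision set used to define $f_t^*$. Existence is guaranteed by the compactness assumed in \Cref{prop: one-item-duality-gap}; in the auction setting the decision set is $\{0,1\}$, so a maximizer trivially exists. Tie-breaking in the $\arg\max$ does not matter, because any maximizer gives the same value $D_t(\mu_t)$. Beyond this, I expect no real obstacle.
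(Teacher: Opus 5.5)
Your proposal is correct and matches the paper's proof: both evaluate the maximization defining $f_t^*(\mu)$ at the algorithm's choice $x_t$ to get $D_t(\mu)-D_t(\mu_t)\geq(\rho-a_t(x_t))(\mu-\mu_t)$ for each round, then rearrange and sum over $t\in[T]$. Your remark that the inequality holds for any maximizer $x_t$, whatever the tie-breaking, is also consistent with how the paper reuses this inequality later.
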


\paragraph{The main regret decomposition.}
We show that the primal regret of any primal-dual algorithm can be decomposed into three terms as shown in \Cref{lem: regret-to-dual-regret}: 1) the linearization gap of the dual regret, i.e., the non-negative gap between $\widetilde R_t$ and $R_t$ \textit{w.r.t.} the hindsight optimal dual, 2) the complementary slack, and 3) the duality gap. \Cref{lem: regret-to-dual-regret} is a deterministic decomposition that holds broadly for any primal-dual decision rules for general online resource allocation problems.   

\begin{restatable}{lemma}{regretdecomposition}
\label{lem: regret-to-dual-regret}
Let $\mu^*$ be the optimal solution to \eqref{eq: resource-allocation-dual}. Then for any primal-dual algorithm on any instance $(f_t, a_t)_{t\in[T]}$, we have
\begin{equation*}
    \mathrm{OPT}-\sumt f_t(x_t)
    =
    \underbrace{\widetilde R_D(\mu^*)-R_D(\mu^*)}_{\textnormal{regret linearization gap}}
    +\underbrace{ \mu^* \left(\rho T-\sumt a_t(x_t)\right)}_{\textnormal{complementary slack}}-\underbrace{\Delta}_{\textnormal{duality gap}}.
\end{equation*}
\end{restatable}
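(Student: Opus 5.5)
This identity is pure bookkeeping: expand both sides in terms of the per-round quantities and check that they agree. The one structural fact needed is that the primal allocation is the maximizer in the conjugate, so by the allocation rule of \Cref{alg: primal-dual}, $f_t^*(\mu_t)=f_t(x_t)-\mu_t a_t(x_t)$ for every $t$. Hence
\[
D_t(\mu_t)=\rho\mu_t+f_t(x_t)-\mu_t a_t(x_t)=f_t(x_t)+(\rho-a_t(x_t))\mu_t = f_t(x_t)+\widetilde D_t(\mu_t).
\]
Summing over $t$ gives $\sumt D_t(\mu_t)=\sumt f_t(x_t)+\sumt\widetilde D_t(\mu_t)$. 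This is the key relation, and the rest is substitution.

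Next I would write out the linearization gap at $\mu^*$:
\[
\widetilde R_D(\mu^*)-R_D(\mu^*) = \sumt\widetilde D_t(\mu_t)-\sumt\widetilde D_t(\mu^*) - \sumt D_t(\mu_t)+\sumt D_t(\mu^*).
\]
Using the key relation, the first and third sums combine into $-\sumt f_t(x_t)$. By the definition of $\mu^*$ as the minimizer in \eqref{eq: resource-allocation-dual}, $\sumt D_t(\mu^*)=D^*$. The remaining term is $\sumt\widetilde D_t(\mu^*)=\mu^*\bigl(\rho T-\sumt a_t(x_t)\bigr)$, which is exactly the complementary slack term. Together these give
\[
\widetilde R_D(\mu^*)-R_D(\mu^*) = D^*-\sumt f_t(x_t)-\mu^*\Bigl(\rho T-\sumt a_t(x_t)\Bigr).
\]

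Finally, substitute $D^*=\opt+\Delta$ and rearrange; this yields precisely the claimed decomposition.

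The only delicate point is that the infimum in \eqref{eq: resource-allocation-dual} must be attained, so that $\mu^*$ exists; the statement presupposes this. I would also note that the argmax in the primal step must be taken over the same set used to define $f_t^*$. Any tie-breaking in that argmax is harmless, since every maximizer attains the value $f_t^*(\mu_t)$. No inequality is used anywhere, so the identity holds deterministically for every instance.
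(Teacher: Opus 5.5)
Your proposal is correct and follows essentially the same route as the paper: the paper's proof also rests on the per-round identity $D_t(\mu_t)-\widetilde D_t(\mu_t)=f_t(x_t)$, obtained from the argmax allocation rule. It then substitutes $D^*=\sumt D_t(\mu^*)$, $\mathrm{OPT}=D^*-\Delta$, and $\sumt\widetilde D_t(\mu^*)=\mu^*\bigl(\rho T-\sumt a_t(x_t)\bigr)$ to regroup the terms, exactly as you do.
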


When the dual losses are strongly convex, the regret linearization gap can be further lower-bounded by the \textit{variance} of the dual; see the following \Cref{thm: strongly-convex-regret-decomposition}.

\begin{restatable}{theorem}{stronglyconvexdecomposition}
\label{thm: strongly-convex-regret-decomposition}
When the dual losses $D_t$ are $\sigma$-strongly convex on an interval containing $\mu_1,\ldots,\mu_T$ and $\mu^*$, it holds that
\begin{equation*}
    \mathrm{OPT}-\sumt f_t(x_t)
    \geq \underbrace{\frac{\sigma}{2}\sumt \left(\mu_t - \frac{1}{T}\sum_{\tau=1}^T \mu_\tau\right)^2}_{\textnormal{variance}}
    +\underbrace{\mu^* \left( \rho T - \sumt a_t(x_t) \right)}_{\textnormal{complementary slack}}
    -\underbrace{\Delta}_{\textnormal{duality gap}}.
\end{equation*}
\end{restatable}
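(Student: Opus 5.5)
We start from the exact decomposition of \Cref{lem: regret-to-dual-regret}, which writes the primal regret as the regret linearization gap $\widetilde R_D(\mu^*)-R_D(\mu^*)$, plus the complementary slack, minus the duality gap. The slack and duality gap terms already appear in the theorem unchanged. It therefore suffices to show
\begin{equation*}
\widetilde R_D(\mu^*)-R_D(\mu^*) \geq \frac{\sigma}{2}\sumt \left(\mu_t-\bar\mu\right)^2, \qquad \bar\mu=\frac{1}{T}\sum_{\tau=1}^T\mu_\tau .
\end{equation*}

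First we expand the linearization gap termwise. Let $g_t=\rho-a_t(x_t)$, which is a subgradient of $D_t$ at $\mu_t$ because $x_t$ is the maximizer in $f_t^*(\mu_t)$. Then
\begin{equation*}
\widetilde R_D(\mu^*)-R_D(\mu^*)=\sumt \Bigl[D_t(\mu^*)-D_t(\mu_t)-g_t(\mu^*-\mu_t)\Bigr].
\end{equation*}
Each summand is a Bregman-type gap. Since $D_t$ is $\sigma$-strongly convex on an interval containing both $\mu_t$ and $\mu^*$, the subgradient inequality in its strong form gives
\begin{equation*}
D_t(\mu^*)\geq D_t(\mu_t)+g_t(\mu^*-\mu_t)+\frac{\sigma}{2}(\mu^*-\mu_t)^2 .
\end{equation*}
This holds for any subgradient at $\mu_t$, not only for the gradient. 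Summing over $t$ yields the lower bound $\frac{\sigma}{2}\sumt(\mu_t-\mu^*)^2$ on the linearization gap.

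Finally, the mean minimizes the sum of squared deviations, so $\sumt(\mu_t-\mu^*)^2\geq \sumt(\mu_t-\bar\mu)^2$. The difference is exactly $T(\bar\mu-\mu^*)^2\geq 0$. Combining this with \Cref{lem: regret-to-dual-regret} gives the stated inequality.

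The only delicate point is the strong-convexity subgradient inequality for the nonsmooth, piecewise-defined $D_t$ when the subgradient used is the algorithmic one. We handle it by writing $D_t=\psi+\frac{\sigma}{2}\mu^2$ with $\psi$ convex on the interval. Then $g_t-\sigma\mu_t\in\partial\psi(\mu_t)$, and the ordinary convex subgradient inequality for $\psi$ rearranges to the strong form above. This also requires that the interval contain $\mu_t$ and $\mu^*$ and that $g_t$ is a valid subgradient there, which holds because $g_t$ is a subgradient of the global convex $D_t$ and $\mu_t$ lies inside the interval.
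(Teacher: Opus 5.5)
Your proposal is correct and follows essentially the same route as the paper: start from \Cref{lem: regret-to-dual-regret}, prove the strong-convexity subgradient inequality $D_t(\mu^*)\geq D_t(\mu_t)+(\rho-a_t(x_t))(\mu^*-\mu_t)+\frac{\sigma}{2}(\mu^*-\mu_t)^2$ for the algorithmic subgradient using convexity of $D_t-\frac{\sigma}{2}\mu^2$ on the interval, sum over $t$, and drop the nonnegative term $T(\bar\mu-\mu^*)^2$. The only difference is presentational: you assert $g_t-\sigma\mu_t\in\partial\psi(\mu_t)$, which holds because membership in the subdifferential of a convex function on an interval is determined by one-sided derivatives, whereas the paper proves this step explicitly with a difference quotient along $\mu_t+\lambda(\mu^*-\mu_t)$ and then lets $\lambda\to0$.
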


Since the dual variables can be mapped to the pacing multipliers $(\alpha_{i,t})_{t\in [T]}$, \Cref{thm: strongly-convex-regret-decomposition} directly relates the regret of a bidder to the variance in their sequence of decision variables. 
The strong convexity of the dual losses can be read as a smoothness condition on the primal problem. By convex duality, $D_t$ is $\sigma$-strongly convex when $\max_{x\in\mathcal X}\{f_t(x): a_t(x)\leq c\}$ is concave and $(1/\sigma)$-smooth in $c$, i.e., the bidder's reward is smooth in their spending. We will construct fluid settings where the dual is strongly convex for all $t\in [T]$. 

Since a constraint-feasible bidder always has a non-negative slack, by \Cref{thm: strongly-convex-regret-decomposition}, high variance in the pacing multiplier sequence will then imply high regret. This message will be useful in two ways. First, it enables our result on the computational difficulty of achieving low variance and slack, and thus joint no-regret. Meanwhile, we provide concrete numerical evidence where common bidding algorithms display cycling behavior under simple input constructions. Since the variance grows linearly over time as the state cycles, by \Cref{thm: strongly-convex-regret-decomposition} this gives further evidence that the regret can be large for a broad family of instances. 

\section{Impossibility for Utility Maximizers with Budgets}
\label{section: main-section-on-SPPE}

We first focus on the case where agents are utility maximizers with budgets. In \Cref{subsection: budget-1}, we give theoretical computational complexity results of achieving joint no-regret guarantees. In \Cref{subsection: budget-2}, we provide numerical evidence that common bidding algorithms can display high variance and incur high regret. 

\subsection{Computational Complexity of Joint No-Regret}
\label{subsection: budget-1}
We establish the main computational impossibility result: unless \ppad-hard problems can be solved by polynomial-time randomized algorithms,\footnote{Specifically, for any \ppad-hard problem, no polynomial-time randomized algorithm outputs a solution with probability at least $2/3$. This is the standard randomized version of the assumption $\mathsf{PPAD}\neq\mathsf{P}$~\citep{chen2009settling,foster2023hardness}. \ppad-hard problems are widely believed to be difficult and include problems such as computing Nash equilibrium~\citep{daskalakis2009complexity,chen2009settling}, for which no efficient algorithms are known.} in both stochastic and fluid settings, it is impossible that all bidders are no-regret and budget-feasible up to any $O(T/\mathrm{poly}(n))$ error; see \Cref{thm: fixed-polynomial-regret}.

\begin{restatable}{budgettheorem}{budgetimpossibility}
\label{thm: fixed-polynomial-regret}
Under \textit{i.i.d.}\ item arrival and utility-maximizing bidders with budget constraints, for any fixed $\varepsilon\in(0,1)$, $N_0\in\mathbb{N}$, polynomial-time multi-agent bidding dynamic $\mathcal A$,
and polynomial $q(n)\in\Omega(n)$, if \ppad-hard problems cannot be solved by polynomial-time randomized algorithms, then there exists a problem instance with $n\geq N_0$ and $T=\lceil q(n)\rceil$ such that, for at least one bidder $i\in[n]$,
\begin{equation*}
\E[\regret_{i,\mathcal A}(\bm v)]>\frac{T}{Cn^{\varepsilon}}
\quad\text{or}\quad
\E\left[\left(\sumt c_{i,t}-T\rho_i\right)_+\right]
>\frac{T}{Cn^{\varepsilon}},
\end{equation*}
where $C\geq 1$ is an absolute constant. The same statement holds for the fluid setting.
\end{restatable}

We remark on the dependence of $T$ on $n$ in \Cref{thm: fixed-polynomial-regret}, which applies for any sufficiently large horizon length $T$ that is polynomial in $n$. Therefore, it also rules out the possibility of $O(T^{1-\varepsilon})$ regret, i.e., any regret guarantee that diminishes at a polynomial rate. Our hardness result does not apply to the case where the number of auctions is exponentially large in $n$, since bidders could then use exponentially many rounds to agree on a pacing equilibrium~\citep{conitzer2022multiplicative} by brute force and commit to it afterwards. In practice, it is unlikely that markets scale exponentially with the number of participants. 

In the rest of this subsection, we will show the major steps of the hardness analysis in the proof of \Cref{thm: fixed-polynomial-regret} and highlight our techniques. The missing proofs in this section can be found in \Cref{app: SPPE-proofs}.

\paragraph{Strongly Convex Fluid Dual.}
\label{section: strongly-convex-dual-SPPE}
For utility-maximizing bidders with budgets, the fluid dual losses are given by
\begin{equation*}
    D_{i,t}^F(\mu;\vec\alpha_{-i,t})
=\rho_i\mu+
\int\left(v_i-(1+\mu)\max_{j\ne i}\alpha_{j,t}v_j\right)_+
\,\dd F(\vec v).
\end{equation*}
By \Cref{prop: multipliers}, the bounds on pacing multipliers $\alpha_{i,t}\in [\alphalower_i, 1]$ induce the dual variable bound $\mu_{i,t}\in[0, \muupper_i]$, where $\muupper_i=1/\alphalower_i-1=\vmax/\rho_i-1$.

To apply \Cref{thm: strongly-convex-regret-decomposition}, we first construct a family of value distributions under which the fluid dual losses are strongly convex for all bidders.
\begin{definition}
\label{def: F0-distribution}
We define $F_0$ as the distribution of the value vector $\vec v = (v_1, \cdots, v_n)$ generated as follows,
\begin{enumerate}
    \item Choose $\{j_1, j_2\}$ uniformly from all $\binom{n}{2}$ size-$2$ subsets of $[n]$.
    \item Generate $v_{j_1}, v_{j_2}$ independently from $\mathrm{Uniform}[0, \vmax]$.
    \item For $k\notin \{j_1, j_2\}$, set $v_{k} = 0$.
\end{enumerate}
\end{definition}

Intuitively, $F_0$ spreads values continuously, so shading a bid further always strictly reduces the bidder's expected payment, which gives the dual loss curvature everywhere.
\begin{restatable}{lemma}{strongconvexityFzero}
    \label{lem: strong-convexity-F0} 
    For any bidder $i\in [n]$ and any competing strategies $\vec \alpha_{-i,t}$, the function $D_{i, t}^{F_0}(\cdot; \vec \alpha_{-i, t})$ is $\sigma_0$-strongly convex on $[0, \bar\mu_i]$, with
    \begin{equation*}
        \sigma_0 = \frac{2\vmax}{3n (1+ \max_{j\in[n]} \bar \mu_j)^3}.
    \end{equation*}
\end{restatable}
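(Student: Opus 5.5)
The plan is to compute the second derivative of $D_{i,t}^{F_0}(\cdot;\vec\alpha_{-i,t})$ in $\mu$ explicitly and bound it below by $\sigma_0$, uniformly over $\mu\in[0,\bar\mu_i]$ and over all admissible competing multipliers $\alpha_{j,t}\in[\alphalower_j,1]$.

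Write $d=\max_{j\neq i}\alpha_{j,t}v_j$ and $g(\mu)=\E_{F_0}[(v_i-(1+\mu)d)_+]$, so that $D^{F_0}_{i,t}(\mu)=\rho_i\mu+g(\mu)$. The linear term does not affect convexity, so it suffices to show that $g$ is $\sigma_0$-strongly convex. Strong convexity follows if the derivative $g'$ is absolutely continuous with $g''\geq\sigma_0$ almost everywhere on $[0,\bar\mu_i]$.

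\textbf{Step 1: compute $g'$ and $g''$.} Differentiate under the integral (dominated convergence, since values are bounded) to get
\[
g'(\mu)=-\E\bigl[d\,\mathbb{I}[v_i>(1+\mu)d]\bigr].
\]
The set $\{v_i=(1+\mu)d\}$ has measure zero under $F_0$, so the choice of indicator at ties is irrelevant. Condition on the draw of $F_0$:
\begin{itemize}
\item With probability $1-2/n$, bidder $i$ is not in the pair, so $v_i=0$ and the term vanishes.
\item With probability $2/n$, bidder $i$ is in the pair with a partner $j$ chosen uniformly from the other bidders. Then all other values are zero, $d=\alpha_{j,t}v_j$, and $v_i\sim\mathrm{Uniform}[0,\vmax]$ independently of $v_j$.
\end{itemize}
In the second case, conditioning on $v_j$ gives
\[
\E\bigl[d\,\mathbb{I}[v_i>(1+\mu)d]\bigr]=\E\bigl[d\,(1-(1+\mu)d/\vmax)_+\bigr],
\]
which is differentiable in $\mu$ with derivative $-\E\bigl[d^2\,\mathbb{I}[(1+\mu)d\le\vmax]\bigr]/\vmax$. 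Hence
\[
g''(\mu)=\frac{2}{n}\cdot\frac{1}{n-1}\sum_{j\neq i}\frac{1}{\vmax}\,\E_{v_j}\!\left[\alpha_{j,t}^2v_j^2\,\mathbb{I}\!\left[v_j\le \tfrac{\vmax}{(1+\mu)\alpha_{j,t}}\right]\right].
\]

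\textbf{Step 2: lower-bound each summand.} Fix $j$ and write $\alpha=\alpha_{j,t}$ and $M=\max_k\bar\mu_k$. There are two cases.
\begin{itemize}
\item If $(1+\mu)\alpha\le1$, the indicator is always $1$. The summand equals $\alpha^2\vmax/3$. Since $\alpha\ge\alphalower_j=1/(1+\bar\mu_j)\ge 1/(1+M)$, this is at least $\vmax/(3(1+M)^2)$.
\item Otherwise the cutoff $a=\vmax/((1+\mu)\alpha)$ is below $\vmax$. The summand is
\[
\frac{1}{\vmax}\int_0^{a}\alpha^2v^2\,\frac{\dd v}{\vmax}=\frac{\vmax}{3(1+\mu)^3\alpha}.
\]
Using $\alpha\le1$ and $1+\mu\le1+\bar\mu_i\le 1+M$, this is at least $\vmax/(3(1+M)^3)$.
\end{itemize}
In both cases the summand is at least $\vmax/(3(1+M)^3)$.

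\textbf{Step 3: conclude.} Averaging over $j$ and multiplying by $2/n$ gives $g''(\mu)\ge 2\vmax/(3n(1+M)^3)=\sigma_0$ on $[0,\bar\mu_i]$, which proves the lemma.

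The main obstacle is rigor in Step 1: justifying the differentiation under the integral and the absolute continuity of $g'$. This is handled by the explicit formula for the conditional expectation, $d(1-(1+\mu)d/\vmax)_+$, which is Lipschitz in $\mu$ because $d\le\vmax$. Step 2 needs only the case split on whether the truncation is active.
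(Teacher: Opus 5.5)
Your proof is correct and follows essentially the same route as the paper: condition on bidder $i$ being in the selected pair, compute the curvature of the integral term of the dual loss, and lower-bound it using $\alpha_{j,t}\in[1/(1+\max_k\bar\mu_k),1]$. The only difference is cosmetic: the paper bounds $g'(\mu_2)-g'(\mu_1)$ directly by restricting the integration to $d\le\vmax/(1+\max_k\bar\mu_k)$, where the truncation is inactive, whereas you compute $g''$ exactly and split into cases; your per-partner term $\frac{\vmax}{3}\min\{\alpha^2,1/(\alpha(1+\mu)^3)\}$ matches the formula the paper itself derives later in its discretization lemma.
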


We will use $F_0$ to regularize a general atomic distribution to get a distribution $F$ with strongly convex fluid dual losses. Specifically, in our subsequent proof, we will consider $F = (1-\eta )F_1 +\eta F_0$ where $F_1$ is an atomic distribution and $\eta \in [0, 1]$. By a simple decomposition we know that $D_{i,t}^F$ will be $\eta\sigma_0$-strongly convex.

\paragraph{Low Regret Implies Approximate SPPE.}
\label{section: reduction-SPPE}

If all bidders achieve low fluid regret and are approximately budget-feasible, we show that one could then extract an \textit{offline} approximate SPPE from the \textit{online} auction history. The following \Cref{def: approximate-fluid-SPPE} extends the definition of SPPE to fluid auctions.

\begin{definition}
\label{def: approximate-fluid-SPPE}
For $\gamma\in[0,1)$, a $\gamma$-approximate fluid SPPE is a tuple $(\vec \alpha, (x_i(\vec v))_{i\in [n]})$, satisfying that
\begin{enumerate}
\item[(a)] Allocation to near highest bidders: $x_i(\vec v)>0 \implies \alpha_i v_i\geq(1-\gamma)h(\vec v;\vec\alpha)$ for all $i\in [n]$ and $\vec v \in [0, \vmax]^n$.
\item[(b)] Market clearance: $\sumi x_i(\vec v) \leq 1$ and $h(\vec v;\vec\alpha)>0 \implies \sumi x_i(\vec v)=1$ for all $\vec v \in [0, \vmax]^n$.
\item[(c)] Approximate budget feasibility: $\int x_i(\vec v)p(\vec v;\vec\alpha)\,\dd F(\vec v)
\leq(1+\gamma)\rho_i$ for all $i\in [n]$.
\item[(d)] Approximate no unnecessary pacing: $\int x_i(\vec v)p(\vec v;\vec\alpha)\,\dd F(\vec v)
<(1-\gamma)\rho_i \implies \alpha_i\geq1-\gamma$ for all $i\in [n]$.
\end{enumerate}
\end{definition}

By our regret decomposition in \Cref{thm: strongly-convex-regret-decomposition}, if all budget violations and regrets are small in $T$ repeated fluid auctions, 
then the following \eqref{eq: extraction-budget-hypothesis}--\eqref{eq: extraction-slack-hypothesis} must be controlled simultaneously,
\begin{align}
\text{budget overspent:}\quad   & \frac1T\sumt (c_{i,t}^F - \rho_i)\leq \delta_1 \cdot \rho_i, \ \forall i\in[n] \label{eq: extraction-budget-hypothesis} \\
\text{dual deviation:} \quad & \frac{1}{T} \sumi\sumt(\mu_{i,t}-\mu_i)^2\leq\delta_2,\label{eq: extraction-variance-hypothesis}
\\
\text{budget slack:} \quad & \mu_i\left(\rho_i-\frac1T\sumt c_{i,t}^F\right) 
\leq\delta_3,\ \forall i\in[n]  \label{eq: extraction-slack-hypothesis}
\end{align}

The following \Cref{lem: low-regret-implies-approximate-SPPE} shows that the above conditions---low budget violation, low variance, and small slack---will be sufficient for extracting an approximate fluid SPPE.
\begin{restatable}{lemma}{sppeextraction}
\label{lem: low-regret-implies-approximate-SPPE}
\ifsppefullstatement
For any $\gamma\in(0,1)$ and $\delta_1, \delta_2, \delta_3 >0$, 
conditions \eqref{eq: extraction-budget-hypothesis}--\eqref{eq: extraction-slack-hypothesis}
imply that a $\gamma$-approximate fluid SPPE can be extracted
efficiently if $\delta_1, \delta_2, \delta_3$ satisfy the following, 
\begin{equation*}
\delta_1\leq\frac{\gamma^3}{36},\qquad
\delta_2\leq
\frac{\gamma^3}{9\left(1+\max_i\frac{\int v_i\,\dd F(\vec v)}{\rho_i}\right)},
\qquad
\delta_3\leq\frac{\gamma^3\min_i\rho_i}{36}.
\end{equation*}
\else
For any $\gamma\in(0,1)$, there exist 
$\delta_1,\delta_2,\delta_3>0$ which depend only on $\gamma$, $F$, and
$(\rho_i)$, such that
\eqref{eq: extraction-budget-hypothesis}--\eqref{eq: extraction-slack-hypothesis}
imply that a $\gamma$-approximate fluid SPPE can be extracted efficiently.
\fi
\end{restatable}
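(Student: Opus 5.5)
We extract the candidate equilibrium by taking the reference duals $\mu_i$ from \eqref{eq: extraction-variance-hypothesis} and setting $\alpha_i = 1/(1+\mu_i)$, clipped to $[\alphalower_i,1]$. The averaged allocation is $x_i(\vec v) = \frac1T\sumt x_{i,t}(\vec v)$, possibly after a correction step. Without loss of generality, $\mu_i$ can be taken as the empirical mean $\frac1T\sumt\mu_{i,t}$, since the mean minimizes the sum of squared deviations. Extraction is clearly efficient, given the history and a polynomial-time spending oracle for $F$. The proof then checks properties (a)--(d) of \Cref{def: approximate-fluid-SPPE} in turn.

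\textbf{Step 1: good rounds.} Call round $t$ \emph{good} if every bidder satisfies $|\mu_{i,t}-\mu_i|\le \gamma$ (up to a constant factor). By \eqref{eq: extraction-variance-hypothesis} and Markov's inequality, the fraction of bad rounds is at most $\delta_2/\gamma^2$, which is at most about $\gamma/9$ divided by a normalizing factor. In a good round, $\alpha_{i,t}\in[(1-\gamma/3)\alpha_i,(1+\gamma/3)\alpha_i]$ for every $i$, because $\alpha=1/(1+\mu)$ is $1$-Lipschitz in $\mu$ relative to $1+\mu\ge1$. Hence a bidder who wins $\vec v$ at $\vec\alpha_t$ satisfies $\alpha_i v_i\ge(1-\gamma)h(\vec v;\vec\alpha)$. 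Prices also move by at most a $(1\pm\gamma/3)$ factor. We define $x_i$ as the average over good rounds only. This gives (a) and (b) directly: in good rounds the allocation is a valid split among near-highest bidders, and a zero highest bid at $\vec\alpha$ is equivalent to a zero highest bid at $\vec\alpha_t$ since all $\alpha>0$.

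\textbf{Step 2: spending.} The payment in the extracted profile is $\int x_i p(\cdot;\vec\alpha)\,\dd F$. It is within a factor $(1\pm\gamma/3)$ of the good-round average of $c^F_{i,t}$. That average differs from $\frac1T\sumt c^F_{i,t}$ by at most (bad fraction)$\times\max_t c^F_{i,t}/(1-\text{bad fraction})$. Here $c^F_{i,t}\le\int v_i\,\dd F$, which is exactly why $\delta_2$ is normalized by $1+\max_i\int v_i\,\dd F/\rho_i$: it makes this error at most $O(\gamma)\rho_i$. Combining this with \eqref{eq: extraction-budget-hypothesis} yields (c).

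For (d), suppose the extracted spending is below $(1-\gamma)\rho_i$. Then the time-average spending is below roughly $(1-\gamma/2)\rho_i$, so \eqref{eq: extraction-slack-hypothesis} gives $\mu_i\cdot\gamma\rho_i/2\le\delta_3$. It follows that $\mu_i\le 2\delta_3/(\gamma\min\rho)\le\gamma^2/18$, and therefore $\alpha_i=1/(1+\mu_i)\ge1-\gamma$.

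\textbf{Main obstacle.} The delicate step is Step 2: reconciling spending across the three objects (the actual dynamics, the good-round averages, and the static profile $\vec\alpha$) while keeping all constants cubic in $\gamma$. Two issues need care. First, discarding bad rounds must not move spending by more than $O(\gamma)\rho_i$, which requires renormalizing the good-round average. Second, prices evaluated at $\vec\alpha$ versus $\vec\alpha_t$ must be compared pointwise on each $\vec v$. I would handle this by proving a pointwise sandwich $p(\vec v;\vec\alpha_t)\in[(1-\gamma/3),(1+\gamma/3)^{-1}]^{-1}\cdot p(\vec v;\vec\alpha)$ for good $t$, then integrating. The specific constants $36$ and $9$ would be tuned at the end so that the accumulated errors, each of order $\gamma$, sum to below $\gamma$.
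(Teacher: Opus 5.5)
Your outline follows the paper's proof. You average the duals, use Markov on the dual deviation to isolate at most a $\delta_2/\iota^2$ fraction of bad rounds, and sandwich $h$ and $p$ multiplicatively in good rounds by monotonicity and homogeneity. You then charge each bad round at most $\int v_i\,\dd F$ and use the slack hypothesis for (d). Renormalizing over good rounds works as well as the paper's fill-in, where bad rounds go to an $\vec\alpha'$-highest bidder and the average is over all $T$ rounds. The reason is that the good-round average of $c^F_{i,t}$ differs from the full average by at most the bad fraction times $\max_t c^F_{i,t}$.

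\textbf{The step that fails.} The ``without loss of generality'' replacement of $\mu_i$ by $\bar\mu_i=\frac1T\sumt\mu_{i,t}$ is not valid as justified. The mean only improves \eqref{eq: extraction-variance-hypothesis}. The slack condition \eqref{eq: extraction-slack-hypothesis} is tied to the given $\mu_i$ and does not transfer, because $\bar\mu_i$ can exceed $\mu_i$ while bidder $i$ underspends. For instance, $\mu_i=0$, $\mu_{i,t}\equiv\sqrt{\delta_2}$, and spending $\rho_i/2$ satisfy both hypotheses. Yet $\bar\mu_i(\rho_i-\frac1T\sumt c^F_{i,t})=\sqrt{\delta_2}\rho_i/2$ can be far larger than $\delta_3$. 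In general you only get $\delta_3+\sqrt{\delta_2}\rho_i$ on the right-hand side. Feeding that into your (d) yields only $\bar\mu_i=O(\sqrt\gamma)$, which does not give $\alpha_i\geq1-\gamma$ for small $\gamma$.

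\textbf{The repair.} The paper applies \eqref{eq: extraction-slack-hypothesis} to the given $\mu_i$ and then transfers additively. The decomposition
$\frac1T\sum_i\sumt(\mu_{i,t}-\mu_i)^2=\frac1T\sum_i\sumt(\mu_{i,t}-\bar\mu_i)^2+\sum_i(\bar\mu_i-\mu_i)^2$
gives $|\bar\mu_i-\mu_i|\leq\sqrt{\delta_2}\leq\gamma^{3/2}/3$. So your bound $\mu_i=O(\gamma^2)$ becomes $\bar\mu_i<\gamma$, hence $\alpha_i=1/(1+\bar\mu_i)\geq1-\gamma$. The paper argues this contrapositively: $\alpha_i'<1-\gamma$ forces $\mu_i\geq\gamma/2$.

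Alternatively, build $\vec\alpha$ from the given $\mu_i$ directly, as the paper does in the ROI lemma. Your (d) is then literally correct, but the extraction needs $(\mu_i)$ as input and not just the history.

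\textbf{Two smaller points.}
\begin{itemize}
\item The definition quantifies over all $\vec v\in[0,\vmax]^n$, but the dynamic only allocates on $\supp(F)$. You must define $x_i(\vec v)$ off the support, e.g.\ by giving the item to an $\vec\alpha$-highest bidder.
\item In the full statement the constants $36$ and $9$ are fixed, so what you tune is the good-round threshold $\iota$, and it must be $\iota=\gamma/2$. With the threshold $\gamma/3$ implicit in your Step~1, the bad-round term $(\delta_2/\iota^2)\int v_i\,\dd F/\rho_i$ can approach $\gamma$, and your bound for (c) no longer closes. With $\iota=\gamma/2$ this term is at most $4\gamma/9$. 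In (d) you can then only guarantee a spending gap of about $\gamma\rho_i/18$, not $\gamma\rho_i/2$, but this still gives $\mu_i\leq\gamma^2/2$.
\end{itemize}
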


\label{section: fluid-SPPE-is-PPAD-hard}

We then show that, similar to the standard discrete SPPE, computing an approximate fluid SPPE is also \ppad-hard, even after applying our regularizer $F_0$. Intuitively, a hard discrete SPPE instance can be embedded as the atomic part of the distribution, and mixing in a small amount of $F_0$ barely changes any bidder's spending.

\begin{restatable}{lemma}{fluidhardness}
\label{lemma: fluid-hardness}
There exist constants $\eta, \gamma \in (0, 1)$ such that computing a $\gamma$-approximate 
fluid SPPE is \ppad-hard for the family of distributions $\{(1-\eta)F_1 + \eta F_0 \}$.
\end{restatable}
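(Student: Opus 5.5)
We prove \Cref{lemma: fluid-hardness} by a polynomial-time reduction from the known \ppad-hardness of constant-approximate (discrete) SPPE~\citep{chen2024complexity,chen2025constant,liu2026tight}. That result gives a constant $\gamma'>0$ such that computing a $\gamma'$-approximate SPPE is \ppad-hard. The hard instances have finitely many goods $g\in[m]$, each with value vector $\vec v^{(g)}$, together with budgets $(\rho'_i)$; after rescaling, all values are at most $\vmax$. We encode such an instance as the atomic distribution $F_1$ that places mass $1/m$ on each $\vec v^{(g)}$, and we set $\rho_i=(1-\eta)\rho'_i/m$, so that fluid spending under $F_1$ is exactly discrete spending divided by $m$. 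The fluid instance is then $F=(1-\eta)F_1+\eta F_0$ with budgets $(\rho_i)$. Given any $\gamma$-approximate fluid SPPE $(\vec\alpha,(x_i))$ for this instance, we output $\vec\alpha$ together with the restriction of $(x_i)$ to the atoms $\vec v^{(g)}$. The task is to show this is a $\gamma'$-approximate discrete SPPE, for suitable constants $\eta$ and $\gamma$.

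\textbf{Allocation and clearing.} Conditions (a) and (b) of \Cref{def: approximate-fluid-SPPE} hold pointwise for every $\vec v$, so they transfer directly to the atoms.

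\textbf{Budget conditions.} We control the perturbation caused by $F_0$. Each bidder's spending on the $F_0$ part is at most
\begin{equation*}
\eta\int_{F_0} x_i(\vec v)\,p(\vec v;\vec\alpha)\,\dd F_0(\vec v)\leq \eta\vmax\cdot\Pr_{F_0}[v_i>0]=\frac{2\eta\vmax}{n}.
\end{equation*}
Dividing by $\rho_i$, the relative error is $O\bigl(\eta\vmax m/(n(1-\eta)\rho'_i)\bigr)$. For this to be at most a constant fraction of $\gamma'$, we need the hard instances to have $\rho'_i$ bounded below relative to $\vmax m/n$. If the known reductions do not already provide this, we pad the instance by adding dummy bidders and zero-value goods, or choose $\eta$ of order $1/\mathrm{poly}$. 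However, the statement requires $\eta$ to be a constant, so normalization is the step to check carefully.

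Once the perturbation is at most $\gamma'/3$ in relative terms, the two budget conditions transfer:
\begin{itemize}
\item[(c)] Fluid spending at most $(1+\gamma)\rho_i$ implies atomic spending at most $(1+\gamma)\rho_i$, hence discrete spending at most $(1+\gamma)\rho'_i$.
\item[(d)] If discrete spending is below $(1-\gamma')\rho'_i$, then fluid spending is below $(1-\gamma'+\gamma'/3)\rho_i\le(1-\gamma)\rho_i$, so the fluid condition forces $\alpha_i\ge 1-\gamma\ge 1-\gamma'$.
\end{itemize}
Taking $\gamma=\gamma'/3$ makes both implications go through.

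\textbf{Main obstacle.} The hard step is this uniform scale comparison: the $F_0$ spending must be negligible relative to every budget $\rho_i$ while both $\eta$ and $\gamma$ stay constants. I would first verify whether the constant-inapproximability reductions produce instances with budgets of order $\Theta(\vmax m/n)$, i.e.\ goods roughly balanced among bidders. Failing that, the fallback is a rescaling: give each bidder a per-bidder value range and replace $F_0$ by a version that scales each $v_i$ proportionally to $\rho'_i$. The strong-convexity argument of \Cref{lem: strong-convexity-F0} would then have to be rechecked for that scaled version.
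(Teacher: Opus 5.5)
Your transfer of conditions (a)--(d) from the fluid equilibrium to the atoms is correct. However, the proposal stops at exactly the step you flag, and that step is where the lemma's real content lies.

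With $\eta$ a fixed constant, condition (d) transfers only if bidder $i$'s spending on the $F_0$ component, which is of order $\eta\vmax/n$, is a small constant fraction of the fluid budget $\rho_i=(1-\eta)\rho'_i/m$. That is, $\eta\vmax m/(n\min_i\rho'_i)$ must be at most a small constant multiple of $\gamma'$. This fails for arbitrary SPPE instances, so it has to be established for a specific \ppad-hard family, and you leave it unverified.

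The paper obtains it from the structure of the construction in \citet[Appendix A.1]{liu2026tight}:
\begin{itemize}
\item after a uniform rescaling, all values lie in $[0,1]$ and all budgets lie in $[b,1]$, with $b$ depending only on $\lambda$;
\item each bidder has positive value on at most $3k$ items, and each item is valued by exactly two bidders, so $2m\le 3kn$.
\end{itemize}
Hence $m\vmax/(n\min_iB_i)\le 3k/(2b)$ is a constant. Then $\gamma=\lambda/4$ together with a sufficiently small constant $\eta$ satisfies both margin inequalities.

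None of your fallbacks substitutes for this.
\begin{itemize}
\item Taking $\eta=1/\mathrm{poly}$ violates the statement.
\item A per-bidder rescaled $F_0$ changes the distribution family.
\item Adding zero-value goods makes the ratio worse, since it increases $m$ and shrinks every $\rho_i$.
\item Adding dummy bidders does help, because it increases $n$ and dilutes each real bidder's share of $F_0$. But the number needed is of order $\vmax m/\min_i\rho'_i$. That is polynomial in the instance size only if you already have a polynomial bound on this ratio for the hard family, which is again a property of the specific construction.
\end{itemize}

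There is also a second, smaller omission. Your transfer of (c) only gives discrete spending at most $(1+\gamma)\rho'_i$. The hardness result you would cite, e.g.\ \citet[Theorem A.2]{liu2026tight}, requires exact budget feasibility. Hardness of the stricter notion does not automatically give hardness of the relaxed one. The paper handles this with a separate claim:
\begin{itemize}
\item shrink every budget by the factor $1+\lambda$;
\item reduce from the exact problem with parameters $(\lambda,2\lambda/(1+\lambda),\lambda)$.
\end{itemize}
Then a relaxed solution for the new budgets satisfies the original budgets exactly. It also meets the original no-unnecessary-pacing threshold, since $(1-\lambda)B_i=(1-2\lambda/(1+\lambda))B'_i$. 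You need this step, or an equivalent one, before invoking the known hardness.
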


\paragraph{From fluid to stochastic auctions.}
\label{section: fluid-stochastic-SPPE}

After proving the computational impossibility of joint no-regret in the \textit{fluid} setting, we further transfer this to the stochastic \textit{i.i.d.}\ item setting by an additional reduction argument. The following \Cref{lem: uniform-fluid-stochastic-transfer} shows that, by using the pacing multiplier sequence from the stochastic auctions, one can obtain a fluid bidding process with almost the same regret and expenditure up to sublinear terms, with high probability. This suggests that achieving joint expected no-regret in the stochastic \textit{i.i.d.}\ setting is no easier than in the fluid setting.

\begin{restatable}{lemma}{fluidstochastictransfer}
\label{lem: uniform-fluid-stochastic-transfer}
For any stochastic multi-agent bidding dynamic $\mathcal A$, we can efficiently obtain $\mathcal A^F$
for the fluid setting by simulating $\mathcal A$, and achieve
\begin{align*}
\E\left[\left|\regret_{i,\mathcal A^F}^F
                 -\regret_{i,\mathcal A}\right|\right]
&\leq \vmax\left[
\sqrt{2T\log\left(2(\lceil\muupper_i T\rceil+1)\right)}
+\sqrt T+4\right],\\
\E\left[\left(\sumt c_{i,t}^F-T\rho_i\right)_+\right]
&\leq
\E\left[\left(\sumt c_{i,t}-T\rho_i\right)_+\right]
+\vmax\sqrt T. 
\end{align*}
\end{restatable}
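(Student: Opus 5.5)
The construction of $\mathcal A^F$ is direct. We draw a fresh i.i.d.\ sequence $\bm v=(\vec v_1,\ldots,\vec v_T)\sim F^T$ internally and run $\mathcal A$ on it. This produces the multiplier profiles $\vec\alpha_t$, which depend only on $\vec v_1,\ldots,\vec v_{t-1}$ and the internal randomness. In fluid round $t$ we play the same $\vec\alpha_t$ and use the same tie-breaking rule pointwise. Each quantity in the fluid run is then the conditional expectation, given the past, of its stochastic counterpart: $c^F_{i,t}=\E[c_{i,t}\mid\mathcal F_{t-1}]$ and $f^F_{i,t}=\E[f_{i,t}\mid\mathcal F_{t-1}]$. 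Hence $M_t=\sum_{s\le t}(c_{i,s}-c^F_{i,s})$ and the analogous reward sum are martingales with increments bounded by $\vmax$. The expenditure bound is the easy part. We use $(a+b)_+\le a_++|b|$ together with $\E|M_T|\le\sqrt{\E M_T^2}\le\vmax\sqrt T$, which holds because martingale increments are orthogonal.

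For the regret bound we split
\[
\regret^F_{i,\mathcal A^F}-\regret_{i,\mathcal A}=\bigl(\opt_i^F(\bm\alpha_{-i})-\opt_i(\bm v;\bm\alpha_{-i})\bigr)-\sumt\bigl(f^F_{i,t}-f_{i,t}\bigr).
\]
The second term is again a bounded-increment martingale, so its expected absolute value is at most $\vmax\sqrt T$; this gives the $\sqrt T$ summand. The first term is the main obstacle. The benchmarks are maxima over allocations, and the competing multipliers $\bm\alpha_{-i}$ are adapted to $\bm v$ itself, so no single fixed allocation concentrates. To handle this we pass to the dual. Both benchmarks equal $\inf_{\mu\in[0,\muupper_i]}$ of a sum of per-round dual terms, up to the duality gap; this uses \Cref{prop: one-item-duality-gap}, and optimality of $\mu\le\muupper_i$ follows from the multiplier range. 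The stochastic per-round term is $D_{i,t}(\mu)=\rho_i\mu+(v_{i,t}-(1+\mu)d_{i,t})_+$, and the fluid term is its conditional mean $D^F_{i,t}(\mu;\vec\alpha_{-i,t})$. Then
\[
\Bigl|\inf_\mu\sumt D^F_{i,t}(\mu)-\inf_\mu\sumt D_{i,t}(\mu)\Bigr|\le\sup_{\mu\in[0,\muupper_i]}\Bigl|\sumt\bigl(D_{i,t}(\mu)-D^F_{i,t}(\mu)\bigr)\Bigr|,
\]
and the duality gaps add at most $2\vmax$ in total.

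To bound this supremum, we fix $\mu$ and note that $D_{i,t}(\mu)-D^F_{i,t}(\mu)$ is a martingale difference bounded in $[-\vmax,\vmax]$, because $\vec\alpha_t$ is predictable. We discretize $[0,\muupper_i]$ with mesh $1/T$, which gives $\lceil\muupper_iT\rceil+1$ points. On this grid we apply Azuma--Hoeffding and a union bound over the grid points and both signs, which accounts for the factor $2(\lceil\muupper_iT\rceil+1)$. Integrating the tail, or equivalently using the maximal inequality $\E\max_k|M^{(k)}_T|\le\vmax\sqrt{2T\log(2K)}$, gives the $\sqrt{2T\log(2(\lceil\muupper_iT\rceil+1))}$ term. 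For the discretization error, each $D_{i,t}$ is Lipschitz in $\mu$ with constant at most $\rho_i+d_{i,t}\le 2\vmax$, and so is $D^F_{i,t}$. A mesh of $1/T$ therefore costs $O(\vmax)$ in total. Together with the two duality gaps this yields the additive constant $4\vmax$. Summing the three contributions gives the first displayed inequality.
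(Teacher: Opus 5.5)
Your proposal is correct and follows the paper's proof essentially step for step. The shared steps are: simulating $\mathcal A$ with the same multipliers and pointwise allocations; martingale orthogonality plus Cauchy--Schwarz for the payment and reward terms; and writing both benchmarks through their duals, controlled by a Hoeffding/soft-max maximal inequality on a $1/T$ grid, a Lipschitz discretization step, and the $\vmax$ duality-gap bound. One bookkeeping point: to land exactly on $4\vmax$, use that the $\rho_i\mu$ terms cancel in $D_{i,t}-D^F_{i,t}$, so the cumulative difference is $2T\vmax$-Lipschitz and discretization costs only $2\vmax$; treating each loss separately as $2\vmax$-Lipschitz would instead cost $4\vmax$ there.
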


\subsection{Evidence of High Variance in Bidding Systems}
\label{subsection: budget-2}

Next, we show concrete numerical examples where the state of the bidders' algorithms displays cycling behavior, when all bidders are utility-maximizers with budgets and use common bidding algorithms. We will simulate two scenarios: 1) a continuous-time dynamical system with agents using PID controllers, and 2) discrete $T$ auctions with agents using online mirror descent (OMD). As the cycle recurs, the variance term in \Cref{thm: strongly-convex-regret-decomposition} grows linearly over time.

\paragraph{Dynamical system with PID controllers.} 
We consider the dynamical system model studied by \citet{paes2024complex}, where second-price auctions occur in continuous time over the same fluid item. We denote the profile of pacing multipliers at time $t$ as $\vec \alpha(t) = (\alpha_i(t))_{i\in [n]}$. We abuse the notation and denote bidder $i$'s value increment and cost in the time-$t$ auction as $u_i^F(t)$ and $c_i^F(t)$.

Following \citet{paes2024complex}, we consider the PID controller, a practically successful method of adjusting the pacing multiplier over time~\citep{balseiro2023field,tashman2020dynamic,yang2019bid,zhang2016feedback,smirnov2016online}. In the continuous-time setting, the PID controller can be interpreted as the differential equations:
\begin{equation}
    \label{eq:pid-budget}
    \frac{\dd \alpha_i}{\dd t} = \rho_i - c_i^F(t), \quad \forall i \in [n].
\end{equation}
The above \eqref{eq:pid-budget} captures the natural practice of increasing the multiplier (and thus bidding higher) when the expenditure rate is lower than the per-auction budget, and decreasing the multiplier otherwise. 

To simulate cycling behavior, we use the Rock-Paper-Scissors instance from \citet{paes2024complex}, where the distribution of three bidders' valuations for the fluid item is as follows, 
\begin{equation}
    \label{eq:rps-distribution}
    \begin{aligned}
    F &= \text{Uniform}\left\{\begin{pmatrix}
        \text{Beta}(2c, c) \\ 0 \\ \text{Beta}(c, 2c)
    \end{pmatrix}, \begin{pmatrix}
        \text{Beta}(c, 2c) \\ \text{Beta}(2c, c) \\ 0
    \end{pmatrix}, \begin{pmatrix}
        0 \\ \text{Beta}(c, 2c) \\ \text{Beta}(2c, c)
    \end{pmatrix}\right\} \\
    &\approx \text{Uniform} \left\{\begin{pmatrix}
        2/3 \\ 0 \\ 1/3
    \end{pmatrix}, \begin{pmatrix}
        1/3 \\ 2/3 \\ 0
    \end{pmatrix}, \begin{pmatrix}
        0 \\ 1/3 \\ 2/3
    \end{pmatrix}\right\}.
    \end{aligned}
\end{equation}

We explain why this instance leads to cycling. Notice that each item is won by its high-value bidder at a price set by its low-value bidder, so if bidder~1 bids higher, bidder~2 pays more and bids lower, then bidder~3 pays less and bids higher, and finally bidder~1 pays more and bids lower, completing a loop.

Under this instance, the pacing multipliers of bidders form a stable cycle over time; see \Cref{fig: budget-pid-cycles}. We also visualize the cycle by projecting the orbit of $\vec \alpha(t)$ to its first two dimensions, with three different initializations. Cycling directly leads to high variance in the multiplier sequence, and we next show that this indeed leads to high regret for each bidder; see the growth of cumulative regret in \Cref{fig: budget-pid-regret}. This is consistent with our observations in \Cref{thm: strongly-convex-regret-decomposition}. 

\begin{figure}[!ht]
\centering
\includegraphics[width=\linewidth]{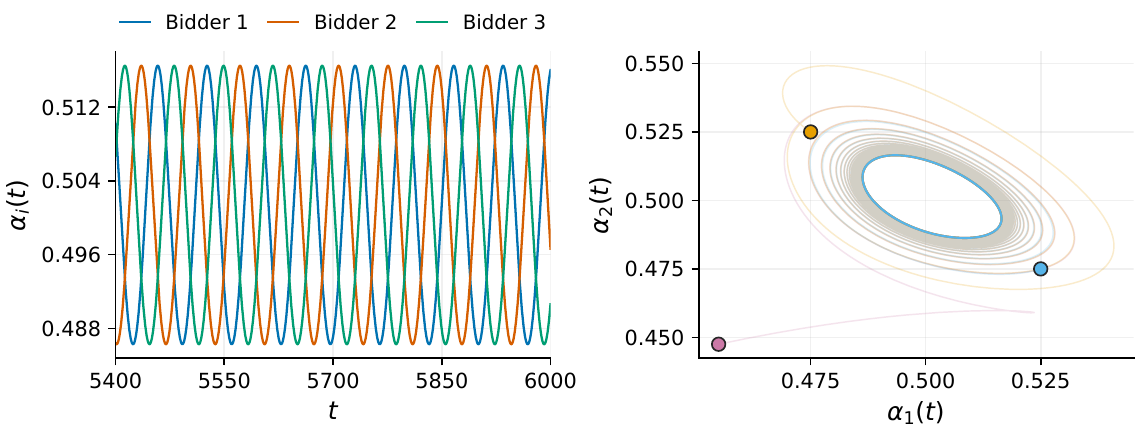}
\caption{Cycling of pacing multipliers under the PID dynamics. Left: the pacing multipliers $\alpha_i(t)$ in a time segment. Right: orbits of $(\alpha_1(t), \alpha_2(t))$ from three different initializations.}
\label{fig: budget-pid-cycles}
\end{figure}

\begin{figure}[!ht]
\centering
\includegraphics[width=0.6\linewidth]{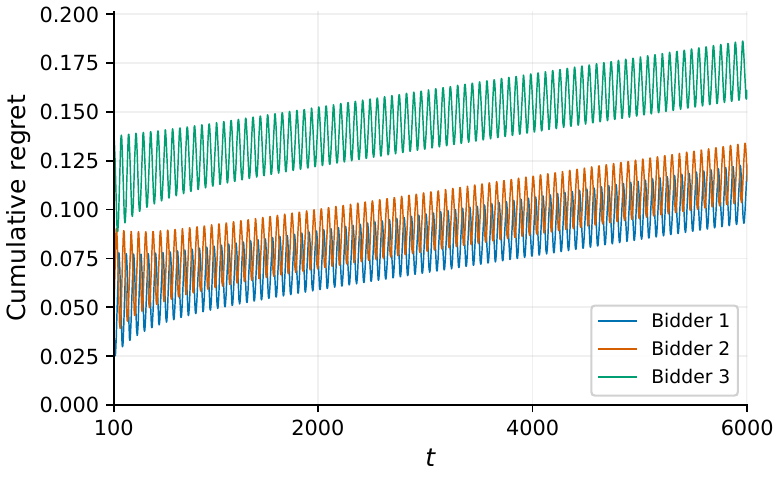}
\caption{Cumulative regret of each bidder under the PID dynamics.}
\label{fig: budget-pid-regret}
\end{figure}

\paragraph{Fluid auctions with online mirror descent.}
We next consider $T$ rounds of fluid auctions with the same value distribution \eqref{eq:rps-distribution}. Following the standard primal-dual approach to online resource allocation~\citep{balseiro2023best}, each bidder updates their dual variable using online mirror descent (OMD) on the dual losses $D_{i,t}^F$, whose gradient at $\mu_{i,t}$ is $\rho_i - c_{i,t}^F$. We consider the Euclidean and the entropic updates with learning rate $\eta>0$,
\begin{equation*}
    \mu_{i,t+1} =
    \begin{cases}
        \mu_{i,t} - \eta\left(\rho_i - c_{i,t}^F\right) & \text{(Euclidean)},\\
        \mu_{i,t} \exp\left(-\eta\left(\rho_i - c_{i,t}^F\right)\right) & \text{(entropic)},
    \end{cases}
\end{equation*}

Similar to the PID controller \eqref{eq:pid-budget}, a bidder decreases the dual variable to bid higher when the expenditure is lower than the per-auction budget, and increases the dual variable otherwise. 

We simulate different choices of learning rate $\eta$. For a constant learning rate, \Cref{fig: budget-omd-cycles} shows that the pacing multipliers keep oscillating under both updates. We also consider $\eta \sim 1/\sqrt{T}$ as in the best-of-many-worlds guarantees~\citep{balseiro2023best}, and show that the regret of each bidder still grows linearly in $T$ even when the stepsize is chosen with the knowledge of the horizon length. 

\begin{figure}[!ht]
\centering
\includegraphics[width=\linewidth]{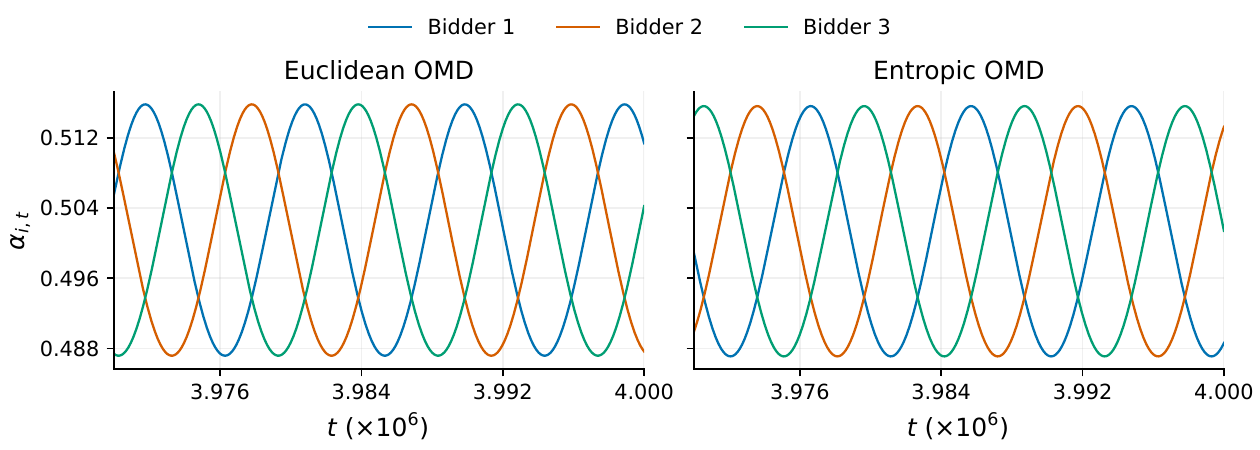}
\caption{Pacing multipliers under Euclidean OMD (left) and entropic OMD (right) with a fixed learning rate.}
\label{fig: budget-omd-cycles}
\end{figure}

\begin{figure}[!ht]
\centering
\includegraphics[width=\linewidth]{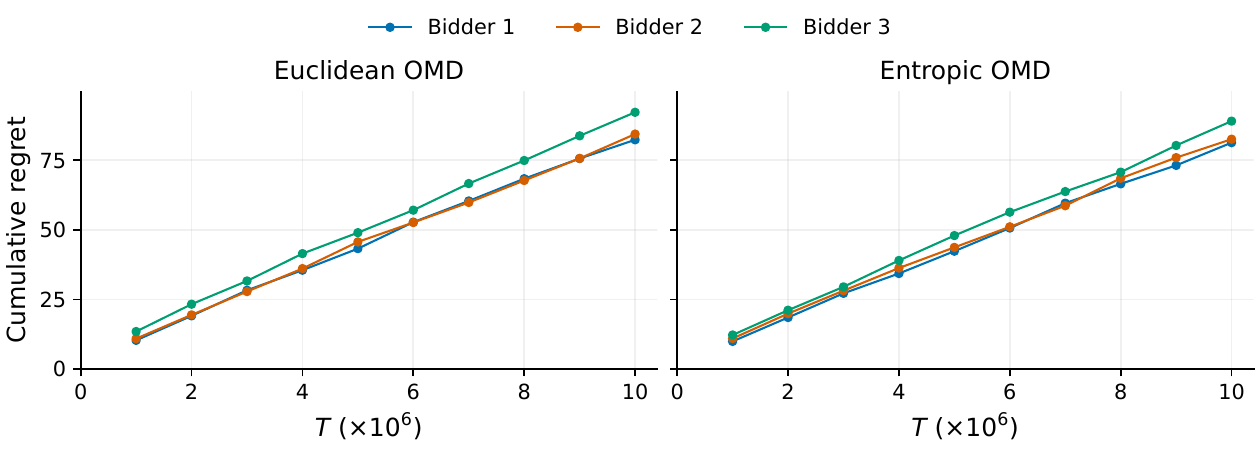}
\caption{Fluid regret of each bidder under Euclidean OMD (left) and entropic OMD (right), with $\eta$ proportional to $1/\sqrt{T}$.}
\label{fig: budget-omd-regret}
\end{figure}

\section{Impossibility for Value Maximizers with ROI Targets}
\label{section: main-section-on-ABE}

We next show the computational difficulty results and numerical simulations for value maximizers with ROI targets. 

\subsection{Computational Complexity of Joint No-Regret}
\begin{restatable}{budgettheorem}{roiimpossibility}
\label{thm: fixed-polynomial-ROI-regret}
Under \textit{i.i.d.} item arrival and value-maximizing bidders with ROI constraints, for any fixed $\varepsilon\in(0,1)$, $N_0\in\mathbb{N}$, polynomial-time multi-agent bidding dynamic $\mathcal A$,
and polynomial $q(n)=\Omega(n)$, if \ppad-hard problems cannot be solved by polynomial-time randomized algorithms, then there exists a problem instance with $n\geq N_0$ and $T=\lceil q(n)\rceil$ such that, for at least one bidder $i\in[n]$,
\begin{equation*}
\E[\regret_{i,\mathcal A}(\bm v)]>\frac{T}{Cn^{\varepsilon}}
\quad\text{or}\quad
\E\left[\left(r_i\sumt c_{i,t}-\sumt u_{i,t}\right)_+\right]
>\frac{T}{Cn^{\varepsilon}},
\end{equation*}
where $C\geq1$ is an absolute constant. The same statement holds for the fluid setting.
\end{restatable}

We will prove \Cref{thm: fixed-polynomial-ROI-regret} in an analogous way to \Cref{thm: fixed-polynomial-regret}, since both proofs follow the same primal-dual framework of analysis. We will use the simplest ROI target $r_i=1$ for all bidders. The missing lemmas and proofs in this section can be found in \Cref{app: ROI-proofs}.

\paragraph{Proof sketch of \Cref{thm: fixed-polynomial-ROI-regret}.} For value-maximizing bidders, the most important technical distinction is that the dual variables can be \textit{unbounded}. Concretely, the fluid dual losses are given by
\begin{equation*}
D_{i,t}^F(\mu;\vec\alpha_{-i,t})
=\int\left((1+\mu)v_i-\mu\max_{j\ne i}\alpha_{j,t}v_j\right)_+
\,\dd F(\vec v).
\end{equation*}
By \Cref{prop: multipliers}, the pacing multipliers $\alpha_{i,t}\in(\alphalower_i,+\infty)$ correspond to dual iterates $\mu_{i,t}\in(0,+\infty)$. To handle unboundedness, we will consider the projections of the dual iterates to a bounded interval $(0, H]$ in our analysis. This recovers several desirable properties such as strong convexity of the dual, and the case where the multipliers are outside the interval can be handled separately. 

\begin{restatable}{lemma}{roicurvature}
\label{lem: ROI-strong-convexity}
For any bidder $i\in[n]$, competing strategies $\vec\alpha_{-i,t}$, and $H>0$, $D_{i,t}^{F_0}(\cdot;\vec\alpha_{-i,t})$ is $\sigma_0$-strongly convex on $[0,H]$, with
\begin{equation*}
\sigma_0=\frac{2\vmax}{3n(n-1)(H+1)^3\min_{j\ne i}\alpha_{j,t}}.
\end{equation*}
\end{restatable}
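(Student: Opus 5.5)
The plan is to compute the second derivative of $D_{i,t}^{F_0}(\cdot;\vec\alpha_{-i,t})$ directly and bound it below uniformly on $[0,H]$. We use the convention of this section that $r_j=1$, so $\alphalower_j=1$ and every competing multiplier satisfies $\alpha_{j,t}>1$.

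\textbf{Step 1: reduce to single pairs.} Under $F_0$, condition on the chosen pair $\{j_1,j_2\}$.
\begin{itemize}
\item If $i\notin\{j_1,j_2\}$, then $v_i=0$. The integrand becomes $(-\mu\max_{j\neq i}\alpha_{j,t}v_j)_+=0$ for $\mu\geq 0$, so it contributes nothing.
\item If the pair is $\{i,j\}$, which happens with probability $2/(n(n-1))$ for each fixed $j\neq i$, then the competing bid is $a v_j$ with $a=\alpha_{j,t}$. The term becomes
\begin{equation*}
g_a(\mu)=\E_{v_i,v_j\sim \mathrm{Unif}[0,\vmax]}\left[\left((1+\mu)v_i-\mu a v_j\right)_+\right].
\end{equation*}
\end{itemize}
So $D_{i,t}^{F_0}=\frac{2}{n(n-1)}\sum_{j\neq i} g_{\alpha_{j,t}}$. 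Each $g_a$ is convex, being an expectation of a maximum of affine functions of $\mu$. It therefore suffices to show that the single term with $j^\star=\arg\min_{j\neq i}\alpha_{j,t}$ satisfies $g_a''(\mu)\geq \vmax/(3a(H+1)^3)$ on $[0,H]$. Multiplying by $2/(n(n-1))$ then gives exactly the stated $\sigma_0$.

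\textbf{Step 2: compute $g_a''$.} Fix $v_j$ and integrate over $v_i$. The integrand is positive iff $v_i>\theta(\mu):=\mu a v_j/(1+\mu)$. Differentiating under the integral gives
\begin{equation*}
g_a'(\mu)=\E\left[(v_i-a v_j)\,\indi{v_i>\theta(\mu)}\right].
\end{equation*}
This is absolutely continuous in $\mu$. Differentiating once more via the moving boundary, with density $1/\vmax$ and active only when $\theta(\mu)\leq\vmax$, gives
\begin{equation*}
g_a''(\mu)=\E_{v_j}\left[\frac{a v_j}{1+\mu}\cdot\frac{1}{\vmax}\cdot\frac{a v_j}{(1+\mu)^2}\,\indi{\mu a v_j\leq (1+\mu)\vmax}\right].
\end{equation*}
Here the first factor is $-(\theta-a v_j)=a v_j/(1+\mu)$ and the last factor is $\theta'(\mu)$. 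Integrating $a^2v_j^2/(\vmax^2(1+\mu)^3)$ over $v_j\in[0,\min\{\vmax,(1+\mu)\vmax/(\mu a)\}]$ splits into two cases.
\begin{itemize}
\item If the cap is not binding, we get $a^2\vmax/(3(1+\mu)^3)$. This is at least $\vmax/(3a(H+1)^3)$ because $a\geq 1$.
\item If the cap is binding, we get $\vmax/(3a\mu^3)$. This is at least $\vmax/(3a(H+1)^3)$ because $\mu\leq H$.
\end{itemize}
In both cases $g_a''(\mu)\geq \vmax/(3a(H+1)^3)$.

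\textbf{Step 3: conclude.} A $C^1$ function whose derivative is absolutely continuous with a.e.\ derivative at least $\sigma$ is $\sigma$-strongly convex. Adding the convex terms for the other $j$ only helps. Combining this with Step 1 gives the claim.

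The main obstacle is Step 2: justifying the differentiation under the integral and handling the case split on whether the kink leaves $[0,\vmax]$. This is where $a\geq 1$ (that is, $\alpha_{j,t}>\alphalower_j=1$) is needed, to turn the $a^2$ factor into the $1/a$ appearing in the bound. It is also why the bound uses $\min_j\alpha_{j,t}$: we single out the partner with the smallest multiplier.
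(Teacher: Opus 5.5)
Your proposal is correct and follows essentially the same route as the paper: condition on the pair $\{i,j\}$ with $j$ minimizing $\alpha_{j,t}$, obtain the two second-derivative regimes $\vmax\alpha_{j,t}^2/[3(1+\mu)^3]$ and $\vmax/(3\alpha_{j,t}\mu^3)$, and bound both below by $\vmax/[3\alpha_{j,t}(H+1)^3]$ using $\alpha_{j,t}>1$ and $\mu\leq H$. The only difference is technical: the paper integrates to a closed-form piecewise expression, checks that the pieces agree to second order at $\mu\alpha_{j,t}=1+\mu$, and then differentiates, whereas you differentiate under the integral using the moving boundary.
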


In addition to $F_0$ in \Cref{def: F0-distribution}, in the ROI-constrained setting we consider the following additional regularizer, which randomly gives bidders values at price zero. This helps to bound dual minimizers in our instance and to control relative ROI errors.
For an atomic distribution $F_1$, we will consider $F=(1-\eta-\eta')F_1+\eta F_0+\eta'F_0'$,
where $\eta,\eta'>0$ and $\eta+\eta'<1$. 
\begin{definition}
\label{def: ROI-regularizer}
We define $F_0'$ as the distribution of the value vector $\vec v=(v_1,\ldots,v_n)$ generated as follows,
\begin{enumerate}
\item Choose $j$ uniformly from $[n]$ and set $v_j=\vmax$.
\item For $k\ne j$, set $v_k=0$.
\end{enumerate}
\end{definition}

The main step in the proof, similar to the previous budget-constrained setting, is to show that an online process with low regret and ROI feasibility across all agents will be sufficient for extracting an offline equilibrium solution that is \ppad-hard to compute. To this end, we use the following fluid analogue of \textit{autobidding equilibria}~\citep{li2024vulnerabilities}.

\begin{definition}
\label{def: approximate-fluid-ABE}
For $\gamma\in[0,1)$ and $M>1$, a $\gamma$-approximate fluid autobidding equilibrium with cap $M$ is a tuple $(\vec\alpha,(x_i(\vec v))_{i\in[n]})$, with $\vec\alpha\in[1,M]^n$, satisfying that
\begin{enumerate}
\item[(a)] Allocation to near highest bidders: $x_i(\vec v)>0\implies\alpha_i v_i\geq(1-\gamma)h(\vec v;\vec\alpha)$ for all $i\in[n]$ and $\vec v\in[0,\vmax]^n$.
\item[(b)] Market clearance: $\sumi x_i(\vec v)\leq1$ and $h(\vec v;\vec\alpha)>0\implies\sumi x_i(\vec v)=1$ for all $\vec v\in[0,\vmax]^n$.
\item[(c)] Approximate ROI feasibility: $\int x_i(\vec v)p(\vec v;\vec\alpha)\,\dd F(\vec v)\leq(1+\gamma)\int x_i(\vec v)v_i\,\dd F(\vec v)$ for all $i\in[n]$.
\item[(d)] Approximate complementarity: $\int x_i(\vec v)p(\vec v;\vec\alpha)\,\dd F(\vec v)<(1-\gamma)\int x_i(\vec v)v_i\,\dd F(\vec v)\implies\alpha_i=M$ for all $i\in[n]$.
\end{enumerate}
\end{definition}

By our decomposition in \Cref{lem: regret-to-dual-regret}, low regret and small ROI violations will imply that the following terms \eqref{eq: ROI-extraction-payment-hypothesis}--\eqref{eq: ROI-extraction-slack-hypothesis} are controlled simultaneously. Specifically, we allow $\alpha_{i,t}\in(1,M]$ and any fixed multipliers $\mu_i\geq1/(M-1)$,
\begin{align}
\text{ROI violation:}\quad
&\frac{\sumt(c_{i,t}^F-u_{i,t}^F)}{\sumt u_{i,t}^F}\leq\delta_1,
\quad\forall i\in[n],
\label{eq: ROI-extraction-payment-hypothesis}\\
\text{dual deviation:}\quad
&\frac1T\sumi\sumt(\mu_{i,t}-\mu_i)^2\leq\delta_2,
\label{eq: ROI-extraction-variance-hypothesis}\\
\text{ROI slack:}\quad
&\frac{\mu_i\sumt(u_{i,t}^F-c_{i,t}^F)}{(1+\mu_i)\sumt u_{i,t}^F}\leq\delta_3,
\quad\forall i:\ \mu_i>\frac1{M-1}.
\label{eq: ROI-extraction-slack-hypothesis}
\end{align}

\begin{figure}[t]
\centering
\includegraphics[width=\linewidth]{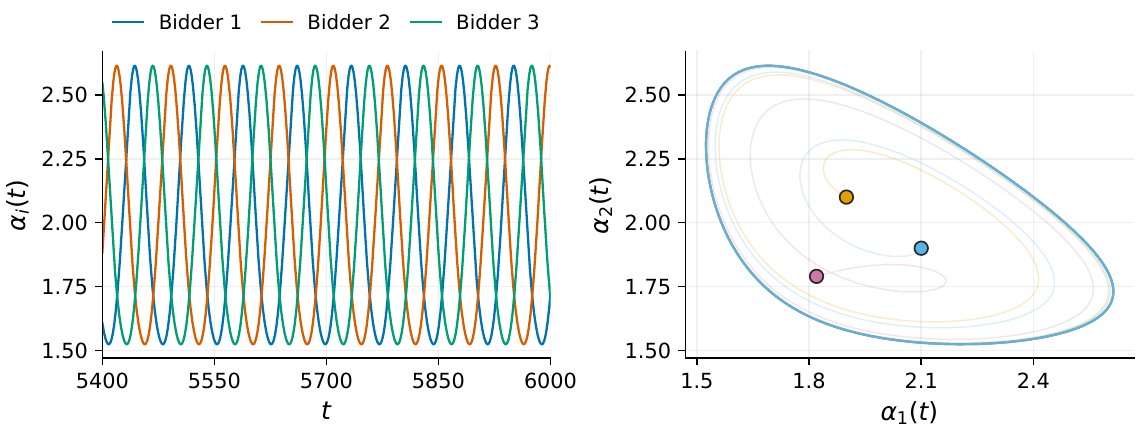}
\caption{Cycling of pacing multipliers of value maximizers under the PID dynamics. Left: the pacing multipliers $\alpha_i(t)$ in a time segment. Right: orbits of $(\alpha_1(t), \alpha_2(t))$ from three different initializations.}
\label{fig: roi-pid-cycles}
\end{figure}

Similar to \Cref{lem: low-regret-implies-approximate-SPPE}, we show that this will further imply an approximate fluid autobidding equilibrium.

\newif\ifroifullstatement
\begin{restatable}{lemma}{roiextraction}
\label{lem: low-regret-implies-approximate-ABE}
\ifroifullstatement
For any $\gamma\in(0,1)$ and $\delta_1,\delta_2,\delta_3>0$,
conditions \eqref{eq: ROI-extraction-payment-hypothesis}--\eqref{eq: ROI-extraction-slack-hypothesis}
imply that a $\gamma$-approximate fluid autobidding equilibrium with cap $M$ can be extracted
efficiently from the online history and $(\mu_i)_{i\in[n]}$ if $\delta_1, \delta_2, \delta_3$ satisfy the following,
\begin{equation*}
\delta_1\leq\frac{\gamma}{8},\qquad
\delta_2\leq
\frac{\gamma^3\sumt u_{i,t}^F}
{32M(M-1)^2T\int v_i\,\dd F(\vec v)}\ \ \text{ $\forall i\in[n]$},
\qquad
\delta_3\leq\frac{\gamma}{8M}.
\end{equation*}
\else
For any $\gamma\in(0,1)$, there exist $\delta_1,\delta_2,\delta_3>0$ such that \eqref{eq: ROI-extraction-payment-hypothesis}--\eqref{eq: ROI-extraction-slack-hypothesis} imply that a $\gamma$-approximate fluid autobidding equilibrium with cap $M$ can be extracted efficiently from the online history and $(\mu_i)_{i\in[n]}$.
\fi
\end{restatable}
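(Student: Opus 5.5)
The extraction will pick a single round whose dual profile is close to the fixed multipliers $(\mu_i)$. We then use that round's multipliers, with the fixed $\mu_i$ converted back through \Cref{prop: multipliers}, and check conditions (a)--(d) of \Cref{def: approximate-fluid-ABE}. Throughout $r_i=1$, so $\alpha_i = 1+1/\mu_i$. The candidate equilibrium is $\hat\alpha_i = 1+1/\mu_i\in[1,M]$, using the hypothesis $\mu_i\geq 1/(M-1)$; when $\mu_i = 1/(M-1)$ we get $\hat\alpha_i=M$.

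\textbf{Step 1: a good round by averaging.} By \eqref{eq: ROI-extraction-variance-hypothesis} and Markov's inequality, the fraction of rounds $t$ with $\sum_i(\mu_{i,t}-\mu_i)^2>\delta_2/\theta$ is at most $\theta$. The per-round spend and value, $c^F_{i,t}$ and $u^F_{i,t}$, are bounded by $\int v_i\,\dd F$. We therefore cannot pick just one round: discarding a $\theta$ fraction of rounds perturbs the aggregates $\sum_t c^F_{i,t}$ and $\sum_t u^F_{i,t}$ by at most $\theta T\int v_i\,\dd F$. Relative to $\sum_t u^F_{i,t}$ this error is $\theta T\int v_i\,\dd F/\sum_t u^F_{i,t}$, which explains the ratio appearing in the bound on $\delta_2$.

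\textbf{Step 2: continuity of spend in $\mu$.} On the good rounds, $|\mu_{i,t}-\mu_i|$ is small, so $|\alpha_{i,t}-\hat\alpha_i| = |1/\mu_{i,t}-1/\mu_i|\lesssim (M-1)^2|\mu_{i,t}-\mu_i|$, since $\mu\geq 1/(M-1)$. Because the allocations are atomic, spend is discontinuous in $\alpha$. Instead of evaluating spend at $\hat\alpha$, I would define the allocation at $\hat\alpha$ as the time-average over good rounds of the online allocations $x_{i,t}(\vec v)$; this averaging is the efficient extraction. At any $\vec v$, a bidder winning under $\vec\alpha_t$ with $\alpha_{i,t}v_i = h$ satisfies $\hat\alpha_i v_i\geq (1-\gamma)h(\vec v;\hat\alpha)$ whenever the relative multiplier error is below about $\gamma/2$. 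The relative error is $|\alpha_{i,t}-\hat\alpha_i|/\hat\alpha_i$, using $\hat\alpha_i\geq1$. This gives (a), and (b) is inherited directly. The prices $p(\vec v;\vec\alpha_t)$ and $p(\vec v;\hat\alpha)$ likewise agree up to a factor $1\pm\gamma/2$. The averaged spend and value under $\hat\alpha$ therefore approximate $\frac1T\sum_t c^F_{i,t}$ and $\frac1T\sum_t u^F_{i,t}$ up to factors $(1\pm\gamma/2)$, plus the Step 1 discard error. With $\theta\approx\gamma$ and $\delta_2/\theta\lesssim \gamma^2/(M-1)^2$, the overall $\gamma^3/(M(M-1)^2)$ scaling follows.

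\textbf{Step 3: feasibility and complementarity.} For (c), I would combine \eqref{eq: ROI-extraction-payment-hypothesis}, which gives $\sum_t c^F_{i,t}\leq(1+\delta_1)\sum_t u^F_{i,t}$, with the Step 2 distortions, so that $\delta_1\leq\gamma/8$ leaves room within $(1+\gamma)$. For (d), if $\mu_i>1/(M-1)$ then \eqref{eq: ROI-extraction-slack-hypothesis} gives $\sum_t(u-c)\leq \delta_3(1+\mu_i)/\mu_i\sum_t u\leq \delta_3 M\sum_t u$, using $(1+\mu_i)/\mu_i = \hat\alpha_i\leq M$. With $\delta_3\leq\gamma/(8M)$ the extracted spend is at least $(1-\gamma)$ times the value, so the premise of (d) fails. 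Otherwise $\hat\alpha_i=M$ and (d) holds trivially.

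\textbf{Main obstacle.} I expect Step 2 to be the main obstacle: showing that averaged allocations make the extracted tuple consistent at $\hat\alpha$ despite the discontinuities at ties, and that the relative errors are normalized by realized value $\sum_t u^F_{i,t}$ rather than by $\rho_i$. The normalization matters because realized value could be small. My plan is to keep every error multiplicative pointwise in $\vec v$, so that the only additive error is the discard term from Step 1.
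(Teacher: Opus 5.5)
Your route is essentially the paper's. You fix $\hat\alpha_i=1+1/\mu_i$, use Markov on \eqref{eq: ROI-extraction-variance-hypothesis} to discard rounds whose duals deviate by more than some $\iota$, and average the online allocations. You then compare $h$ and $p$ multiplicatively at every $\vec v$, and get (c) from \eqref{eq: ROI-extraction-payment-hypothesis} and (d) from \eqref{eq: ROI-extraction-slack-hypothesis} via $(1+\mu_i)/\mu_i=\hat\alpha_i<M$. Averaging over the retained rounds only is fine, because (c) and (d) are ratio conditions; the paper instead averages over all $T$ rounds with bad rounds reallocated to a highest bidder under $\hat\alpha$. Your version even avoids the paper's $R_i$ bookkeeping, though you still need to define the allocation off $\supp(F)$, e.g.\ give it to a highest bidder under $\hat\alpha$. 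The problem is that two quantitative steps are wrong as written, and they are exactly the steps that produce the stated bound on $\delta_2$.

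\textbf{First}, the bound $c^F_{i,t}\leq\int v_i\,\dd F(\vec v)$ is false for value maximizers. Multipliers lie in $(1,M]$, so a winner can pay up to its bid $\alpha_{i,t}v_i\leq Mv_i$. The correct bound is $c^F_{i,t}\leq Mu^F_{i,t}\leq M\int v_i\,\dd F(\vec v)$. In (d) you must lower-bound the spend on the retained rounds, so the discarded spend can be as large as $M|\mathcal T|\int v_i\,\dd F(\vec v)$. You therefore need $M\varepsilon_i\lesssim\gamma$, where $\varepsilon_i=|\mathcal T|\int v_i\,\dd F(\vec v)/\sumt u^F_{i,t}$. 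This is where the factor $M$ in the $\delta_2$ bound comes from, and your accounting (``$\theta\approx\gamma$'') does not produce it.

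\textbf{Second}, bounding the relative multiplier error by $(M-1)^2|\mu_{i,t}-\mu_i|$ via $\hat\alpha_i\geq1$ loses a factor of about $M$. It forces $\iota\approx\gamma/(M-1)^2$. With the stated $\delta_2$, the discarded fraction $\delta_2/\iota^2$ then only gives $\varepsilon_i\leq\gamma(M-1)^2/(8M)$. That is not small once $M$ is moderately large, and both (c) and (d) fail. Instead, divide by $\hat\alpha_i=(1+\mu_i)/\mu_i$ exactly:
\[
\alpha_{i,t}/\hat\alpha_i=1+(\mu_i-\mu_{i,t})/(\mu_{i,t}(1+\mu_i))\leq1+(M-1)|\mu_{i,t}-\mu_i|,
\]
and symmetrically $\hat\alpha_i/\alpha_{i,t}\leq1+(M-1)|\mu_{i,t}-\mu_i|$. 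Both use $\mu_{i,t},\mu_i\geq1/(M-1)$ and $1+\mu\geq1$. Taking $\iota=\gamma/(2(M-1))$ then gives $\varepsilon_i\leq\gamma/(8M)$. With $\delta_1\leq\gamma/8$ and $\delta_3\leq\gamma/(8M)$, conditions (c) and (d) close exactly as in the paper. Your Step 3 heuristic $\delta_2/\theta\lesssim\gamma^2/(M-1)^2$ is consistent only with this sharper bound, not with the one you wrote in Step 2.
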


To complete the proof, we will also show the \ppad-hardness of computing an approximate fluid autobidding equilibrium under regularization (\Cref{lemma: roi-fluid-hardness}), and perform a reduction from fluid to stochastic \textit{i.i.d.}\ auctions (\Cref{lem: ROI-fluid-stochastic-transfer}).

\subsection{Evidence of High Variance in Bidding Systems}
\label{subsection: roi-2}

Similar to \Cref{subsection: budget-2}, we show that common bidding algorithms display cycling behavior when all bidders are value maximizers with ROI targets $r_i=1$. We use the same value distribution \eqref{eq:rps-distribution}, and simulate the same two settings.

\paragraph{Dynamical system with PID controllers.}
For value maximizers, the PID controller adjusts the pacing multiplier according to the slack in the ROI constraint,
\begin{equation*}
    \frac{\dd \alpha_i}{\dd t} = u_i^F(t) - c_i^F(t), \quad \forall i \in [n],
\end{equation*}
which increases the multiplier when the value gained exceeds the payment, and decreases it otherwise. This is the setting originally studied by \citet{paes2024complex}. We reproduce the cycling behavior in \Cref{fig: roi-pid-cycles}, and additionally show cumulative regret growth of bidders in \Cref{fig: roi-pid-regret}.

\begin{figure}[!ht]
\centering
\includegraphics[width=0.6\linewidth]{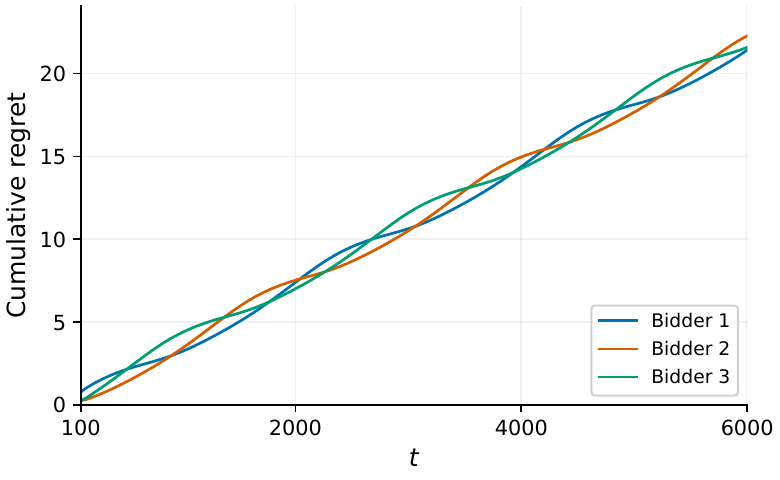}
\caption{Cumulative regret of each value maximizer under the PID dynamics.}
\label{fig: roi-pid-regret}
\end{figure}

\paragraph{Fluid auctions with online mirror descent.}
We next consider $T$ rounds of fluid auctions with the same value distribution, where each bidder runs OMD with Euclidean or entropic updates. With a fixed learning rate, \Cref{fig: roi-omd-cycles} shows that the multipliers keep oscillating under both updates. With $\eta \sim 1/\sqrt{T}$, \Cref{fig: roi-omd-regret} shows that the fluid regret of each bidder still grows linearly in $T$.

\begin{figure}[!ht]
\centering
\includegraphics[width=\linewidth]{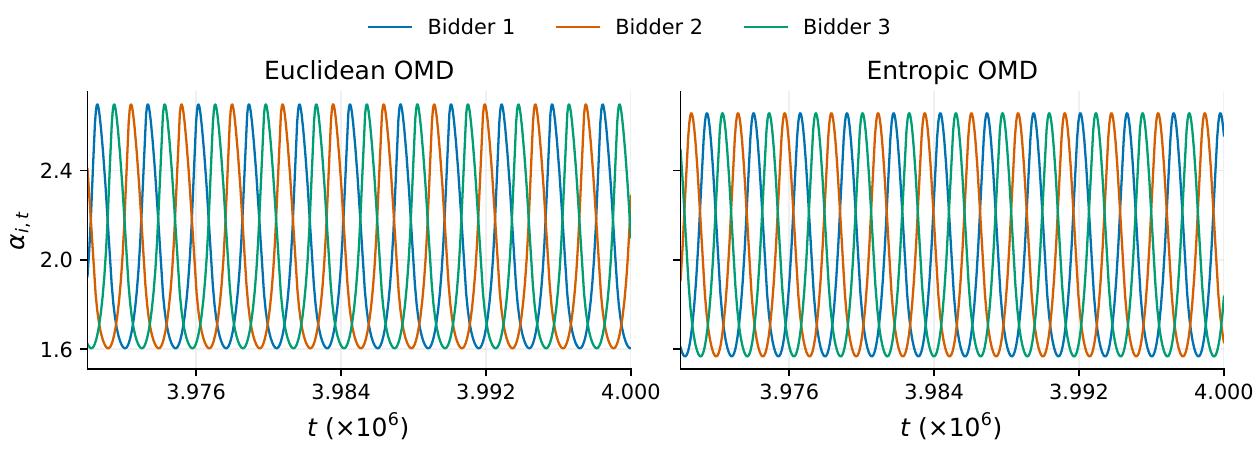}
\caption{Pacing multipliers of value maximizers under Euclidean OMD (left) and entropic OMD (right) with a fixed learning rate.}
\label{fig: roi-omd-cycles}
\end{figure}

\begin{figure}[!ht]
\centering
\includegraphics[width=\linewidth]{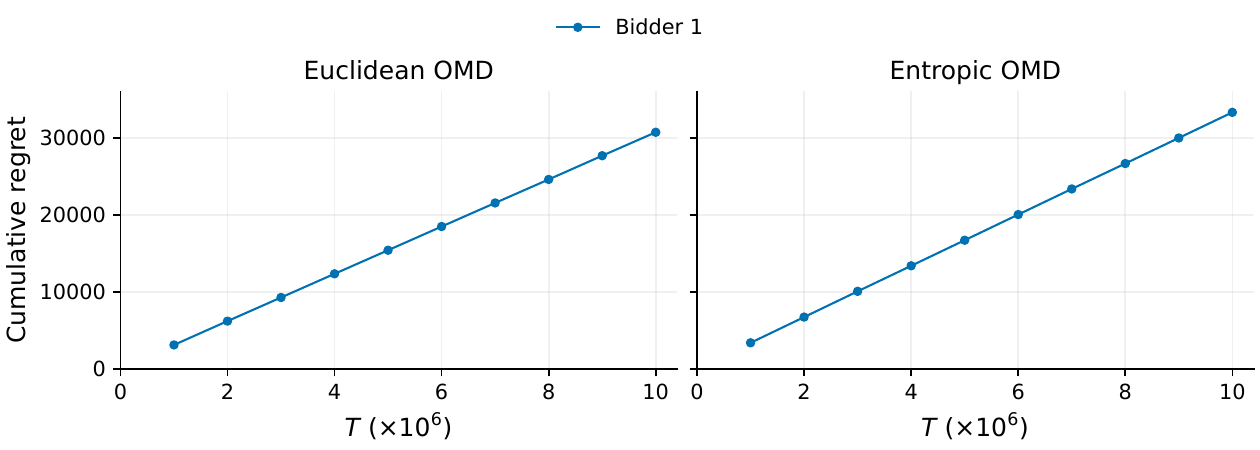}
\caption{Fluid regret of bidder~1 under Euclidean OMD (left) and entropic OMD (right), with $\eta$ proportional to $1/\sqrt{T}$; the other two value maximizers have nearly identical regret.}
\label{fig: roi-omd-regret}
\end{figure}

\FloatBarrier
\subsection*{Statement on the use of generative AI}
All results, proofs, and exposition in this paper are the authors' own work. We used Claude Opus 5.5 to improve various constants in the proofs, simplify intermediate computations, and make minor edits in the appendix. This use was not essential to obtaining the results. The authors take full responsibility for the content and correctness of the results.

\bibliographystyle{plainnat}
\bibliography{ref}
\appendix

\crefalias{section}{appendix}
\crefalias{subsection}{subappendix}
\crefalias{subsubsection}{subsubappendix}
\newpage
\section{Missing Results and Proofs of \Cref{section: preliminaries} and \Cref{section: regret-lower-bound}}
\label{app: general-proofs}

\subsection{Derivation of the Dual for Utility Maximizers with Budgets}
\label{app: budget-duality}

To get the hindsight benchmark \eqref{eq: primal}, consider $\mathcal X=\{0,1\}$, $f_t(x)=(v_{i,t}-d_{i,t})x$, $a_t(x)=d_{i,t}x$, and $\rho=\rho_i$. For \eqref{eq: fluid-hindsight-budget}, consider $\mathcal X=\{x:\valuespace\to\{0,1\}\}$, $\rho=\rho_i$, and
\begin{equation*}
f_t(x)=\int x(\vec v)\Bigl(v_i-\max_{j\ne i}\alpha_{j,t}v_j\Bigr)\dd F(\vec v),\qquad
a_t(x)=\int x(\vec v)\max_{j\ne i}\alpha_{j,t}v_j\,\dd F(\vec v).
\end{equation*}
By the definition of $D_t$ in \eqref{eq: resource-allocation-dual}, the dual losses are
\begin{align*}
D_{i,t}(\mu;\vec v_t,\vec\alpha_{-i,t})&=\rho_i\mu+\bigl(v_{i,t}-(1+\mu)d_{i,t}\bigr)_+,\\
D_{i,t}^F(\mu;\vec\alpha_{-i,t})&=\rho_i\mu+\int\Bigl(v_i-(1+\mu)\max_{j\ne i}\alpha_{j,t}v_j\Bigr)_+\dd F(\vec v).
\end{align*}

\paragraph{Bounds on optimal dual solutions.}
The right derivative of $D_{i,t}$ at $\muupper_i$ satisfies
\begin{equation*}
\rho_i-d_{i,t}\indi{v_{i,t}>(1+\muupper_i)d_{i,t}}
\geq\rho_i-\frac{\vmax}{1+\muupper_i}=0,
\end{equation*}
and similarly for $D_{i,t}^F$. By convexity, both dual objectives attain their minima in $[0,\muupper_i]$.

\paragraph{Duality gaps.}
The reward--cost sets are finite for \eqref{eq: primal}, and compact for \eqref{eq: fluid-hindsight-budget} by Lyapunov's theorem \citep[Section~1]{dai2011extension}. Thus by \Cref{prop: one-item-duality-gap}, both duality gaps are in $[0,\vmax]$.

\subsection{Derivation of the Dual for Value Maximizers with ROI Targets}
\label{app: ROI-duality}

To get the hindsight benchmark \eqref{eq: primal-roi}, consider $\mathcal X=\{0,1\}$, $f_t(x)=v_{i,t}x$, $a_t(x)=(r_id_{i,t}-v_{i,t})x$, and $\rho=0$. For \eqref{eq: fluid-hindsight-roi}, consider $\mathcal X=\{x:\valuespace\to\{0,1\}\}$, $\rho=0$, and
\begin{equation*}
f_t(x)=\int x(\vec v)v_i\,\dd F(\vec v),\qquad
a_t(x)=\int x(\vec v)\Bigl(r_i\max_{j\ne i}\alpha_{j,t}v_j-v_i\Bigr)\dd F(\vec v).
\end{equation*}
By the definition of $D_t$ in \eqref{eq: resource-allocation-dual}, the dual losses are
\begin{align*}
D_{i,t}(\mu;\vec v_t,\vec\alpha_{-i,t})&=\bigl((1+\mu)v_{i,t}-\mu r_id_{i,t}\bigr)_+,\\
D_{i,t}^F(\mu;\vec\alpha_{-i,t})&=\int\Bigl((1+\mu)v_i-\mu r_i\max_{j\ne i}\alpha_{j,t}v_j\Bigr)_+\dd F(\vec v).
\end{align*}

\paragraph{Existence of optimal dual solutions.}
The dual objective of \eqref{eq: primal-roi} is piecewise linear and bounded below, so it attains its minimum. The dual objective of \eqref{eq: fluid-hindsight-roi} attains its minimum when it is coercive. In particular, this holds for $F=(1-\eta-\eta')F_1+\eta F_0+\eta'F_0'$ with $\eta'>0$, since then $D_{i,t}^F(\mu;\vec\alpha_{-i,t})\geq\eta'\vmax(1+\mu)/n$.

\paragraph{Duality gaps.}
The reward--cost sets are finite for \eqref{eq: primal-roi}, and compact for \eqref{eq: fluid-hindsight-roi} by Lyapunov's theorem \citep[Section~1]{dai2011extension}. Thus by \Cref{prop: one-item-duality-gap}, both duality gaps are in $[0,\vmax]$.

\subsection{Proof of \Cref{prop: one-item-duality-gap}}
\label{app: proof-duality-gap}

\dualitygapbound*

\begin{proof}
The lower bound $\Delta\geq0$ is weak duality. For the upper bound, let $S_t=\{(f_t(x),a_t(x)):x\in\mathcal X\}$, and consider the convex relaxation of \eqref{eq: hindsight-resource-allocation} in which round $t$ chooses a point $(y_t,z_t)\in\operatorname{conv}S_t$, with objective $\sumt y_t$ and constraint $\sumt z_t\leq T\rho$. Since $\operatorname{conv}S_t$ is compact and a linear function attains its maximum over $\operatorname{conv}S_t$ at a point of $S_t$, Sion's minimax theorem \citep{sion1958general} shows that the relaxation has optimal value $D^*$.

We write an optimal solution of the relaxation as convex combinations of finitely many points of $S_t$, and consider the linear program over these convex weights. It has value $D^*$, one equality constraint for each round, and one budget constraint. An optimal vertex therefore has at most $T+1$ positive weights, so all rounds but at most one round $t_0$ use a single point of $S_t$. In round $t_0$, we choose the point with the smaller cost. This gives a feasible solution of \eqref{eq: hindsight-resource-allocation}, and loses reward at most $\vmax$ since $f_{t_0}\in[0,\vmax]$. Hence $\Delta\leq\vmax$.
\end{proof}

\subsection{Proof of \Cref{prop: multipliers}}
\label{app: proof-multipliers}

\multiplierduals*

\begin{proof}
For utility maximizers with budgets, $\mu_{i,t}=1/\alpha_{i,t}-1$ gives $f_t(x)-\mu_{i,t}a_t(x)=x(\alpha_{i,t}v_{i,t}-d_{i,t})/\alpha_{i,t}$. For value maximizers with ROI targets, $\mu_{i,t}=1/(\alpha_{i,t}r_i-1)$ gives $f_t(x)-\mu_{i,t}a_t(x)=x\mu_{i,t}r_i(\alpha_{i,t}v_{i,t}-d_{i,t})$. In both cases, the maximizers over $x\in[0,1]$ are exactly the second-price allocations under the bid $\alpha_{i,t}v_{i,t}$, namely $x=1$ if $\alpha_{i,t}v_{i,t}>d_{i,t}$, $x=0$ if $\alpha_{i,t}v_{i,t}<d_{i,t}$, and any $x\in[0,1]$ under a tie. Since $\alpha\mapsto1/\alpha-1$ and $\alpha\mapsto1/(\alpha r_i-1)$ are bijections onto the dual domains, choosing $\alpha_{i,t}$ is equivalent to choosing $\mu_{i,t}$.
\end{proof}

\subsection{Proof of \Cref{prop: linearization-gap-nonnegativity}}
\label{app: proof-linearization-gap}

\linearizationgap*

\begin{proof}
Fix $t\in[T]$. \Cref{alg: primal-dual} chooses $x_t$ as a maximizer of
$f_t(x)-\mu_ta_t(x)$ over $\mathcal X$, so the definitions of $D_t$ and
$f_t^*$ give
\begin{equation*}
D_t(\mu_t)=\rho\mu_t+f_t(x_t)-\mu_ta_t(x_t).
\end{equation*}
For any comparator $\mu\geq0$, we evaluate the maximum in $f_t^*(\mu)$ at
the same decision $x_t$ and obtain
\begin{align*}
D_t(\mu)-D_t(\mu_t)
&\geq\rho\mu+f_t(x_t)-\mu a_t(x_t)
-\bigl(\rho\mu_t+f_t(x_t)-\mu_ta_t(x_t)\bigr)\\
&=(\rho-a_t(x_t))(\mu-\mu_t)
=\widetilde D_t(\mu)-\widetilde D_t(\mu_t).
\end{align*}
We rearrange and obtain
$D_t(\mu_t)-D_t(\mu)\leq\widetilde D_t(\mu_t)-\widetilde D_t(\mu)$.
We sum over $t\in[T]$ and obtain $R_D(\mu)\leq\widetilde R_D(\mu)$.
\end{proof}

\subsection{Proof of \Cref{lem: regret-to-dual-regret}}
\label{app: proof-regret-decomposition}

\regretdecomposition*

\begin{proof}
Fix $t\in[T]$. Since $x_t$ maximizes $f_t(x)-\mu_ta_t(x)$ over
$\mathcal X$, we have $f_t^*(\mu_t)=f_t(x_t)-\mu_ta_t(x_t)$. By the
definitions of $D_t$ and $\widetilde D_t$,
\begin{align*}
D_t(\mu_t)-\widetilde D_t(\mu_t)
&=\rho\mu_t+f_t^*(\mu_t)-(\rho-a_t(x_t))\mu_t\\
&=f_t(x_t)-\mu_ta_t(x_t)+\mu_ta_t(x_t)
=f_t(x_t).
\end{align*}
Since $\mu^*$ is optimal for \eqref{eq: resource-allocation-dual}, we have
$D^*=\sumt D_t(\mu^*)$, and $\mathrm{OPT}=D^*-\Delta$ by the
definition of $\Delta$. Consequently,
\begin{align*}
\mathrm{OPT}-\sumt f_t(x_t)
&=\sumt D_t(\mu^*)-\sumt D_t(\mu_t)
+\sumt\widetilde D_t(\mu_t)-\Delta\\
&=\left(\sumt\widetilde D_t(\mu_t)-\sumt\widetilde D_t(\mu^*)\right)
-\left(\sumt D_t(\mu_t)-\sumt D_t(\mu^*)\right)
+\sumt\widetilde D_t(\mu^*)-\Delta\\
&\overset{\text{(a)}}{=}\widetilde R_D(\mu^*)-R_D(\mu^*)
+\mu^*\left(\rho T-\sumt a_t(x_t)\right)-\Delta,
\end{align*}
where step (a) uses the definitions of $\widetilde R_D$ and $R_D$, and
$\sumt\widetilde D_t(\mu^*)=\mu^*\sumt(\rho-a_t(x_t))
=\mu^*(\rho T-\sumt a_t(x_t))$.
\end{proof}

\subsection{Proof of \Cref{thm: strongly-convex-regret-decomposition}}
\label{app: proof-strongly-convex-decomposition}

\stronglyconvexdecomposition*

\begin{proof}
Let $I$ be the interval in the statement. Fix $t\in[T]$. By the strong
convexity of $D_t$ on $I$, the function $h_t(\mu)=D_t(\mu)-\sigma\mu^2/2$ is
convex on $I$. For $\lambda\in(0,1]$, the point $\mu_t+\lambda(\mu^*-\mu_t)$
lies in $I$, and
\begin{align*}
h_t(\mu^*)-h_t(\mu_t)
&\overset{\text{(a)}}{\geq}
\frac{h_t(\mu_t+\lambda(\mu^*-\mu_t))-h_t(\mu_t)}{\lambda}\\
&=\frac{D_t(\mu_t+\lambda(\mu^*-\mu_t))-D_t(\mu_t)}{\lambda}
-\frac{\sigma}{2}(\mu^*-\mu_t)\bigl(2\mu_t+\lambda(\mu^*-\mu_t)\bigr)\\
&\overset{\text{(b)}}{\geq}(\rho-a_t(x_t))(\mu^*-\mu_t)
-\frac{\sigma}{2}(\mu^*-\mu_t)\bigl(2\mu_t+\lambda(\mu^*-\mu_t)\bigr),
\end{align*}
where step (a) is by the convexity of $h_t$ on the segment between $\mu_t$
and $\mu^*$, and step (b) is by the inequality
$D_t(\mu)-D_t(\mu_t)\geq(\rho-a_t(x_t))(\mu-\mu_t)$, $\mu\geq0$, from the
proof of \Cref{prop: linearization-gap-nonnegativity}, at
$\mu=\mu_t+\lambda(\mu^*-\mu_t)$.
We let $\lambda\to0$ and add $\sigma({\mu^*}^2-\mu_t^2)/2$ to both sides:
\begin{align*}
D_t(\mu^*)-D_t(\mu_t)
&\geq(\rho-a_t(x_t))(\mu^*-\mu_t)-\sigma\mu_t(\mu^*-\mu_t)
+\frac{\sigma}{2}(\mu^*-\mu_t)(\mu^*+\mu_t)\\
&=(\rho-a_t(x_t))(\mu^*-\mu_t)+\frac{\sigma}{2}(\mu^*-\mu_t)^2.
\end{align*}
We sum over $t\in[T]$. Since
$\widetilde R_D(\mu^*)=\sumt(\rho-a_t(x_t))(\mu_t-\mu^*)$ and
$R_D(\mu^*)=\sumt(D_t(\mu_t)-D_t(\mu^*))$, we obtain
\begin{align*}
\widetilde R_D(\mu^*)-R_D(\mu^*)
&\geq\frac{\sigma}{2}\sumt(\mu_t-\mu^*)^2\\
&\overset{\text{(c)}}{=}\frac{\sigma}{2}\sumt\left(\mu_t-\frac1T\sum_{\tau=1}^T\mu_\tau\right)^2
+\frac{\sigma T}{2}\left(\frac1T\sum_{\tau=1}^T\mu_\tau-\mu^*\right)^2\\
&\geq\frac{\sigma}{2}\sumt\left(\mu_t-\frac1T\sum_{\tau=1}^T\mu_\tau\right)^2,
\end{align*}
where step (c) expands the square and uses
$\sumt(\mu_t-T^{-1}\sum_{\tau=1}^T\mu_\tau)=0$.
We substitute this bound into the identity in
\Cref{lem: regret-to-dual-regret} and obtain the claimed inequality.
\end{proof}

\section{Missing Results and Proofs of \Cref{section: main-section-on-SPPE}}
\label{app: SPPE-proofs}

\subsection{Proof of \Cref{lem: strong-convexity-F0}}

\strongconvexityFzero*

\begin{proof}
In the construction of $F_0$ in \Cref{def: F0-distribution}, we uniformly select a pair of bidders with nonzero utility. Therefore, bidder $i$ belongs to the selected pair with probability
$2/n$. Conditional on bidder $i$ being selected, we have the following,
\begin{itemize}
    \item $v_i$ is uniform on $[0, \vmax]$.
    \item $\max_{k\neq i} \alpha_{k, t} v_k$ is uniform on $[0, \alpha_{j, t} \vmax]$ (where $j$ is the other selected bidder) and is independent of $v_i$. 
\end{itemize}

Conditional on the complement event where bidder $i$ is not selected, we simply have $D_{i,t}^F (\mu; \vec \alpha_{-i, t}) = \rho_i \mu$. 

Notice that the integral term in each fluid dual loss is upper-bounded by $\vmax$. 
By the bounded convergence theorem, $D_{i,t}^F(\cdot; \vec \alpha_{-i, t})$ is differentiable. We next calculate the difference of derivatives for
$0\leq\mu_1<\mu_2\leq\muupper_i$. By a conditional probability argument, 
\begin{align*}
&\frac{\partial}{\partial\mu}D_{i,t}^{F_0}
       (\mu_2;\vec\alpha_{-i,t})
-\frac{\partial}{\partial\mu}D_{i,t}^{F_0}
       (\mu_1;\vec\alpha_{-i,t})\\
&\quad=
\frac{2}{n(n-1)}\sum_{j\ne i}
\int_0^{\alpha_{j,t}\vmax}\int_0^{\vmax}
\frac{d}{\alpha_{j,t}\vmax^2}
\indi{(1+\mu_1)d<v\leq(1+\mu_2)d}
\,\dd v\,\dd d\\
&\quad\overset{\text{(a)}}{\geq}
\frac{2}{n(n-1)}\sum_{j\ne i}
\int_0^{\vmax/(1+\max_k\muupper_k)}
\int_{(1+\mu_1)d}^{(1+\mu_2)d}
\frac{d}{\alpha_{j,t}\vmax^2}\,\dd v\,\dd d\\
&\quad=
\frac{2(\mu_2-\mu_1)}{n(n-1)\vmax^2}
\left(\sum_{j\ne i}\frac{1}{\alpha_{j,t}}\right)
\int_0^{\vmax/(1+\max_k\muupper_k)}d^2\,\dd d\\
&\quad\overset{\text{(b)}}{\geq}
\frac{2(\mu_2-\mu_1)}{n\vmax^2}
\int_0^{\vmax/(1+\max_k\muupper_k)}d^2\,\dd d\\
&\quad=
\frac{2(\mu_2-\mu_1)}{3n\vmax^2}
\left(\frac{\vmax}{1+\max_k\muupper_k}\right)^3\\
&\quad=
\frac{2\vmax}{3n(1+\max_k\muupper_k)^3}(\mu_2-\mu_1)
=\sigma_0(\mu_2-\mu_1),
\end{align*}
where in step (a) we restrict the outer integral to
$0\leq d\leq\vmax/(1+\max_k\muupper_k)$, and in step (b) we use $\alpha_{j,t}\leq1$ to obtain
$\sum_{j\ne i}1/\alpha_{j,t}\geq n-1$.
This proves $\sigma_0$-strong convexity of the dual loss 
on $[0,\muupper_i]$.
\end{proof}

\subsection{Proof of \Cref{lem: low-regret-implies-approximate-SPPE}}

\begingroup
\sppefullstatementtrue
\sppeextraction*
\endgroup

\begin{proof}
We construct the pacing multipliers $(\alpha_i')_{i\in[n]}$ and allocations $(\hat x_{i}'(\vec v))$ of the 
approximate fluid SPPE, and then show the four approximate SPPE properties in \Cref{def: approximate-fluid-SPPE}.

We construct the approximate SPPE pacing multipliers using the average of online dual variables, $\mu_i'=T^{-1}\sumt\mu_{i,t}$, and
set $\alpha_i'=1/(1+\mu_i')$. Since $\mu_i'\in[0,\muupper_i]$, we have
$\alpha_i'\in[\alphalower_i,1]$.

We next construct the approximate SPPE allocations. For a parameter $\iota>0$ to be chosen later, we consider the set of timesteps where
the $\ell_{\infty}$ distance deviation of the dual is greater than $\iota$, i.e., 
$\mathcal T=\{t\in[T]:\max_i|\mu_{i,t}-\mu_i'|>\iota\}$.
We define the allocation
\begin{equation*}
\hat x_{i,t}(\vec v)=
\begin{cases}
x_{i,t}(\vec v), & t\notin\mathcal T, \vec v \in \supp(F)\\
\indi{i=\min\operatorname*{arg\,max}_{j\in[n]}\alpha_j'v_j},
& \text{otherwise}.
\end{cases}
\end{equation*}
Intuitively, for steps with large dual deviation, we force the allocation to be consistent with the pacing multipliers $(\alpha_i')_{i\in[n]}$.
We then construct the offline fluid allocation by taking the average over the adjusted online allocations: $$\hat x_i'(\vec v)=T^{-1}\sumt\hat x_{i,t}(\vec v).$$

It is useful to bound the deviations from $\mu_i'$. Since $\sumt(\mu_{i,t}-\mu_i')=0$, 
by \eqref{eq: extraction-variance-hypothesis} we have
\begin{align*}
\frac1T\sum_i\sumt(\mu_{i,t}-\mu_i')^2
+\sum_i(\mu_i'-\mu_i)^2
&=\frac1T\sum_i\sumt(\mu_{i,t}-\mu_i)^2 \leq\delta_2.
\end{align*}
This implies $|\mu_i'-\mu_i|\leq\sqrt{\delta_2}$, and
\begin{align}
\label{eq: extraction-exceptional-rounds}
\frac{|\mathcal T|}{T}
&\leq\frac{1}{T\iota^2}
\sum_{t\in\mathcal T}\max_i|\mu_{i,t}-\mu_i'|^2
\leq\frac{1}{T\iota^2}\sumt\sum_i(\mu_{i,t}-\mu_i')^2
\leq\frac{\delta_2}{\iota^2}.
\end{align}
For $t\notin\mathcal T$, we have $|\mu_{i,t}-\mu_i'|\leq\iota$,
so
\begin{align*}
\frac{\alpha_{i,t}}{\alpha_i'}
&=1+\frac{\mu_i'-\mu_{i,t}}{1+\mu_{i,t}}
\leq1+\iota,\\
\frac{\alpha_i'}{\alpha_{i,t}}
&=1+\frac{\mu_{i,t}-\mu_i'}{1+\mu_i'}
\leq1+\iota.
\end{align*}

Both the highest and second-highest bids are coordinatewise monotone
and positively homogeneous in the bid vector $(\alpha_jv_j)_{j\in[n]}$.
We therefore obtain, for $t\notin\mathcal T$,
\begin{equation}
\label{eq: relative-price-comparison}
\begin{aligned}
\frac{h(\vec v;\vec\alpha')}{1+\iota}
&\leq h(\vec v;\vec\alpha_t)
\leq(1+\iota)h(\vec v;\vec\alpha'),\\
\frac{p(\vec v;\vec\alpha')}{1+\iota}
&\leq p(\vec v;\vec\alpha_t)
\leq(1+\iota)p(\vec v;\vec\alpha').
\end{aligned}
\end{equation}

Now we show that $( (\alpha'_i), (\hat x'_i(\vec v)))$ satisfy conditions (a)--(d) in
the definition of approximate fluid SPPE (\Cref{def: approximate-fluid-SPPE}).

\begin{itemize}
\item[(a)] \textit{Allocation to the near-highest bidder.}
For $\vec v\in\supp(F)$, $t\notin\mathcal T$, and $x_{i,t}(\vec v)>0$,
we have
\begin{align*}
\alpha_i'v_i
&\geq\frac{\alpha_{i,t}v_i}{1+\iota}
=\frac{h(\vec v;\vec\alpha_t)}{1+\iota}
\geq\frac{h(\vec v;\vec\alpha')}{(1+\iota)^2}.
\end{align*}
For $t\in\mathcal T$ and for $\vec v\notin\supp(F)$,
we allocate only to highest bidders under
$\vec\alpha'$.
Thus $\hat x_i'(\vec v)>0$ implies
$\alpha_i'v_i\geq h(\vec v;\vec\alpha')/(1+\iota)^2$. We then have condition (a) whenever we have $(1+\iota)^{-2}\geq1-\gamma$.
\item[(b)] \textit{Market clearance.} This is obviously satisfied by the construction itself.

\item[(c)] \textit{Approximate budget feasibility.}
To upper-bound the budget expense, we compare the spending under $\vec\alpha'$
with the online spending.
When $x_{i,t}(\vec v)>0$, we have
$p(\vec v;\vec\alpha_t)\leq\alpha_{i,t}v_i\leq v_i$,
and thus $c_{i,t}^F\leq\int v_i\,\dd F(\vec v)$.
We can then upper-bound the spending as follows, 
\begin{align*}
\int\hat x_i'(\vec v)p(\vec v;\vec\alpha')\,\dd F(\vec v)
&\leq\frac1T\sum_{t\notin\mathcal T}
\int x_{i,t}(\vec v)p(\vec v;\vec\alpha')\,\dd F(\vec v)
+\frac{|\mathcal T|}{T}\int v_i\,\dd F(\vec v)\\
&\overset{\text{(a)}}{\leq}
\frac{1+\iota}{T}\sum_{t\notin\mathcal T}c_{i,t}^F
+\frac{|\mathcal T|}{T}\int v_i\,\dd F(\vec v)\\
&\overset{\text{(b)}}{\leq}
(1+\iota)(1+\delta_1)\rho_i
+\frac{\delta_2}{\iota^2}\int v_i\,\dd F(\vec v),
\end{align*}
where step (a) is by \eqref{eq: relative-price-comparison}, and step (b) invokes condition \eqref{eq: extraction-budget-hypothesis} and
the bound in \eqref{eq: extraction-exceptional-rounds}.
Therefore, we know that condition (c) holds if, for all $i \in [n]$,
\begin{equation}
\label{eq: extraction-budget-condition}
(1+\iota)(1+\delta_1)
+\frac{\delta_2}{\iota^2}
\frac{\int v_i\,\dd F(\vec v)}{\rho_i}
\leq1+\gamma.
\end{equation}

\item[(d)] \textit{Approximate no unnecessary pacing.}
We consider a bidder with $\alpha_i'<1-\gamma$. Since $\sqrt{\delta_2} \leq \gamma/2$, we have
\begin{align*}
\mu_i
\geq\mu_i'-\sqrt{\delta_2} >\frac{\gamma}{1-\gamma}-\sqrt{\delta_2}
\geq\frac{\gamma}{2}.
\end{align*}
By \eqref{eq: extraction-slack-hypothesis}, we therefore have
$T^{-1}\sumt c_{i,t}^F\geq\rho_i-2\delta_3/\gamma$.
To lower bound a bidder's spending, we have
\begin{align*}
\int\hat x_i'(\vec v)p(\vec v;\vec\alpha')\,\dd F(\vec v)
&\geq\frac1T\sum_{t\notin\mathcal T}
\int x_{i,t}(\vec v)p(\vec v;\vec\alpha')\,\dd F(\vec v)\\
&\overset{\text{(a)}}{\geq}
\frac{1}{T(1+\iota)}\sum_{t\notin\mathcal T}c_{i,t}^F\\
&=\frac{T^{-1}\sumt c_{i,t}^F
-T^{-1}\sum_{t\in\mathcal T}c_{i,t}^F}{1+\iota}\\
&\overset{\text{(b)}}{\geq}
\frac{\rho_i-2\delta_3/\gamma
-(\delta_2/\iota^2)\int v_i\,\dd F(\vec v)}{1+\iota},
\end{align*}
where in (a) we use \eqref{eq: relative-price-comparison}, and in (b)
we use \eqref{eq: extraction-slack-hypothesis},
\eqref{eq: extraction-exceptional-rounds}, and the upper bound on $c_{i,t}^F$.
To lower bound the right-hand side by $(1-\gamma)\rho_i$, we will require
\begin{equation}
\label{eq: extraction-slack-condition}
\frac{2\delta_3}{\gamma\rho_i}
+\frac{\delta_2}{\iota^2}
\frac{\int v_i\,\dd F(\vec v)}{\rho_i}
+\iota(1-\gamma)\leq\gamma.
\end{equation}
Therefore,  \eqref{eq: extraction-slack-condition} will imply (d) by contradiction.
\end{itemize}

\paragraph{Fixing the parameters.}

We now take $\iota=\gamma/2$ and use the bounds on
$\delta_1,\delta_2,\delta_3$ in the lemma statement. We show that this satisfies all 
inequality requirements for the above conditions (a)--(d).
For condition (a), we apply Bernoulli's inequality to obtain
$(1+\iota)^{-2}\geq1-2\iota=1-\gamma$.

From the bound on $\delta_2$, we obtain, for all $i$,
\begin{align*}
\frac{\delta_2}{\iota^2}
\frac{\int v_i\,\dd F(\vec v)}{\rho_i}
&\leq\frac{4\gamma}{9}
\frac{\int v_i\,\dd F(\vec v)/\rho_i}
{1+\max_j\int v_j\,\dd F(\vec v)/\rho_j}\leq\frac{4\gamma}{9}.
\end{align*}
We can then check \eqref{eq: extraction-budget-condition} as follows,
\begin{align*}
(1+\iota)(1+\delta_1)
+\frac{\delta_2}{\iota^2}
\frac{\int v_i\,\dd F(\vec v)}{\rho_i}
&\leq\left(1+\frac{\gamma}{2}\right)
\left(1+\frac{\gamma^3}{36}\right)+\frac{4\gamma}{9}\\
&=1+\frac{17\gamma}{18}+\frac{\gamma^3}{36}
+\frac{\gamma^4}{72}\\
&\leq1+\frac{71\gamma}{72}\leq1+\gamma,
\end{align*}
Finally, for \eqref{eq: extraction-slack-condition}, we have
\begin{align*}
\frac{2\delta_3}{\gamma\rho_i}
+\frac{\delta_2}{\iota^2}
\frac{\int v_i\,\dd F(\vec v)}{\rho_i}
+\iota(1-\gamma)
&\leq\frac{\gamma^2}{18}+\frac{4\gamma}{9}
+\frac{\gamma}{2}(1-\gamma)\\
&=\frac{17\gamma}{18}-\frac{4\gamma^2}{9}
\leq\gamma,
\end{align*}
We conclude that $(\vec\alpha',(\hat x_i')_{i\in[n]})$ satisfies all four
conditions of a $\gamma$-approximate fluid SPPE.

\end{proof}

\subsection{Proof of \Cref{lemma: fluid-hardness}}

\fluidhardness*

\begin{proof}
We use the following definition of approximate (discrete) SPPE from
\citet{liu2026tight}.

\begin{definition}
\label{def: Liu-approximate-SPPE}
For $\lambda_1,\lambda_2,\lambda_3\in[0,1)$, a
$(\lambda_1,\lambda_2,\lambda_3)$-approximate SPPE for $m$ items with value vectors
$\vec v_j=(v_{1j},\ldots,v_{nj})$, $j\in[m]$, and budgets
$B_i>0$, $i\in[n]$, is a tuple
$(\vec\alpha,(x_{ij})_{i\in[n],j\in[m]})$, with
$\vec\alpha\in[0,1]^n$ and $x_{ij}\geq0$ that satisfy the following, 
\begin{enumerate}
\item[(a)] Allocation to near highest bidders: $x_{ij}>0\implies
\alpha_iv_{ij}\geq(1-\lambda_1)h(\vec v_j;\vec\alpha)$ for all
$i\in[n]$ and $j\in[m]$.
\item[(b)] Market clearance: $\sumi x_{ij}\leq1$ and
$h(\vec v_j;\vec\alpha)>0\implies\sumi x_{ij}=1$ for all $j\in[m]$.
\item[(c)] Budget feasibility:
$\sum_{j=1}^m x_{ij}p(\vec v_j;\vec\alpha)\leq B_i$
for all $i\in[n]$.
\item[(d)] Approximate no unnecessary pacing:
$\sum_{j=1}^m x_{ij}p(\vec v_j;\vec\alpha)
<(1-\lambda_2)B_i\implies\alpha_i\geq1-\lambda_3$
for all $i\in[n]$.
\end{enumerate}
\end{definition}

We claim that hardness is preserved after allowing approximations on budget feasibility.
\begin{claim}
\label{clm: relaxed-discrete-SPPE-hardness}
For any fixed $\lambda\in(0,1)$, computing a
$(\lambda,\lambda,\lambda)$-approximate (discrete) SPPE with (c) in
\Cref{def: Liu-approximate-SPPE} relaxed to
$\sum_{j=1}^m x_{ij}p(\vec v_j;\vec\alpha)\leq(1+\lambda)B_i$
is still $\mathsf{PPAD}$-hard.
\end{claim}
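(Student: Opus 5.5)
The plan is to reduce from the $(\lambda',\lambda',\lambda')$-approximate SPPE problem of \Cref{def: Liu-approximate-SPPE}, which is \ppad-hard for some constant $\lambda'$ by \citet{liu2026tight} (constant inapproximability), by rescaling budgets. Given an instance with values $(\vec v_j)$ and budgets $(B_i)$, and a target $\lambda$, I will construct a new instance with the same items and budgets $B_i' = B_i/(1+\lambda)$. Then I will show that any $(\lambda,\lambda,\lambda)$-approximate SPPE with relaxed feasibility for the new instance is an exact-feasibility $(\lambda',\lambda',\lambda')$-approximate SPPE for the original instance, for a suitable $\lambda'$ depending only on $\lambda$. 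Since the map is trivially polynomial, hardness transfers. If the hard regime requires $\lambda'$ small, I will choose $\lambda$ small as a function of $\lambda'$. The statement is for every fixed $\lambda$, and the approximation notions are monotone in $\lambda$, so it suffices to show that every sufficiently small $\lambda$ is hard. Conversely, for large $\lambda$ I need hardness for the given $\lambda$ directly, which is where the constant inapproximability result is used.

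The verification goes condition by condition. Conditions (a) and (b) depend only on $\vec\alpha$ and $x$, not on budgets, so they carry over unchanged with $\lambda_1=\lambda$. For (c), the relaxed feasibility gives $\sum_j x_{ij}p_j \le (1+\lambda)B_i' = B_i$, which is exact feasibility in the original instance. For (d), if $\sum_j x_{ij}p_j < (1-\lambda_2')B_i$ with $(1-\lambda_2')=(1-\lambda)/(1+\lambda)$, then the spending is below $(1-\lambda)B_i'$. So the new instance's (d) yields $\alpha_i\ge 1-\lambda$. The original-instance parameters are therefore $\lambda_1=\lambda_3=\lambda$ and $\lambda_2 = 2\lambda/(1+\lambda)\le 2\lambda$.

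The main obstacle is matching parameters to the regime in which \citet{liu2026tight} proves hardness. Their result is stated for some (or all) constant triples, and I must check that $(\lambda,2\lambda,\lambda)$, or a dominating triple, is covered. If their theorem gives hardness for all constants in $(0,1)$, I take $\lambda'=2\lambda$ when $2\lambda<1$. For $\lambda\ge 1/2$, I instead shrink budgets by a factor that avoids the doubling: choosing $B_i'=B_i/(1+\lambda)$ still gives $\lambda_2=2\lambda/(1+\lambda)<1$, which is a valid constant. If instead their hardness holds only for small constants, I use monotonicity, since a looser approximate solution is hard whenever a looser one is. That means I must argue that hardness at the given fixed $\lambda$ follows from their tight constant-factor result, and I expect citing the exact parameter range of \citet{liu2026tight} to be the delicate point. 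A secondary check is that budget rescaling keeps the instance within the hard family, e.g.\ that budgets remain positive rationals of polynomial bit length, which is immediate.
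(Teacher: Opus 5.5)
Your reduction is the paper's: shrink every budget by the factor $1+\lambda$ and reduce from \citet[Theorem A.2]{liu2026tight} with the triple $(\lambda, 2\lambda/(1+\lambda), \lambda)$, which that theorem covers for every fixed $\lambda\in(0,1)$, so your case split on $\lambda<1/2$ is unnecessary. One correction: monotonicity runs the other way (hardness at a looser tolerance implies hardness at any tighter one), so hardness for all sufficiently small $\lambda$ would \emph{not} give the claim for every fixed $\lambda$, and your ``small constants only'' fallback would fail; this is why the direct appeal to the any-constant result is needed.
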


\begin{proof}[Proof of the claim]
We reduce from the $\mathsf{PPAD}$-hard problem in \citet[Theorem A.2]{liu2026tight} with
$(\lambda_1,\lambda_2,\lambda_3)=(\lambda,2\lambda/(1+\lambda),\lambda)$.
Given an instance of this problem with budgets $B_i'$, we construct
a new instance with the same bidders, items, and valuations, and
budgets $B_i=B_i'/(1+\lambda)$.
For a solution to the budget-relaxed problem, we have
\begin{align*}
\sum_{j=1}^m x_{ij}p(\vec v_j;\vec\alpha)
&\leq(1+\lambda)B_i=B_i',\\
(1-\lambda)B_i
&=\frac{1-\lambda}{1+\lambda}B_i'
=\left(1-\frac{2\lambda}{1+\lambda}\right)B_i'.
\end{align*}
It is then easy to check the conditions (a)--(d) for a $(\lambda_1, \lambda_2, \lambda_3)$-approximate equilibrium for the unrelaxed notion. The reduction can clearly be done in polynomial time.
\end{proof}

In the rest of the proof, we will call an equilibrium in
\Cref{clm: relaxed-discrete-SPPE-hardness}
a $\lambda$-approximate SPPE.
We fix $\lambda\in(0,1)$ and consider an instance of this discrete problem, with $m$ items with value vectors $(\vec v_1, \cdots, \vec v_m)$. We construct the fluid by setting the budget as $\rho_i = B_i/m$, and the fluid distribution as
$F=(1-\eta)F_1+\eta F_0$, where $F_1=\frac1m\sum_{j=1}^m\delta_{\vec v_j}$ is the empirical distribution on the $m$ items; $\delta_{\vec v_j}$ is the Dirac probability measure at $\vec v_j$.

We next show that, given a $\gamma$-approximate fluid SPPE for $F$, by restricting 
to the $m$ items one can extract a $\lambda$-approximate (discrete) SPPE
whenever $\eta,\gamma\in(0,1)$ satisfy
\begin{equation}
\label{eq: mixture-hardness-margins}
\gamma+(1+\lambda)\eta\leq\lambda,\qquad
\gamma+\frac{\eta m\vmax}{n\min_i B_i}\leq\lambda.
\end{equation}

Concretely, given a $\gamma$-approximate fluid SPPE
$(\vec\alpha',(x_i'(\vec v))_{i\in[n]})$,
we construct the discrete solution as
$\alpha_i=\alpha_i'$ and $x_{ij}=x_i'(\vec v_j)$.
On each value point $\vec v_j, j\in [m]$, both solutions share the same bids from all bidders and the same allocation. We then have for each bidder $i\in[n]$,
\begin{equation}
\label{eq: mixture-spending-identity}
\begin{aligned}
\underbrace{\int x_i'(\vec v)p(\vec v;\vec\alpha')\,\dd F(\vec v)}_{\text{spending in the fluid market}}
=&\frac{1-\eta}{m}\cdot \underbrace{\sum_{j=1}^m x_{ij}p(\vec v_j;\vec\alpha)}_{\text{spending in the discrete market}}+\eta\int x_i'(\vec v)p(\vec v;\vec\alpha')\,\dd F_0(\vec v).
\end{aligned}
\end{equation}

We bound the last term in \eqref{eq: mixture-spending-identity} as follows. Notice that in $F_0$ bidder $i$ has nonzero value with probability $2/n$ and has conditional mean value $\vmax / 2$. Since the price is at most $v_i$ when bidder $i$ receives positive allocation, we have
\begin{equation}
\label{eq: regularizer-payment-bound}
\int x_i'(\vec v)p(\vec v;\vec\alpha')\,\dd F_0(\vec v)
\leq \int v_i \dd F_0(\vec v) = \frac{2}{n} \cdot \frac{\vmax}{2} = \frac{\vmax}{n}.
\end{equation}

We next show that $(\vec \alpha, (x_{ij})_{i\in [n], j\in [m]})$ indeed constitutes a $\lambda$-approximate SPPE for the (discrete) market. 

\begin{itemize}
    \item [(a)] \textit{Allocation to the near-highest bidder.} By \eqref{eq: mixture-hardness-margins}, we have $\gamma<\lambda$. For $x_{ij}>0$, by our construction, we have
$
\alpha_i v_{ij}
=\alpha_i' v_{ij}
\geq(1-\gamma)h(\vec v_j;\vec\alpha')\geq(1-\lambda)h(\vec v_j;\vec\alpha).$
    \item [(b)] \textit{Market clearance.} Obvious.
    \item [(c)] \textit{Approximate budget feasibility.} By the decomposition in \eqref{eq: mixture-spending-identity},
\begin{align*}
\sum_{j=1}^m x_{ij}p(\vec v_j;\vec\alpha)
&\leq\frac{m}{1-\eta}
\int x_i'(\vec v)p(\vec v;\vec\alpha')\,\dd F(\vec v)\\
&\leq\frac{1+\gamma}{1-\eta}B_i
\leq(1+\lambda)B_i,
\end{align*}
where in the last inequality we use the first bound in
\eqref{eq: mixture-hardness-margins}.
    \item [(d)] \textit{Approximate no unnecessary pacing.} For a bidder $i\in[n]$ with $\sum_{j=1}^m x_{ij}p(\vec v_j;\vec\alpha)<(1-\lambda)B_i$, by \eqref{eq: mixture-spending-identity} and
\eqref{eq: regularizer-payment-bound},
\begin{align*}
\int x_i'(\vec v)p(\vec v;\vec\alpha')\,\dd F(\vec v)
&\leq\frac{1-\eta}{m}\sum_{j=1}^m x_{ij}p(\vec v_j;\vec\alpha)
+\frac{\eta \vmax}{n}\\
&<\frac{1-\lambda}{m} \cdot B_i+\frac{\eta \vmax}{n}\\
&\leq\left(\frac{1-\lambda}{m}+
\frac{\eta \vmax}{n\min_k B_k}\right)B_i
\leq \frac{1-\gamma}{m}B_i,
\end{align*}
where in the last step we use the second inequality in \eqref{eq: mixture-hardness-margins}. By the approximate no unnecessary pacing property of the fluid SPPE solution, we know that $\alpha_i = \alpha_i' \geq 1-\gamma \geq 1-\lambda$.
\end{itemize}

Thus the constructed solution is a $\lambda$-approximate discrete SPPE.
The reduction construction can clearly be done in polynomially many operations. 

It only remains for us to argue that the factor $m\vmax / (n \min_{i\in [n]} B_i)$ in \eqref{eq: mixture-hardness-margins} is a constant for a hard instance of the discrete problem. Indeed, for the hard instance in \citet[Appendix A.1]{liu2026tight}, after dividing the budgets as in \Cref{clm: relaxed-discrete-SPPE-hardness},
the ratio $m\vmax/(n\min_{i\in [n]} B_i)$ is bounded by a constant that only depends on $\lambda$.
We can therefore fix $\gamma=\lambda/4$ and choose a sufficiently
small $\eta>0$ independent of the instance, so that
\eqref{eq: mixture-hardness-margins} holds for the hard instance family.

\paragraph{Useful bounds for the hard family.} We provide parameters about the 
(discrete) hard market instances, which will be useful in
the proof of \Cref{thm: fixed-polynomial-regret}. We refer the readers to \citet[Appendix A.1]{liu2026tight} for details of the hardness construction.

In the hard instances, $k = \lceil8192(1+\lambda)^2/(1-\lambda)^3\rceil+1$ is the number of items created for each directed edge of the circuit used in the reduction. In \citet{liu2026tight}, 
all valuations are bounded by $\vmax' = \frac{1024(1+\lambda)^2}{k(1-\lambda)^4}\left(\frac{4}{1-\lambda}\right)^{k-1}$, and all budgets are lower-bounded by $b' = \frac{(1-\lambda)^2}{32(1+\lambda)}$.
We scale all values and budgets uniformly by the same multiplicative factor so that they all lie in $[0, 1]$. After also accounting for the budget rescaling in \Cref{clm: relaxed-discrete-SPPE-hardness}, the budgets of all bidders are bounded below by $b = \frac{b'}{(1+\lambda)\vmax'}$, which depends only on $\lambda$. We therefore have $\vmax = 1, B_i \in [b, 1]$ in our discrete hard market instances. 

Another useful property is that the number of items is linear in the number of bidders. By the structure of the construction in \citet{liu2026tight}, each bidder $i\in [n]$ has positive values on at most $3k$ items, and each item $j\in [m]$ receives positive valuation from exactly $2$ bidders. Therefore, we have $2m \leq 3kn$.

\end{proof}

\subsection{Proof of \Cref{lem: uniform-fluid-stochastic-transfer}}

\fluidstochastictransfer*

\begin{algorithm}[H]
\caption{Simulation of $\mathcal A^F$ from $\mathcal A$}
\label{alg: stochastic-to-fluid-simulation}
\KwIn{The inputs to the fluid bidding process, the stochastic dynamic $\mathcal A$.}
Initialize $\mathcal A$ with the same auction parameters\;
\For{$t=1,\ldots,T$}{
  Choose the fluid multiplier $\vec\alpha_t$ to be the same as in $\mathcal{A}$.
  
  \For{each $\vec v$}{
    Set the allocation $x_{i,t}(\vec v)$ equal to what $\mathcal{A}$ would choose at step $t$ with value $\vec v$ and pacing multiplier $\vec \alpha_t$.
  } 
  Sample $\vec v_t\sim F$ independently and simulate the $t$'th step of $\mathcal{A}$. 
}
\end{algorithm}

\begin{proof}
We construct $\mathcal{A}^F$ by simulating a run of the stochastic bidding dynamic $\mathcal{A}$, and 
use its pacing multipliers as the fluid pacing multipliers; see \Cref{alg: stochastic-to-fluid-simulation} for details.

Let $\mathcal H_t$ be the history up to step $t$ before
$\vec v_t$ is generated.
Conditional on $\mathcal H_t$, the fluid allocation functions are fixed. Since $\vec v_t\sim F$, we have
\begin{align*}
\E[c_{i,t}\mid\mathcal H_t]
&=\int x_{i,t}(\vec v)p(\vec v;\vec\alpha_t)\,\dd F(\vec v)
=c_{i,t}^F,\\
\E[f_{i,t}(x_{i,t})\mid\mathcal H_t]
&=f_{i,t}^F(x_{i,t}).
\end{align*}
Therefore, we have
\begin{align*}
\E\left[\left(\sumt(c_{i,t}-c_{i,t}^F)\right)^2\right]
&=\sumt\E[(c_{i,t}-c_{i,t}^F)^2]
\leq T\vmax^2,\\
\E\left[\left(\sumt
\bigl(f_{i,t}(x_{i,t})-f_{i,t}^F(x_{i,t})\bigr)\right)^2\right]
&=\sumt\E\left[
\bigl(f_{i,t}(x_{i,t})-f_{i,t}^F(x_{i,t})\bigr)^2\right]
\leq T\vmax^2.
\end{align*}
Applying the Cauchy-Schwarz inequality, 
\begin{equation}
\label{eq: utility-payment-martingale}
\begin{aligned}
\E\left[\left|\sumt(c_{i,t}-c_{i,t}^F)\right|\right]
\leq\vmax\sqrt T, \quad
\E\left[\left|\sumt
\bigl(f_{i,t}(x_{i,t})-f_{i,t}^F(x_{i,t})\bigr)\right|\right]
\leq\vmax\sqrt T.
\end{aligned}
\end{equation}
By the inequality $(a+b)_+\leq a_++|b|$ for all $a, b \in \mathbb{R}$, we get
\begin{align*}
\E\left[\left(\sumt c_{i,t}^F-T\rho_i\right)_+\right]
&\leq\E\left[\left(\sumt c_{i,t}-T\rho_i\right)_+\right]
+\E\left[\left|\sumt(c_{i,t}^F-c_{i,t})\right|\right]\\
&\leq\E\left[\left(\sumt c_{i,t}-T\rho_i\right)_+\right]
+\vmax\sqrt T,
\end{align*}
which proves the second inequality in \Cref{lem: uniform-fluid-stochastic-transfer}.

We next bound the difference between the two hindsight benchmarks through their duals. 
Define the sum of cumulative dual differences as follows,
\begin{equation*}
Z_{i,t}(\mu)=\sum_{s=1}^t
\bigl(D_{i,s}(\mu;\vec v_s,\vec\alpha_{-i,s})
-D_{i,s}^F(\mu;\vec\alpha_{-i,s})\bigr),
\qquad Z_{i,0}(\mu)=0.
\end{equation*}
By a similar conditional probability argument, we have
$\E[Z_{i,t}(\mu)-Z_{i,t-1}(\mu)\mid\mathcal H_t]=0$, i.e., $Z_{i,t}(\mu)$ is a martingale. 
Since
$|Z_{i,t}(\mu)-Z_{i,t-1}(\mu)|\leq\vmax$, 
we apply the conditional Hoeffding bound. For any $s\in \mathbb{R}$, 
\begin{align*}
\E[e^{sZ_{i,t}(\mu)}]
&=\E\left[e^{sZ_{i,t-1}(\mu)}
\E\left[e^{s(Z_{i,t}(\mu)-Z_{i,t-1}(\mu))}
\mid\mathcal H_t\right]\right]\\
&\leq e^{s^2\vmax^2/2}\E[e^{sZ_{i,t-1}(\mu)}]
\leq e^{s^2t\vmax^2/2}.
\end{align*}

We take the grid discretization 
$\mathcal G=\{k/T:k=0,\ldots,\lfloor\muupper_i T\rfloor\}
\cup\{\muupper_i\}$.
Then $|\mathcal G|\leq\lceil\muupper_iT\rceil+1$ and any point in 
$[0,\muupper_i]$ is within $1/T$ distance of a grid point.

Letting $s=\sqrt{2\log(2|\mathcal G|)/(T\vmax^2)}$, we have
\begin{align*}
\E\left[\max_{\mu\in\mathcal G}|Z_{i,T}(\mu)|\right]
&\leq\frac1s\E\left[\log\sum_{\mu\in\mathcal G}
\left(e^{sZ_{i,T}(\mu)}+e^{-sZ_{i,T}(\mu)}\right)\right]\\
&\overset{\text{(a)}}{\leq}\frac1s\log\sum_{\mu\in\mathcal G}
\left(\E[e^{sZ_{i,T}(\mu)}]+\E[e^{-sZ_{i,T}(\mu)}]\right)\\
&\leq\frac{\log(2|\mathcal G|)}s+\frac{sT\vmax^2}{2}
=\vmax\sqrt{2T\log(2|\mathcal G|)},
\end{align*}
where step (a) is by Jensen's inequality.
Using the inequality $|a_+-b_+|\leq|a-b|$ for $a, b \in \mathbb{R}$, we get
\begin{align*}
|Z_{i,T}(\mu)-Z_{i,T}(\nu)|
&\leq\sumt\left(d_{i,t}
+\int\max_{j\ne i}\alpha_{j,t}v_j\,\dd F(\vec v)\right)|\mu-\nu| \leq2T\vmax|\mu-\nu|.
\end{align*}
For any $\mu\in[0,\muupper_i]$, choose $\nu\in\mathcal G$ with
$|\mu-\nu|\leq1/T$. Then
\begin{align*}
|Z_{i,T}(\mu)|
&\leq|Z_{i,T}(\nu)|+2T\vmax|\mu-\nu|\leq\max_{\nu\in\mathcal G}|Z_{i,T}(\nu)|+2\vmax.
\end{align*}
Taking the supremum and expectations, we obtain
\begin{equation}
\label{eq: explicit-uniform-dual-transfer}
\E\left[\sup_{\mu\in[0,\muupper_i]}|Z_{i,T}(\mu)|\right]
\leq\vmax\sqrt{2T\log\left(2(\lceil\muupper_i T\rceil+1)\right)}
+2\vmax.
\end{equation}

For $\mu\in[0,\muupper_i]$, we have
\begin{align*}
\sumt D_{i,t}(\mu;\vec v_t,\vec\alpha_{-i,t})
&=\sumt D_{i,t}^F(\mu;\vec\alpha_{-i,t})+Z_{i,T}(\mu)\\
&\leq\sumt D_{i,t}^F(\mu;\vec\alpha_{-i,t})
+\sup_{\nu\in[0,\muupper_i]}|Z_{i,T}(\nu)|.
\end{align*}
We minimize over $\mu$ and then interchange the sampled and fluid
objectives to bound the difference of their minima.
Using the $\vmax$ bound on both duality gaps, we obtain
\begin{equation*}
\left|\opt_i^F(\bm\alpha_{-i})-\opt_i(\bm v;\bm\alpha_{-i})\right|
\leq\sup_{\mu\in[0,\muupper_i]}|Z_{i,T}(\mu)|+2\vmax.
\end{equation*}
Combining \eqref{eq: utility-payment-martingale}
and \eqref{eq: explicit-uniform-dual-transfer},
\begin{align*}
&\E\left[\left|\regret_{i,\mathcal A^F}^F-\regret_{i,\mathcal A}\right|\right]\\
&\quad\leq
\E\left[\left|\opt_i^F(\bm\alpha_{-i})-\opt_i(\bm v;\bm\alpha_{-i})\right|\right]
+\E\left[\left|\sumt
\bigl(f_{i,t}^F(x_{i,t})-f_{i,t}(x_{i,t})\bigr)\right|\right]\\
&\quad\leq\E\left[\sup_{\mu\in[0,\muupper_i]}|Z_{i,T}(\mu)|\right]
+2\vmax+\vmax\sqrt T\\
&\quad\leq\vmax\left[
\sqrt{2T\log\left(2(\lceil\muupper_iT\rceil+1)\right)}
+\sqrt T+4\right].
\end{align*}
This proves the first inequality in \Cref{lem: uniform-fluid-stochastic-transfer}.
\end{proof}

\subsection{Discretization}
\label{app: SPPE-discretization}

In the proof of the main theorem, we will use a discretized version of the regularizer $F_0$, which 
can be efficiently encoded; see \Cref{def: F0-distribution-discretized}. 
We bound the discretization error (\Cref{lem: regularizer-grid-error}), and show that 
the discretized regularizer also provides strong-convexity-like properties (\Cref{lem: finite-dual-curvature}).
We will consider $\widetilde F=(1-\eta)F_1+\eta F_{0,K}$ where $F_{0, K}$ is the discrete regularizer.
We remark that \Cref{lemma: fluid-hardness} also applies when the regularizer
is $F_{0, K}$ in place of $F_0$. 

\begin{definition}
\label{def: F0-distribution-discretized}
For any integer $K\geq1$, we define $F_{0,K}$ by the construction in
\Cref{def: F0-distribution}, replacing the uniform distribution in
Step 2 with $\mathrm{Uniform}\{0,\vmax/K,\ldots,(K-1)\vmax/K\}$.
\end{definition}

\begin{lemma}
\label{lem: regularizer-grid-error}
For any $\mu\in[0,\muupper_i]$, it holds that
\begin{equation*}
\left|D_{i,t}^{F_0}(\mu;\vec\alpha_{-i,t})
      -D_{i,t}^{F_{0,K}}(\mu;\vec\alpha_{-i,t})\right|
\leq\xi_K:= 2\vmax(2+\max_j\muupper_j)/(nK).
\end{equation*}
\end{lemma}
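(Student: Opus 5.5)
Both dual losses share the linear term $\rho_i\mu$, so the difference is only in the integral part. Set
\[
g_\mu(\vec v)=\bigl(v_i-(1+\mu)\max_{j\ne i}\alpha_{j,t}v_j\bigr)_+ .
\]
Then
\[
D_{i,t}^{F_0}-D_{i,t}^{F_{0,K}}=\int g_\mu\,\dd F_0-\int g_\mu\,\dd F_{0,K}.
\]
I will couple the two regularizers. Both choose the pair $\{j_1,j_2\}$ the same way, so I use the same pair for both. I then draw $(v_{j_1},v_{j_2})$ uniformly on $[0,\vmax]^2$ and round each coordinate down to the grid $\{0,\vmax/K,\ldots,(K-1)\vmax/K\}$. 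This produces exactly $\mathrm{Uniform}$ on the grid, with coordinatewise error in $[0,\vmax/K)$. Under the coupling, the difference of integrals is at most $\E|g_\mu(\vec v)-g_\mu(\tilde{\vec v})|$.

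I then bound this expectation in two cases.

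\emph{Case 1: bidder $i$ is not in the chosen pair.} This happens with probability $1-2/n$. Then $v_i=\tilde v_i=0$, so $g_\mu=0$ in both cases and there is no error.

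\emph{Case 2: bidder $i$ is in the pair.} This happens with probability $2/n$. Let $j$ be the other chosen bidder. Only $v_i$ and $v_j$ are nonzero, so $g_\mu=(v_i-(1+\mu)\alpha_{j,t}v_j)_+$. Since $x\mapsto x_+$ is $1$-Lipschitz and $\alpha_{j,t}\le 1$, we get
\[
|g_\mu(\vec v)-g_\mu(\tilde{\vec v})|\le |v_i-\tilde v_i|+(1+\mu)|v_j-\tilde v_j|\le (2+\mu)\vmax/K .
\]

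Combining the two cases gives
\[
\E|g_\mu(\vec v)-g_\mu(\tilde{\vec v})|\le \frac{2}{n}(2+\mu)\frac{\vmax}{K}.
\]
Since $\mu\le\muupper_i\le\max_j\muupper_j$, this is at most $\xi_K=2\vmax(2+\max_j\muupper_j)/(nK)$, as claimed.

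The only step needing care is checking that the round-down coupling reproduces the discrete uniform marginal exactly. It does: each grid cell $[k\vmax/K,(k+1)\vmax/K)$ has probability $1/K$ under the uniform distribution. The coordinates remain independent because I round each one separately. The remaining steps are elementary.
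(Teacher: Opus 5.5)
Your proof is correct and follows the paper's argument: the same round-down coupling of $F_0$ with $F_{0,K}$ under a shared pair, zero contribution when $i$ is not selected, and the $1$-Lipschitz bound on $(\cdot)_+$ with $\alpha_{j,t}\leq 1$, giving $\frac{2}{n}(2+\mu)\vmax/K\leq\xi_K$. One detail: rounding $v_k=\vmax$ down gives $\vmax$, which is off the grid; the paper avoids this by writing $v_k'=(\vmax/K)\min\{\lfloor Kv_k/\vmax\rfloor,K-1\}$, though the event has probability zero so it does not affect your argument.
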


\begin{proof}
We sample $\vec v\sim F_0$ and set
$v_k'=(\vmax/K)\min\{\lfloor Kv_k/\vmax\rfloor,K-1\}$ for all $k$.
Then $\vec v'\sim F_{0,K}$ and $|v_k-v_k'|\leq\vmax/K$, so
\begin{align*}
&\left|D_{i,t}^{F_0}(\mu;\vec\alpha_{-i,t})
      -D_{i,t}^{F_{0,K}}(\mu;\vec\alpha_{-i,t})\right|\\
&\quad\overset{\text{(a)}}{\leq}\frac2n
\E\left[
|v_i-v_i'|+(1+\mu)
\left|\max_{j\ne i}\alpha_{j,t}v_j
      -\max_{j\ne i}\alpha_{j,t}v_j'\right|
\,\middle|\,i\text{ is selected}\right]\\
&\quad\leq\frac2n
\E\left[
|v_i-v_i'|+(1+\mu)\max_{j\ne i}\alpha_{j,t}|v_j-v_j'|
\,\middle|\,i\text{ is selected}\right]\\
&\quad\leq\frac2n\left[\frac{\vmax}{K}
+(1+\mu)\frac{\vmax}{K}\right]\\
&\quad=\frac{2\vmax(2+\mu)}{nK}
\leq\frac{2\vmax(2+\max_j\muupper_j)}{nK}=\xi_K,
\end{align*}
where in step (a) the affine terms cancel out, and we apply the inequality $|a_+-b_+|\leq|a-b|$ for any $a, b \in \mathbb{R}$.
\end{proof}

\begin{samepage}
\begin{lemma}
\label{lem: finite-dual-curvature}
For any $\mu_1,\mu_2\in[0,\muupper_i]$ with
$\mu_2\leq\max_k\int v_k\,\dd\widetilde F(\vec v)/\rho_k$,
and $g\in\partial D_{i,t}^{\widetilde F}(\mu_1;\vec\alpha_{-i,t})$, it holds that
\begin{equation*}
D_{i,t}^{\widetilde F}(\mu_2;\vec\alpha_{-i,t})
-D_{i,t}^{\widetilde F}(\mu_1;\vec\alpha_{-i,t})
-g(\mu_2-\mu_1)
\geq\frac{\widetilde\sigma}{4}(\mu_2-\mu_1)^2-4\eta\xi_K,
\end{equation*}
where $\widetilde\sigma=
\dfrac{\eta\vmax}
{6n\left(2+\max_k\frac{\int v_k\,\dd\widetilde F(\vec v)}{\rho_k}\right)
(1+\max_k\muupper_k)^2}$.
\end{lemma}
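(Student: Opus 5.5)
The plan is to split $D_{i,t}^{\widetilde F}=(1-\eta)\rho_i\mu+(1-\eta)\int(\cdot)_+\,\dd F_1+\eta D_{i,t}^{F_{0,K}}$, with the budget term distributed so that each piece carries its own linear part. All three pieces are convex in $\mu$. For the first two pieces, the Bregman-type quantity $D(\mu_2)-D(\mu_1)-g_{\text{piece}}(\mu_2-\mu_1)$ is nonnegative for any subgradient $g_{\text{piece}}$ of that piece at $\mu_1$. Since subdifferentials of sums of finite convex functions add, any $g\in\partial D_{i,t}^{\widetilde F}(\mu_1)$ decomposes as $g=g_1+\eta g_0$ with $g_0\in\partial D_{i,t}^{F_{0,K}}(\mu_1)$. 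It therefore suffices to show
\begin{equation*}
D^{F_{0,K}}(\mu_2)-D^{F_{0,K}}(\mu_1)-g_0(\mu_2-\mu_1)\geq \frac{\widetilde\sigma}{4\eta}(\mu_2-\mu_1)^2-4\xi_K .
\end{equation*}

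The next step transfers curvature from $F_0$ using \Cref{lem: regularizer-grid-error}. By \Cref{lem: strong-convexity-F0}, $D^{F_0}$ is $\sigma_0$-strongly convex and differentiable on $[0,\muupper_i]$. The difficulty is that $g_0$ is a subgradient of the \emph{discretized} function, not of $D^{F_0}$. To handle this, I would use a three-point argument. Let $\delta=\mu_2-\mu_1$ and pick an intermediate point $\nu=\mu_1+s\delta$ for a chosen $s\in(0,1)$, assuming w.l.o.g. the ordering; the case $\mu_2<\mu_1$ is symmetric. Convexity of $D^{F_{0,K}}$ gives $D^{F_{0,K}}(\nu)\geq D^{F_{0,K}}(\mu_1)+g_0 s\delta$. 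Combining with the $\xi_K$-closeness at the three points $\mu_1,\nu,\mu_2$ reduces the claim to a statement about $D^{F_0}$ alone, at the cost of several $\xi_K$ errors (hence the constant $4$):
\begin{equation*}
D^{F_0}(\mu_2)-D^{F_0}(\mu_1)-\tfrac{1}{s}\bigl(D^{F_0}(\nu)-D^{F_0}(\mu_1)\bigr)\geq c\,\sigma_0\delta^2 .
\end{equation*}
By strong convexity the left side is at least $\tfrac{\sigma_0}{2}(1-s)\delta^2$, via the standard inequality for a convex function minus a chord-slope. Taking $s=1/2$ gives a $\sigma_0/4$-type constant, though then the $\xi_K$ errors are multiplied by $1/s=2$. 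I would bookkeep them carefully, possibly using the cleaner route of comparing the secant slopes $(D(\nu)-D(\mu_1))/(s\delta)\ge g_0$, to land at $4\xi_K$.

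The main obstacle is reconciling the constant: $\widetilde\sigma$ involves $(1+\max_k\muupper_k)^2$ and the factor $2+\max_k\int v_k\,\dd\widetilde F/\rho_k$, rather than $\sigma_0$'s cube. This suggests the curvature should not come from the global $\sigma_0$ but from recomputing the second-derivative integral in the proof of \Cref{lem: strong-convexity-F0} on the range $\mu\le\mu_2\le\max_k\int v_k\,\dd\widetilde F/\rho_k$. There the lower cutoff on $d$ can be taken as $\vmax/(1+\mu_2)$ instead of $\vmax/(1+\max\muupper)$. I would redo that computation: the derivative difference is at least $\frac{2}{3n\vmax^2}\int$ over $d\le\vmax/(1+\mu_2)$, which loses only the factors appearing in $\widetilde\sigma$. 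Then I would integrate this derivative-gap bound along $[\mu_1,\mu_2]$, and finally choose $s$ and absorb numerical constants into the $1/6$ and $1/4$.
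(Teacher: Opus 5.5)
Your reduction is essentially the paper's. The $F_1$ part is convex, and the subgradient inequality at the midpoint ($s=1/2$) turns the left side into a midpoint gap. The $\xi_K$ errors at $\mu_1$, at $\mu_2$, and twice at the midpoint give exactly $4\eta\xi_K$. What remains is to show $D^{F_0}_{i,t}(\mu_1)+D^{F_0}_{i,t}(\mu_2)-2D^{F_0}_{i,t}(\frac{\mu_1+\mu_2}{2})\geq\frac{\widetilde\sigma}{4\eta}(\mu_2-\mu_1)^2$, and here your plan has a genuine gap.

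You assume $\mu_1\le\mu_2$ and call the other ordering symmetric, but the hypothesis is not symmetric. Only $\mu_2$ is bounded by $A:=\max_k\int v_k\,\dd\widetilde F(\vec v)/\rho_k$, while $\mu_1$ is an arbitrary point of $[0,\muupper_i]$. That ordering is exactly the one the lemma is built for. In the proof of \Cref{thm: fixed-polynomial-regret}, $\mu_1=\mu_{i,t}$ can be as large as $\max_k\muupper_k$, which grows linearly in $s$, while $\mu_2=\mu_i^{*,F'}\le H$.

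When $\mu_2<\mu_1$, the segment $[\mu_2,\mu_1]$ lies above $\mu_2$, so a curvature bound on the range $\mu\le\mu_2$ covers none of it. The best uniform pointwise curvature on that segment is only $\frac{2\vmax}{3n(1+\mu_1)^3}$ (take all $\alpha_{j,t}=1$). Multiplying a uniform constant by $(\mu_2-\mu_1)^2/4$ therefore gives only the cubic $\sigma_0$-type bound. That can fall short of $\frac{\widetilde\sigma}{4\eta}(\mu_2-\mu_1)^2$ by a factor of order $(2+A)/(1+\max_k\muupper_k)$.

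There is also a smaller slip. Your cutoff $d\le\vmax/(1+\mu_2)$ must be intersected with the support $d\le\alpha_{j,t}\vmax$ of the competing bid. This is repairable via $\alpha_{j,t}\ge 1/(1+\max_k\muupper_k)$.

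The missing idea is to keep the $\mu$-dependence of the curvature and integrate it over the segment, so that only the smaller endpoint matters. The paper does this in closed form. It uses
$\frac{\dd^2}{\dd\mu^2}D^{F_0}_{i,t}=\frac{2\vmax}{3n(n-1)}\sum_{j\ne i}\min\{\alpha_{j,t}^2,\alpha_{j,t}^{-1}(1+\mu)^{-3}\}\geq\frac{2\vmax}{3n(n-1)}\sum_{j\ne i}\alpha_{j,t}^2(1+\alpha_{j,t}(1+\mu))^{-3}$.
The right side is the second derivative of $\frac{\vmax}{3n(n-1)}\sum_{j\ne i}(1+\alpha_{j,t}(1+\mu))^{-1}$, whose midpoint gap is exactly
\begin{equation*}
\frac{\vmax(\mu_2-\mu_1)^2}{3n(n-1)}\sum_{j\ne i}\frac{\alpha_{j,t}^2}{(1+\alpha_{j,t}(1+\mu_1))(1+\alpha_{j,t}(1+\mu_2))(2+\alpha_{j,t}(2+\mu_1+\mu_2))}.
\end{equation*}
In either ordering, the denominator is at most $2(2+\mu_2)(1+\alpha_{j,t}(1+\max_k\muupper_k))^2$. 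Only one factor has to be bounded through $\mu_2$ using $\alpha_{j,t}\le 1$; the other two are controlled by $\max_k\muupper_k$. Then $\alpha_{j,t}\ge 1/(1+\max_k\muupper_k)$ and $\mu_2\le A$ give exactly $\frac{\widetilde\sigma}{4\eta}(\mu_2-\mu_1)^2$.

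Within your own framework, the equivalent fix is to integrate the $\mu$-dependent bound $\alpha_{j,t}^2(1+\alpha_{j,t}(1+\mu))^{-3}$ against the tent kernel of the midpoint gap. Do not replace it by its minimum over the segment.
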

\end{samepage}

\begin{proof}
As in the proof of \Cref{lem: strong-convexity-F0}, we condition on
the event that $\{i,j\}$ is the pair of bidders with nonzero values, and integrate over bidder $i$'s value. We have
\begin{align*}
\frac{\dd^2}{\dd\mu^2}D_{i,t}^{F_0}(\mu;\vec\alpha_{-i,t})
&=\frac{2}{n(n-1)}\sum_{j\ne i}
\frac1{\alpha_{j,t}\vmax^2}
\int_0^{\min\{\alpha_{j,t}\vmax,\vmax/(1+\mu)\}}d^2\,\dd d\\
&=\frac{2\vmax}{3n(n-1)}\sum_{j\ne i}
\min\left\{\alpha_{j,t}^2,
\frac1{\alpha_{j,t}(1+\mu)^3}\right\}\\
&\geq\frac{2\vmax}{3n(n-1)}\sum_{j\ne i}
\frac{\alpha_{j,t}^2}{(1+\alpha_{j,t}(1+\mu))^3}\\
&=\frac{\dd^2}{\dd\mu^2}
\left[\frac{\vmax}{3n(n-1)}
\sum_{j\ne i}\frac1{1+\alpha_{j,t}(1+\mu)}\right].
\end{align*}
Multiplying both sides by $\eta$, and applying the convexity of $D_{i,t}^{F_1}$,
we know that the following function is strongly convex:

\begin{equation*}
D_{i,t}^F(\mu;\vec\alpha_{-i,t})
-\frac{\eta\vmax}{3n(n-1)}
\sum_{j\ne i}\frac1{1+\alpha_{j,t}(1+\mu)}
\end{equation*}
By the basic property of convex functions, 
\begin{align*}
&\frac{D_{i,t}^F(\mu_1;\vec\alpha_{-i,t})
       +D_{i,t}^F(\mu_2;\vec\alpha_{-i,t})}{2}
-D_{i,t}^F\left(\frac{\mu_1+\mu_2}{2};\vec\alpha_{-i,t}\right)\\
&\quad\geq\frac{\eta\vmax}{6n(n-1)}\sum_{j\ne i}
\left[
\frac1{1+\alpha_{j,t}(1+\mu_1)}
+\frac1{1+\alpha_{j,t}(1+\mu_2)}
-\frac4{2+\alpha_{j,t}(2+\mu_1+\mu_2)}
\right]\\
&\quad=\frac{\eta\vmax(\mu_2-\mu_1)^2}{6n(n-1)}
\sum_{j\ne i}
\frac{\alpha_{j,t}^2}
{(1+\alpha_{j,t}(1+\mu_1))(1+\alpha_{j,t}(1+\mu_2))
 (2+\alpha_{j,t}(2+\mu_1+\mu_2))}\\
&\quad\overset{\text{(a)}}{\geq}
\frac{\eta\vmax(\mu_2-\mu_1)^2}{12n(n-1)(2+\mu_2)}
\sum_{j\ne i}
\frac{\alpha_{j,t}^2}{(1+\alpha_{j,t}(1+\max_k\muupper_k))^2}\\
&\quad\overset{\text{(b)}}{\geq}
\frac{\eta\vmax(\mu_2-\mu_1)^2}
{48n(2+\mu_2)(1+\max_k\muupper_k)^2}\\
&\quad\overset{\text{(c)}}{\geq}
\frac{\widetilde\sigma}{8}(\mu_2-\mu_1)^2,
\end{align*}
where in steps (a) and (b) we apply various bounded conditions of the multiplier
and the dual variable, and in step (c) we use the fact that
$\mu_2\leq\max_k\int v_k\,\dd\widetilde F(\vec v)/\rho_k$.

By \Cref{lem: regularizer-grid-error}, we have
$|D_{i,t}^{\widetilde F}(\mu;\vec\alpha_{-i,t})
-D_{i,t}^F(\mu;\vec\alpha_{-i,t})|\leq\eta\xi_K$
on $[0,\muupper_i]$.
Therefore, 
\begin{align*}
&D_{i,t}^{\widetilde F}(\mu_2;\vec\alpha_{-i,t})
-D_{i,t}^{\widetilde F}(\mu_1;\vec\alpha_{-i,t})
-g(\mu_2-\mu_1)\\
&\quad\geq
D_{i,t}^{\widetilde F}(\mu_1;\vec\alpha_{-i,t})
+D_{i,t}^{\widetilde F}(\mu_2;\vec\alpha_{-i,t})
-2D_{i,t}^{\widetilde F}
\left(\frac{\mu_1+\mu_2}{2};\vec\alpha_{-i,t}\right)\\
&\quad\geq
D_{i,t}^F(\mu_1;\vec\alpha_{-i,t})
+D_{i,t}^F(\mu_2;\vec\alpha_{-i,t})
-2D_{i,t}^F\left(\frac{\mu_1+\mu_2}{2};\vec\alpha_{-i,t}\right)
-4\eta\xi_K\\
&\quad\geq\frac{\widetilde\sigma}{4}(\mu_2-\mu_1)^2-4\eta\xi_K.
\end{align*}
\end{proof}

\subsection{Proof of \Cref{thm: fixed-polynomial-regret}}

\budgetimpossibility*

\begin{proof}
In this proof, we use the SPPE hard instances and constants $\lambda,\eta,\gamma,k,b>0$
from the proof of \Cref{lemma: fluid-hardness}. 
For notational simplicity, we denote $H=12\eta^{-1}(2+3k/b)(3k/b)^2$ and 
$C\geq864H(1+H)^2/\gamma^3$; these parameters are constants independent of $n, T,$ and $\varepsilon$.
For the sake of contradiction, suppose both expectations in
the theorem statement are at most $T/(Cn^\varepsilon)$
for all bidders on every instance with $n\geq N_0$ and $T=\lceil q(n)\rceil$.

\paragraph{Construction of the fluid bidding instance.}
Consider the hard instance of SPPE from the proof of \Cref{lemma: fluid-hardness} with $s$ bidders. For $r=\max\{9,\lceil4/\varepsilon\rceil\}$,
we construct a fluid instance with $n=s^r$ bidders by adding $n-s$ auxiliary bidders who form a disjoint market component.
Specifically, let $\mathcal I=\{1,\ldots,s\}$ be the set of original bidders from the hard instance, and $\widetilde{F}$ be defined as in \Cref{def: F0-distribution-discretized} with $K=\lceil Cs^4\rceil$. To construct the joint value distribution $\widehat F$ of $n$ bidders,
we draw $(\widetilde v_i)_{i\in\mathcal I}\sim\widetilde F$ and
independently choose one of the following two cases:
\begin{equation}
v_i=\begin{cases}
\widetilde v_i, & i\in\mathcal I,\\
0, & i\notin\mathcal I,
\end{cases}
\quad\text{w.p. }1/2,
\qquad
v_i=\begin{cases}
0, & i\in\mathcal I,\\
1, & i\notin\mathcal I,
\end{cases}
\quad\text{w.p. }1/2.
\end{equation}
Intuitively, each of the two disjoint components is active with probability $1/2$. To construct the budgets, we set $\rho_i = 1/8$
for $i \notin \mathcal{I}$, and $\rho_i = B_i /(2m)$ for $i\in\mathcal{I}$, where we halve the per-round budgets used in \Cref{lemma: fluid-hardness} since
the corresponding arrivals now occur with probability $1/2$.

Given a fluid bidding process under $\widehat{F}$, we can construct a fluid bidding process for the $s$ bidders in $\mathcal{I}$ by restriction. Concretely, we keep all per-round budgets, pacing multipliers, and allocations except at value $\vec 0$ for bidders in $\mathcal{I}$. The resulting value distribution on $\mathcal{I}$ is then $F'=\frac12\widetilde F+\frac12\delta_{\vec0}$. Since bidders in $\mathcal{I}$ do not have positive valuations on the other market component, we have $u_{i,t}^{F'}=u_{i,t}^{\widehat F}$ and
$c_{i,t}^{F'}=c_{i,t}^{\widehat F}$. Consequently, 
\begin{equation*}
\regret_i^{F'}=\regret_i^{\widehat F},\qquad
\left(\sumt c_{i,t}^{F'}-T\rho_i\right)_+
=\left(\sumt c_{i,t}^{\widehat F}-T\rho_i\right)_+
\quad(i\in\mathcal I).
\end{equation*}

\paragraph{Low regret of the $s$-bidder fluid process.}
We show that if an $n$-bidder stochastic bidding process has low regret (and budget violation) in expectation, then the $s$-bidder fluid bidding process will also inherit these properties. We first bound the parameters using the constants from the proof of \Cref{lemma: fluid-hardness},
\begin{align*}
\rho_i
&=\frac{B_i}{2m}\geq\frac{b}{3ks}, \quad 
1+\max_{i\in\mathcal I}\muupper_i
=\frac1{\min_{i\in\mathcal I}\rho_i}\leq\frac{3k}{b}s,\\
\frac{\int v_i\,\dd F'(\vec v)}{\rho_i}
&=(1-\eta)\frac{\sum_{j=1}^m v_{ij}}{B_i}
+\eta\frac{m(K-1)}{sKB_i}
\leq(1-\eta)\frac{3k}{b}+\eta\frac{3k}{b}
=\frac{3k}{b}.
\end{align*}
To apply \Cref{lem: regularizer-grid-error,lem: finite-dual-curvature},
we express $F'$ as a mixture of distributions with regularizer weight $\eta/2$:
\begin{equation*}
F'=\frac{1-\eta}{2}F_1+\frac12\delta_{\vec0}+\frac\eta2F_{0,K}
=\left(1-\frac\eta2\right)
\frac{(1-\eta)F_1+\delta_{\vec0}}{2-\eta}+\frac\eta2F_{0,K}.
\end{equation*}
Substituting $s$ for the number of bidders and $\eta/2$ for the
regularizer weight in \Cref{lem: regularizer-grid-error,lem: finite-dual-curvature}, we have
\begin{equation*}
\widetilde\sigma\geq\frac{\eta}{12(2+3k/b)(3k/b)^2s^3},
\qquad
\xi_K\leq\frac{2\cdot (1+3k/b)}{K}.
\end{equation*}
Since $3k/b\geq1$ and $\eta\in(0,1)$, we have
$H\geq\max\{3k/b,2\eta(1+3k/b)\}$. We can then check that the following parameter bounds are all constants,
\begin{equation}
\label{eq: padded-instance-scales}
\begin{aligned}
\min_{i\in\mathcal I}\rho_i\geq\frac1{Hs}, \ \ 
\max_{i\in\mathcal I}\frac{\int v_i\, \ \ \dd F'(\vec v)}{\rho_i}\leq H, \  \ 
\max_{i\in\mathcal I}\muupper_i\leq Hs, \  \ 
\widetilde\sigma\geq\frac1{Hs^3}, \ \ 
\xi_K\leq\frac{2H}{\eta K}.
\end{aligned}
\end{equation}

We now show bounds on the expected regret and budget violation of
the fluid process under $F'$.
Since $n^\varepsilon=s^{r\varepsilon}\geq s^4$, by our assumption for contradiction, we have for all $i\in \mathcal{I}$, 
\begin{equation*}
    \mathbb{E}_{\vec v_t \sim \widehat F}[\regret_{i, \mathcal{A}} (\bm v) ] \leq \frac{T}{Cs^4} , \ \ \text{and} \ \ 
    \mathbb{E}_{\vec v_t \sim \widehat F} \left[\left(\sumt c_{i,t} - T\rho_i\right)_+\right] \leq  \frac{T}{Cs^4}. 
\end{equation*}

In the stochastic setting, we first apply
\Cref{lem: uniform-fluid-stochastic-transfer} to the $n$-bidder
process under $\widehat F$ and then restrict to $\mathcal I$ similarly.
For a bidder $i\in\mathcal I$, either fluid expectation is at most its
stochastic counterpart plus
$\sqrt{2T\log(2(\lceil\muupper_iT\rceil+1))}+\sqrt T+4$.
We have $T=\lceil q(s^r)\rceil=\Omega(s^9)$ and $\log T=O(\log s)$,
because $r$ and $q$ are fixed. Therefore, 
by \eqref{eq: padded-instance-scales},
\begin{align*}
&\frac{\sqrt{2T\log(2(\lceil\muupper_iT\rceil+1))}
+\sqrt T+4}{T/(Cs^4)}\\
&\quad\leq Cs^4\left[
\sqrt{\frac{2\log(2(HsT+2))}{T}}+\frac1{\sqrt T}+\frac4T
\right]\\
&\quad=O\left(s^{-1/2}\sqrt{\log s}+s^{-5}\right)
\longrightarrow0.
\end{align*}
We take $s$ sufficiently large that $n\geq N_0$, $T\geq Cs^4$.
The fluid process under $F'$ therefore satisfies
\begin{equation}
\label{eq: padded-fluid-premise}
\E[\regret_i^{F'}]\leq\frac{2T}{Cs^4},\qquad
\E\left[\left(\sumt c_{i,t}^{F'}-T\rho_i\right)_+\right]
\leq\frac{2T}{Cs^4}
\quad(i\in\mathcal I).
\end{equation}

\paragraph{Extracting SPPE from the fluid process.}
We next extract an $s$-bidder fluid SPPE from the fluid bidding process by checking the conditions of \eqref{eq: extraction-budget-hypothesis}--\eqref{eq: extraction-slack-hypothesis}, with probability at least $3/4$.
The following bound on the optimal fluid dual solution
$\mu_i^{*,F'}\in[0,\muupper_i]$ will be useful. By its optimality, 
\begin{align*}
T\rho_i\mu_i^{*,F'}
\leq\sumt D_{i,t}^{F'}(\mu_i^{*,F'};\vec\alpha_{-i,t})
\leq\sumt D_{i,t}^{F'}(0;\vec\alpha_{-i,t})
\leq T\int v_i\,\dd F'(\vec v)
\leq T\rho_i H.
\end{align*}
Thus $\mu_i^{*,F'}\leq\int v_i\,\dd F'(\vec v)/\rho_i\leq H$.

By the deterministic regret decomposition in \Cref{lem: regret-to-dual-regret},
\begin{equation}
\label{eq: padded-fluid-regret-identity}
\begin{aligned}
\regret_i^{F'}
={}&\sumt\bigl[
D_{i,t}^{F'}(\mu_i^{*,F'};\vec\alpha_{-i,t})
-D_{i,t}^{F'}(\mu_{i,t};\vec\alpha_{-i,t})\\
&\hspace{27mm}
-(\rho_i-c_{i,t}^{F'})(\mu_i^{*,F'}-\mu_{i,t})
\bigr]+\mu_i^{*,F'}
\left(T\rho_i-\sumt c_{i,t}^{F'}\right)
-\Delta_i^{F'},
\end{aligned}
\end{equation}
where $0\leq\Delta_i^{F'}\leq1$.
To bound the right-hand side as a whole, for notational simplicity we define 
\begin{equation*}
Y_i=\regret_i^{F'}
+(1+H)\left(\sumt c_{i,t}^{F'}-T\rho_i\right)_+
+1+2\eta\xi_KT.
\end{equation*}
Using $\mu_i^{*,F'}\leq H$ and $\Delta_i^{F'}\leq1$, we obtain
\begin{align*}
Y_i-\left(\sumt c_{i,t}^{F'}-T\rho_i\right)_+
&=\regret_i^{F'}
+H\left(\sumt c_{i,t}^{F'}-T\rho_i\right)_+
+1+2\eta\xi_KT\\
&\geq\regret_i^{F'}+\mu_i^{*,F'}
\left(\sumt c_{i,t}^{F'}-T\rho_i\right)
+\Delta_i^{F'}+2\eta\xi_KT\\
&\overset{\text{(a)}}{\geq}\frac{\widetilde\sigma}{4}
\sumt(\mu_{i,t}-\mu_i^{*,F'})^2\geq0.
\end{align*}
In step (a), we use \eqref{eq: padded-fluid-regret-identity} and apply
\Cref{lem: finite-dual-curvature} with $\mu_1=\mu_{i,t}$,
$\mu_2=\mu_i^{*,F'}$, and
$g=\rho_i-c_{i,t}^{F'}\in
\partial D_{i,t}^{F'}(\mu_{i,t};\vec\alpha_{-i,t})$.
In particular, $Y_i\geq0$.
By convexity, the bracketed terms in
\eqref{eq: padded-fluid-regret-identity} are nonnegative, so
\begin{align*}
\mu_i^{*,F'}\left(T\rho_i-\sumt c_{i,t}^{F'}\right)
&\leq\regret_i^{F'}+\Delta_i^{F'}
\leq\regret_i^{F'}+1
\leq Y_i.
\end{align*}

We choose $\delta_3=24(1+H)/(Cs^3)$,
$\delta_1=\delta_3/\min_{i\in\mathcal I}\rho_i$, and
$\delta_2=4\delta_3/\widetilde\sigma$.
We bound the expectation of $\sum_{i\in\mathcal I}Y_i$ as follows:
\begin{align*}
\frac1T\E\left[\sum_{i\in\mathcal I}Y_i\right]
&\overset{\text{(a)}}{\leq}s\left[(2+H)\frac{2}{Cs^4}
+\frac1T+2\eta\xi_K\right]\\
&\overset{\text{(b)}}{\leq}s\left[\frac{2(2+H)}{Cs^4}
+\frac1{Cs^4}+\frac{4H}{Cs^4}\right]=\frac{6H+5}{Cs^3}\leq\frac{\delta_3}{4},
\end{align*}
where step (a) is by \eqref{eq: padded-fluid-premise}, and step (b) is by \eqref{eq: padded-instance-scales}.
Since $Y_i\geq0$, we have by Markov's inequality that
$\Pr(\sum_{i\in\mathcal I}Y_i>T\delta_3)\leq1/4$.
On the event $\sum_{i\in\mathcal I}Y_i\leq T\delta_3$, we have
\begin{align*}
\frac1T\sumt(c_{i,t}^{F'}-\rho_i)
&\leq\frac1T\left(\sumt c_{i,t}^{F'}-T\rho_i\right)_+
\leq\frac{Y_i}{T}\leq\delta_3\leq\delta_1\rho_i,\\
\frac1T\sum_{i\in\mathcal I}\sumt(\mu_{i,t}-\mu_i^{*,F'})^2
&\leq\frac{4}{T\widetilde\sigma}\sum_{i\in\mathcal I}Y_i
\leq\frac{4\delta_3}{\widetilde\sigma}=\delta_2,\\
\mu_i^{*,F'}\left(\rho_i-\frac1T\sumt c_{i,t}^{F'}\right)
&\leq\frac{Y_i}{T}\leq\delta_3.
\end{align*}
Thus \eqref{eq: extraction-budget-hypothesis}--%
\eqref{eq: extraction-slack-hypothesis} hold with $\mu_i=\mu_i^{*,F'}$.
By \eqref{eq: padded-instance-scales} and
$C\geq864H(1+H)^2/\gamma^3$, we check \Cref{lem: low-regret-implies-approximate-SPPE}'s requirement on $\delta_1, \delta_2, \delta_3$ as follows,
\begin{align*}
\delta_1
&\leq Hs\delta_3=\frac{24H(1+H)}{Cs^2}
\leq\frac{\gamma^3}{36(1+H)s^2}
\leq\frac{\gamma^3}{36},\\
\delta_2
&\leq4Hs^3\delta_3=\frac{96H(1+H)}{C}
\leq\frac{\gamma^3}{9(1+H)}
\leq\frac{\gamma^3}
{9\left(1+\max_{i\in\mathcal I}\frac{\int v_i\,\dd F'(\vec v)}{\rho_i}\right)},\\
\delta_3
&=\frac{24(1+H)}{Cs^3}
\leq\frac{\gamma^3}{36H(1+H)s^3}
\leq\frac{\gamma^3}{36Hs}
\leq\frac{\gamma^3\min_{i\in\mathcal I}\rho_i}{36}.
\end{align*}

By \Cref{lem: low-regret-implies-approximate-SPPE}, we obtain a
$\gamma$-approximate fluid SPPE for $F'$ with probability at least $3/4$,
and hence for $\widetilde F$.
Therefore, we get a polynomial-time randomized algorithm for the
$\mathsf{PPAD}$-hard problem, contradicting
the assumption of \Cref{thm: fixed-polynomial-regret}. For details of the computation models and encoding, see \Cref{app: complexity-model}.
\end{proof}

\begin{samepage}
\subsection{Computational Details}
\label{app: complexity-model}

We show that the reduction in the proof of \Cref{thm: fixed-polynomial-regret} can be implemented as a polynomial-time randomized algorithm that succeeds with probability at least $2/3$. By the item bound in \Cref{lemma: fluid-hardness} and $K=O(s^4)$, we have $|\supp(\widehat F)|\leq m+\binom s2K^2+1=O(s^{10})$, and all values, probability masses, and per-round budgets are rational with $O(\log s)$ bits. Since $r$ is fixed, the instance with $n=s^r$ bidders has polynomial encoding length.

We compute fluid utilities and payments by summing over the finite support, where in each round of \Cref{alg: stochastic-to-fluid-simulation}, the allocations at all support points use the same random bits of $\mathcal A$. Since $\mathcal A$ runs in polynomial time and $T=\lceil q(s^r)\rceil$ is polynomial in $s$, the simulation and the extraction in \Cref{lem: low-regret-implies-approximate-SPPE} take polynomial time. Since the probability masses are rational, each $\vec v_t$ can be sampled in polynomial time from a distribution within total variation distance $1/(24T)$ of $\widehat F$. By a coupling argument and the union bound over the $T$ rounds, this decreases the success probability by at most $1/24$, so the algorithm succeeds with probability at least $3/4-1/24>2/3$. The finitely many small source instances are handled by a hardcoded table.
\end{samepage}

\section{Missing Results and Proofs of \Cref{section: main-section-on-ABE}}
\label{app: ROI-proofs}

\subsection{Proof of \Cref{lem: ROI-strong-convexity}}

\roicurvature*

\begin{proof}
Similar to the proof of \Cref{lem: strong-convexity-F0}, we condition on the pair of bidders with nonzero values in the construction of $F_0$. We choose $j\in\operatorname*{arg\,min}_{\ell\ne i}\alpha_{\ell,t}$, and consider the event where $\{i,j\}$ is selected, which has probability $2/[n(n-1)]$. Conditional on this event, $v_i$ and $v_j$ are independent and uniform on $[0,\vmax]$, and $\max_{k\ne i}\alpha_{k,t}v_k=\alpha_{j,t}v_j$. Therefore, the conditional expectation of the integrand in $D_{i,t}^{F_0}$ is
\begin{align*}
&\frac1{\vmax^2}\int_0^{\vmax}\int_0^{\vmax}
\bigl((1+\mu)y-\mu\alpha_{j,t}z\bigr)_+\,\dd y\,\dd z
=
\begin{cases}
\displaystyle\vmax\left[\frac{1+\mu}{2}-\frac{\mu\alpha_{j,t}}2
+\frac{\mu^2\alpha_{j,t}^2}{6(1+\mu)}\right],
&\mu\alpha_{j,t}\leq1+\mu,\\[3mm]
\displaystyle\frac{\vmax(1+\mu)^2}{6\mu\alpha_{j,t}},
&\mu\alpha_{j,t}\geq1+\mu,
\end{cases}
\end{align*}
where we first integrate over $y\geq\mu\alpha_{j,t}z/(1+\mu)$, and then over $0\leq z\leq\min\{\vmax,\vmax(1+\mu)/(\mu\alpha_{j,t})\}$. The two expressions and their first two derivatives agree at the boundary $\mu\alpha_{j,t}=1+\mu$, and their second derivatives are $\vmax\alpha_{j,t}^2/[3(1+\mu)^3]$ and $\vmax/(3\alpha_{j,t}\mu^3)$, respectively. Since $\alpha_{j,t}>\alphalower_j=1$, both second derivatives are lower-bounded by $\vmax/[3\alpha_{j,t}(H+1)^3]$ for $\mu\in[0,H]$. This proves $\sigma_0$-strong convexity of the dual loss on $[0,H]$.
\end{proof}

\subsection{Proof of \Cref{lem: low-regret-implies-approximate-ABE}}

\begingroup
\roifullstatementtrue
\roiextraction*
\endgroup

\begin{proof}
We construct the pacing multipliers $(\alpha_i')_{i\in[n]}$ and allocations $(\hat x_i'(\vec v))_{i\in[n]}$ of the approximate fluid autobidding equilibrium, and then show the four properties in \Cref{def: approximate-fluid-ABE}. 

Different from the previous budget-constrained setting, we construct the pacing multipliers directly by $\alpha_i'=1+1/\mu_i$; we will discuss later that $\mu_i$ can be computed efficiently. 
Since $\mu_i\geq1/(M-1)$, we have $\alpha_i'\in(1,M]$.

We next construct the allocations. For a parameter $\iota>0$ to be chosen later, we consider the set of timesteps where the $\ell_{\infty}$ distance deviation of the dual is greater than $\iota$, i.e., $\mathcal T=\{t\in[T]:\max_i|\mu_{i,t}-\mu_i|>\iota\}$. Similar to \eqref{eq: extraction-exceptional-rounds}, by \eqref{eq: ROI-extraction-variance-hypothesis} we have
\begin{equation*}
\frac{|\mathcal T|}{T}
\leq\frac1{T\iota^2}\sum_{t\in\mathcal T}\max_i(\mu_{i,t}-\mu_i)^2
\leq\frac1{T\iota^2}\sumi\sumt(\mu_{i,t}-\mu_i)^2
\leq\frac{\delta_2}{\iota^2}.
\end{equation*}
As in the proof of \Cref{lem: low-regret-implies-approximate-SPPE}, we define the allocation
\begin{equation*}
\hat x_{i,t}(\vec v)=
\begin{cases}
x_{i,t}(\vec v), & t\notin\mathcal T, \vec v \in \supp(F)\\
\indi{i=\min\operatorname*{arg\,max}_{j\in[n]}\alpha_j'v_j},
& \text{otherwise},
\end{cases}
\end{equation*}
and construct the offline fluid allocation by taking the average over the adjusted online allocations, $$\hat x_i'(\vec v)=T^{-1}\sumt\hat x_{i,t}(\vec v).$$

For $t\notin\mathcal T$, we have $|\mu_{i,t}-\mu_i|\leq\iota$. Since $\mu_i,\mu_{i,t}\geq1/(M-1)$, we have
\begin{align*}
\frac{\alpha_{i,t}}{\alpha_i'}
&=1+\frac{\mu_i-\mu_{i,t}}{\mu_{i,t}(1+\mu_i)}
\leq1+\iota(M-1),\\
\frac{\alpha_i'}{\alpha_{i,t}}
&=1+\frac{\mu_{i,t}-\mu_i}{\mu_i(1+\mu_{i,t})}
\leq1+\iota(M-1).
\end{align*}
Similar to \eqref{eq: relative-price-comparison}, we have for $t\notin\mathcal T$,
\begin{equation}
\label{eq: ROI-relative-price-comparison}
\begin{aligned}
\frac{h(\vec v;\vec\alpha')}{1+\iota(M-1)}
&\leq h(\vec v;\vec\alpha_t)
\leq[1+\iota(M-1)]h(\vec v;\vec\alpha'),\\
\frac{p(\vec v;\vec\alpha')}{1+\iota(M-1)}
&\leq p(\vec v;\vec\alpha_t)
\leq[1+\iota(M-1)]p(\vec v;\vec\alpha').
\end{aligned}
\end{equation}

It is useful to bound the value in the timesteps in $\mathcal T$. Notice that \eqref{eq: ROI-extraction-payment-hypothesis} implicitly requires $\sumt u_{i,t}^F>0$ for all $i\in[n]$. For notational simplicity, we denote $\varepsilon_i=\delta_2T\int v_i\,\dd F(\vec v)/(\iota^2\sumt u_{i,t}^F)$. Then
\begin{equation}
\label{eq: ROI-extraction-exceptional-value}
\sum_{t\in\mathcal T}u_{i,t}^F
\leq|\mathcal T|\int v_i\,\dd F(\vec v)
\leq\varepsilon_i\sumt u_{i,t}^F,
\end{equation}
and the same bound holds for the value $R_i=\sum_{t\in\mathcal T}\int\hat x_{i,t}(\vec v)v_i\,\dd F(\vec v)$ that bidder $i$ receives from the adjusted allocations in $\mathcal T$. Notice that a winning bidder pays at most its bid, so $c_{i,t}^F\leq Mu_{i,t}^F$, and the payment of bidder $i$ under $\vec\alpha'$ for the adjusted allocations in $\mathcal T$ is at most $MR_i$.

Now we show that $(\vec\alpha',(\hat x_i'(\vec v))_{i\in[n]})$ satisfies conditions (a)--(d) in the definition of approximate fluid autobidding equilibrium (\Cref{def: approximate-fluid-ABE}).

\begin{itemize}
\item[(a)] \textit{Allocation to the near-highest bidder.}
For $\vec v\in\supp(F)$, $t\notin\mathcal T$, and $x_{i,t}(\vec v)>0$, we have
\begin{align*}
\alpha_i'v_i
&\geq\frac{\alpha_{i,t}v_i}{1+\iota(M-1)}
=\frac{h(\vec v;\vec\alpha_t)}{1+\iota(M-1)}
\geq\frac{h(\vec v;\vec\alpha')}{[1+\iota(M-1)]^2}.
\end{align*}
For $t\in\mathcal T$ and for $\vec v\notin\supp(F)$, we allocate only to highest bidders under $\vec\alpha'$. Thus $\hat x_i'(\vec v)>0$ implies $\alpha_i'v_i\geq h(\vec v;\vec\alpha')/[1+\iota(M-1)]^2$. We then have condition (a) whenever we have $[1+\iota(M-1)]^{-2}\geq1-\gamma$.

\item[(b)] \textit{Market clearance.} This is obviously satisfied by the construction itself.

\item[(c)] \textit{Approximate ROI feasibility.}
To upper-bound the payment-to-value ratio, we compare the payment under $\vec\alpha'$ with the online payment as follows,
\begin{equation}
\label{eq: ROI-extracted-feasibility-bound}
\begin{aligned}
\frac{\int\hat x_i'(\vec v)p(\vec v;\vec\alpha')\,\dd F(\vec v)}
{\int\hat x_i'(\vec v)v_i\,\dd F(\vec v)}
&\overset{\text{(a)}}{\leq}\frac{[1+\iota(M-1)]\sum_{t\notin\mathcal T}c_{i,t}^F+MR_i}
{\sum_{t\notin\mathcal T}u_{i,t}^F+R_i}\\
&\overset{\text{(b)}}{\leq}\frac{[1+\iota(M-1)](1+\delta_1)\sumt u_{i,t}^F+MR_i}
{(1-\varepsilon_i)\sumt u_{i,t}^F+R_i}\\
&\overset{\text{(c)}}{\leq}\max\left\{\frac{[1+\iota(M-1)](1+\delta_1)}{1-\varepsilon_i},\ [1+\iota(M-1)](1+\delta_1)+M\varepsilon_i\right\},
\end{aligned}
\end{equation}
where step (a) is by \eqref{eq: ROI-relative-price-comparison} and the payment bound for the adjusted allocations, step (b) invokes condition \eqref{eq: ROI-extraction-payment-hypothesis} with $\sum_{t\notin\mathcal T}c_{i,t}^F\leq\sumt c_{i,t}^F$ and \eqref{eq: ROI-extraction-exceptional-value}, and step (c) holds since the middle expression is monotone in $R_i\in[0,\varepsilon_i\sumt u_{i,t}^F]$; we will choose the parameters such that $\varepsilon_i<1$. Therefore, condition (c) holds if the right-hand side of \eqref{eq: ROI-extracted-feasibility-bound} is at most $1+\gamma$.

\item[(d)] \textit{Approximate complementarity.}
We consider a bidder with $\alpha_i'<M$. Then $\mu_i>1/(M-1)$ and $(1+\mu_i)/\mu_i=\alpha_i'<M$, so \eqref{eq: ROI-extraction-slack-hypothesis} implies $\sumt c_{i,t}^F\geq(1-M\delta_3)\sumt u_{i,t}^F$. To lower-bound the payment-to-value ratio, we have
\begin{equation}
\label{eq: ROI-extracted-complementarity-bound}
\begin{aligned}
\frac{\int\hat x_i'(\vec v)p(\vec v;\vec\alpha')\,\dd F(\vec v)}
{\int\hat x_i'(\vec v)v_i\,\dd F(\vec v)}
&\overset{\text{(a)}}{\geq}\frac{\sum_{t\notin\mathcal T}c_{i,t}^F}
{[1+\iota(M-1)](\sumt u_{i,t}^F+R_i)}
\overset{\text{(b)}}{\geq}\frac{\sumt c_{i,t}^F-M\sum_{t\in\mathcal T}u_{i,t}^F}
{[1+\iota(M-1)](1+\varepsilon_i)\sumt u_{i,t}^F}\\
&\overset{\text{(c)}}{\geq}\frac{1-M\delta_3-M\varepsilon_i}
{[1+\iota(M-1)](1+\varepsilon_i)},
\end{aligned}
\end{equation}
where in (a) we use \eqref{eq: ROI-relative-price-comparison} and $\sum_{t\notin\mathcal T}u_{i,t}^F\leq\sumt u_{i,t}^F$, in (b) we use $c_{i,t}^F\leq Mu_{i,t}^F$ and $R_i\leq\varepsilon_i\sumt u_{i,t}^F$, and in (c) we use the above lower bound on $\sumt c_{i,t}^F$ and \eqref{eq: ROI-extraction-exceptional-value}. Therefore, condition (d) holds by contradiction if the right-hand side of \eqref{eq: ROI-extracted-complementarity-bound} is at least $1-\gamma$.
\end{itemize}

\paragraph{Fixing the parameters.}

We now take $\iota=\gamma/[2(M-1)]$ and use the bounds on $\delta_1,\delta_2,\delta_3$ in the lemma statement, and show that this satisfies all inequality requirements for the above conditions (a)--(d). The bound on $\delta_2$ gives
\begin{equation*}
\varepsilon_i
\leq\frac{\gamma^3}{32M(M-1)^2\iota^2}
=\frac{\gamma}{8M}<1.
\end{equation*}
For condition (a), we apply Bernoulli's inequality to obtain $[1+\iota(M-1)]^{-2}\geq1-2\iota(M-1)=1-\gamma$. For condition (c), we bound the two terms in \eqref{eq: ROI-extracted-feasibility-bound} as follows,
\begin{align*}
\frac{[1+\iota(M-1)](1+\delta_1)}{1-\varepsilon_i}
&\leq\frac{(1+\gamma/2)(1+\gamma/8)}{1-\gamma/8}
=1+\gamma-\frac{\gamma/4-3\gamma^2/16}{1-\gamma/8}
\leq1+\gamma,\\
[1+\iota(M-1)](1+\delta_1)+M\varepsilon_i
&\leq\left(1+\frac\gamma2\right)\left(1+\frac\gamma8\right)+\frac\gamma8
=1+\frac{3\gamma}4+\frac{\gamma^2}{16}
\leq1+\gamma,
\end{align*}
where the first line uses $\gamma/4-3\gamma^2/16>0$. Finally, for condition (d), we have
\begin{align*}
\frac{1-M\delta_3-M\varepsilon_i}
{[1+\iota(M-1)](1+\varepsilon_i)}
&\geq\frac{1-\gamma/4}{(1+\gamma/2)(1+\gamma/8)}
\geq1-\gamma,
\end{align*}
where the last inequality uses $(1-\gamma)(1+\gamma/2)(1+\gamma/8)=1-3\gamma/8-9\gamma^2/16-\gamma^3/16\leq1-\gamma/4$.
We conclude that $(\vec\alpha',(\hat x_i')_{i\in[n]})$ satisfies all four conditions of a $\gamma$-approximate fluid autobidding equilibrium with cap $M$.
\end{proof}

\subsection{\Cref{lemma: roi-fluid-hardness} and Its Proof}
\label{app: ROI-hardness}

\begin{lemma}
\label{lemma: roi-fluid-hardness}
There exist constants $\eta_0,\gamma\in(0,1)$ such that, for any fixed $M\geq10$ and $\eta,\eta'>0$ with $\eta+\eta'\leq\eta_0$ and $6\eta\leq(1-\gamma)\eta'$, computing a $\gamma$-approximate fluid autobidding equilibrium with cap $M$ is \ppad-hard for the family of distributions $\{(1-\eta-\eta')F_1+\eta F_0+\eta'F_0'\}$.
\end{lemma}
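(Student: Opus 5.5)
We plan to mirror the proof of \Cref{lemma: fluid-hardness}: reduce from the \ppad-hard problem of computing an approximate (discrete) autobidding equilibrium of \citet{li2024vulnerabilities}, with ROI targets $r_i=1$. First we will fix a relaxed discrete notion, analogous to \Cref{clm: relaxed-discrete-SPPE-hardness}. It allows a $(1+\lambda)$ slack in ROI feasibility, a $(1-\lambda)$ complementarity threshold, and near-highest allocation up to $1-\lambda$. We will show it is still \ppad-hard by a rescaling reduction. Here, rather than rescaling budgets, we scale each bidder's target $r_i$ (equivalently, the multipliers) by $1+\lambda$. We will also cap multipliers at $M$. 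The hard instances should have equilibrium multipliers bounded by a constant, so for $M\ge 10$ the cap is never binding: any complementarity failure at $\alpha_i=M$ can be converted back by lowering $\alpha_i$ to the true bound. We expect the details of this to depend on the hard family in \citet{li2024vulnerabilities}. We will take $\gamma$ a small fraction of $\lambda$ and $\eta_0$ small.

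Given a discrete hard instance with items $\vec v_1,\dots,\vec v_m$, we set $F_1=\frac1m\sum_j\delta_{\vec v_j}$ and $F=(1-\eta-\eta')F_1+\eta F_0+\eta' F_0'$. From a $\gamma$-approximate fluid equilibrium $(\vec\alpha',x')$ we read off $\alpha_i=\alpha_i'$ and $x_{ij}=x_i'(\vec v_j)$. Conditions (a) and (b) transfer exactly as in \Cref{lemma: fluid-hardness}. For (c) and (d) we decompose each bidder's fluid payment and value into the $F_1$, $F_0$ and $F_0'$ parts, as in \eqref{eq: mixture-spending-identity}.

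\begin{itemize}
\item The $F_0'$ part contributes value $\eta'\vmax/n$ to every bidder at price zero, since all other values are $0$.
\item The $F_0$ part contributes payment at most its own value, and value at most $\eta\vmax/n$.
\end{itemize}

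Both perturbations are therefore comparable to $1/n$ times the weights. They are small relative to the discrete market only if each bidder's discrete value $\frac1m\sum_j x_{ij}v_{ij}$ is bounded below by a constant times $1/n$. Since $m=O(n)$ in the hard instances, it suffices that each bidder wins a constant amount of value in the discrete equilibrium.

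The main obstacle is exactly this lower bound: a bidder could win almost nothing on the $F_1$ atoms, and then the regularizer parts dominate its ROI ratio. Two cases arise.

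\begin{itemize}
\item \emph{Payment ratio.} Here the $F_0'$ part only helps, because it adds free value. For (c), a fluid ratio of at most $1+\gamma$ implies the discrete payment is at most $(1+\gamma)$ times the discrete value plus $O(\eta)/n$ terms. Since the $F_0$ payments are bounded by the $F_0$ values, the error is driven only by $\eta$ relative to the discrete value.
\item \emph{Complementarity (d).} Suppose the discrete ratio is below $1-\lambda$. We must show the fluid ratio is below $1-\gamma$. The added free value $\eta'\vmax/n$ lowers the fluid ratio, and the $F_0$ part raises the payment by at most $\eta\vmax/n$. The condition $6\eta\le(1-\gamma)\eta'$ is designed so that the $F_0$ payment is dominated by the free $F_0'$ value, which keeps the fluid ratio below $1-\gamma$ regardless of the discrete value. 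When the discrete value is tiny, the fluid ratio is roughly $\eta$-payment over $\eta'$-value, which is at most $1/6$ and hence below $1-\gamma$, so (d) transfers with $\alpha_i=M$.
\end{itemize}

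For (c), a bidder with negligible discrete value contributes little discrete payment. We will handle this by a direct case split: if the discrete value is below a threshold, we verify the relaxed ROI condition using that discrete payments are bounded by bids times values, which are at most $M$ times the values, together with near-highest allocation. Finally, we will choose $\gamma=\lambda/4$ and $\eta_0$ small enough, depending only on $\lambda$ and on the hard family's constant value and item ratios. The reduction is clearly polynomial.
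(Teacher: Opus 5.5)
There is a genuine gap, and it sits at the obstacle you yourself identified.

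\textbf{The $F_0$ payment bound and (d).} Your bound on the $F_0$ part is wrong in the ROI setting. Multipliers here are at least $1$, so a bidder who wins a point of $F_0$ pays the other bid. That bid is at most its own bid $\alpha_i'v_i$, not $v_i$. The correct bound is $\eta\alpha_i'\vmax/n$, which can be as large as $\eta M\vmax/n$.

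Consequently, your (d) argument only works for bidders with $\alpha_i'\le 6$. That argument says $6\eta\le(1-\gamma)\eta'$ makes the $F_0$ payment dominated by the free $F_0'$ value regardless of the discrete value. For $\alpha_i'\in(6,M)$ the fluid ratio need not fall below $1-\gamma$. The hypothesis ties $\eta$ to $\eta'$ only through an $M$-independent condition, with $\eta_0$ independent of $M$, so you cannot absorb this by shrinking $\eta$. The $6$ is not a ratio yielding ``$1/6$''; it is a multiplier bound $A=6$ that has to be proved.

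\textbf{The (c) transfer.} Here the free $F_0'$ value hurts rather than helps, because it inflates the fluid value. Fluid feasibility only gives that the discrete payment is at most $(1+\gamma)$ times the discrete value plus $m(\eta+\eta')(1+\gamma)\vmax/((1-\eta-\eta')n)$. This additive error involves $\eta'$, not just $\eta$. Your case split for small discrete value uses ``payments at most $M$ times values.'' That bounds the ratio by $M$, not by $1+\lambda$, so it does not close the gap.

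\textbf{The missing step.} You need to show that, in the solution extracted from a fluid equilibrium, every bidder receives discrete value at least a constant $b$ and has multiplier at most a constant $A$ with $A\eta\le(1-\gamma)\eta'$. This cannot be read off from exact discrete equilibria, since the extracted object is not yet known to be one. It must be bootstrapped from the two partial transfers you do have, together with the gadget structure of the hard family of \citet{li2024vulnerabilities}. The two transfers are the additive upper ROI bound, valid for every bidder, and the lower bound $\sum_j x_{ij}p(\vec v_j;\vec\alpha)\ge(1-\gamma)\sum_j x_{ij}v_{ij}$, valid for bidders with $\alpha_i\le 6$.

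The paper carries this out as follows.
\begin{itemize}
\item In each reserve-price gadget, $i_2$ gets $j_1$ for free, and the upper transfer forces $x_{i_1j_2}\ge 1/4$. Hence both auxiliary bidders get value at least $r/2\ge 1/16$, so multiplicative feasibility holds for them.
\item This caps $p(\vec v_{j_2};\vec\alpha)$ at $2r(1+\lambda)$. The lower transfer rules out $\alpha_{i_2}<2(1-\gamma)\alpha_{i_1}$, and together these cap the auxiliary multipliers at $21/20$.
\item A vertex bidder with $\alpha_i>9/2$ would win both bound items at price at least $1$ each. That exceeds its total value $55/28$ plus $6\gamma$, so vertex multipliers are at most $9/2$, and heads then win all their incoming items.
\end{itemize}
This gives $b=1/16$ and $A=6$, after which the mixture margins close. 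As a by-product, every bidder satisfies the two-sided ROI condition, so (d) holds without using the cap. Your rescaling reduction to a relaxed discrete notion is also unnecessary: the approximate notion of \citet{li2024vulnerabilities} already allows $(1+\lambda)$ ROI slack.
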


\begin{proof}
We use the approximate (discrete) autobidding equilibrium from \citet[Theorem~2 and Appendix~G]{li2024vulnerabilities}. For $\lambda\in(0,1)$, a $\lambda$-approximate (discrete) autobidding equilibrium for $m$ items with value vectors $\vec v_j=(v_{1j},\ldots,v_{nj})$, $j\in[m]$, is a tuple $(\vec\alpha,(x_{ij})_{i\in[n],j\in[m]})$ that satisfies conditions (a)--(d) in \Cref{def: approximate-fluid-ABE}, with $\gamma$ replaced by $\lambda$ and the integrals over $F$ replaced by sums over the $m$ items. Computing such an equilibrium is \ppad-hard for any sufficiently small constant $\lambda>0$.

We fix such a $\lambda$ and consider an instance of this discrete problem, with $m$ items with value vectors $(\vec v_1,\ldots,\vec v_m)$. We construct the fluid distribution as $F=(1-\eta-\eta')F_1+\eta F_0+\eta'F_0'$, where $F_1=\frac1m\sum_{j=1}^m\delta_{\vec v_j}$ is the empirical distribution on the $m$ items.

We next show that, given a $\gamma$-approximate fluid autobidding equilibrium for $F$, by restricting to the $m$ items one can extract a $\lambda$-approximate (discrete) autobidding equilibrium, whenever each bidder $i$ receives value $\sum_{j=1}^m x_{ij}v_{ij}\geq b$ and has multiplier $\alpha_i\leq A$ in the restriction, for some $b>0$ and $A\in[1,M)$ such that
\begin{equation}
\label{eq: roi-mixture-hardness-margins}
\gamma+\frac{m(\eta+\eta')(1+\gamma)\vmax}{(1-\eta-\eta')nb}\leq\lambda,
\qquad A\eta\leq(1-\gamma)\eta'.
\end{equation}
Concretely, given a $\gamma$-approximate fluid autobidding equilibrium $(\vec\alpha',(x_i'(\vec v))_{i\in[n]})$, we construct the discrete solution as $\alpha_i=\alpha_i'$ and $x_{ij}=x_i'(\vec v_j)$. On each value point $\vec v_j$, $j\in[m]$, both solutions share the same bids from all bidders and the same allocation.

We bound the contribution of the two regularizers as follows. Notice that under $F_0$, at most two bidders have positive bids, so a bidder who receives the item pays at most its own bid $\alpha_i'v_i$. Under $F_0'$, bidder $i$ is the only bidder with positive value with probability $1/n$, in which case it receives the item at price zero. We then have
\begin{equation}
\label{eq: roi-regularizer-value-payment}
\begin{aligned}
\int x_i'(\vec v)v_i\,\dd F_0(\vec v)
&\leq\frac{\vmax}{n},
&\int x_i'(\vec v)p(\vec v;\vec\alpha')\,\dd F_0(\vec v)
&\leq\frac{\alpha_i'\vmax}{n},\\
\int x_i'(\vec v)v_i\,\dd F_0'(\vec v)
&=\frac{\vmax}{n},
&\int x_i'(\vec v)p(\vec v;\vec\alpha')\,\dd F_0'(\vec v)
&=0.
\end{aligned}
\end{equation}
Similar to \eqref{eq: mixture-spending-identity}, we decompose the spending and the value of each bidder $i\in[n]$ in the fluid market as follows,
\begin{equation}
\label{eq: roi-mixture-identities}
\begin{aligned}
\int x_i'(\vec v)p(\vec v;\vec\alpha')\,\dd F(\vec v)
&=\frac{1-\eta-\eta'}{m}\sum_{j=1}^m x_{ij}p(\vec v_j;\vec\alpha)
+\eta\int x_i'(\vec v)p(\vec v;\vec\alpha')\,\dd F_0(\vec v),\\
\int x_i'(\vec v)v_i\,\dd F(\vec v)
&=\frac{1-\eta-\eta'}{m}\sum_{j=1}^m x_{ij}v_{ij}
+\eta\int x_i'(\vec v)v_i\,\dd F_0(\vec v)
+\frac{\eta'\vmax}{n}.
\end{aligned}
\end{equation}

We next show that $(\vec\alpha,(x_{ij})_{i\in[n],j\in[m]})$ is indeed a $\lambda$-approximate autobidding equilibrium for the (discrete) market.

\begin{itemize}
\item[(a)] \textit{Allocation to the near-highest bidder.} By \eqref{eq: roi-mixture-hardness-margins}, we have $\gamma\leq\lambda$, and the rest is the same as in the proof of \Cref{lemma: fluid-hardness}.
\item[(b)] \textit{Market clearance.} Obvious.
\item[(c)] \textit{Approximate ROI feasibility.} By the approximate ROI feasibility of the fluid solution, \eqref{eq: roi-regularizer-value-payment}, and \eqref{eq: roi-mixture-identities}, we have for all bidders,
\begin{equation}
\label{eq: roi-discrete-upper-transfer}
\sum_{j=1}^m x_{ij}p(\vec v_j;\vec\alpha)
\leq(1+\gamma)\sum_{j=1}^m x_{ij}v_{ij}
+\frac{m(\eta+\eta')(1+\gamma)\vmax}{(1-\eta-\eta')n}.
\end{equation}
Since $\sum_jx_{ij}v_{ij}\geq b$, the right-hand side is at most $(1+\lambda)\sum_jx_{ij}v_{ij}$ by the first inequality in \eqref{eq: roi-mixture-hardness-margins}.
\item[(d)] \textit{Approximate complementarity.} We show that $\sum_jx_{ij}p(\vec v_j;\vec\alpha)\geq(1-\lambda)\sum_jx_{ij}v_{ij}$ for all bidders, so condition (d) holds. For a bidder with $\alpha_i'<M$, the approximate complementarity of the fluid solution gives $\int x_i'(\vec v)p(\vec v;\vec\alpha')\,\dd F(\vec v)\geq(1-\gamma)\int x_i'(\vec v)v_i\,\dd F(\vec v)$. If we further have $\alpha_i'\eta\leq(1-\gamma)\eta'$, by \eqref{eq: roi-regularizer-value-payment}, the spending on $F_0$ satisfies
\begin{equation*}
\eta\int x_i'(\vec v)p(\vec v;\vec\alpha')\,\dd F_0(\vec v)
\leq\frac{\alpha_i'\eta\vmax}{n}
\leq(1-\gamma)\frac{\eta'\vmax}{n}
\leq(1-\gamma)\left(\eta\int x_i'(\vec v)v_i\,\dd F_0(\vec v)+\frac{\eta'\vmax}{n}\right).
\end{equation*}
Subtracting this from the fluid inequality by \eqref{eq: roi-mixture-identities}, we obtain
\begin{equation}
\label{eq: roi-discrete-lower-transfer}
\sum_{j=1}^m x_{ij}p(\vec v_j;\vec\alpha)
\geq(1-\gamma)\sum_{j=1}^m x_{ij}v_{ij}.
\end{equation}
Since $\alpha_i\leq A<M$ for all bidders, the second inequality in \eqref{eq: roi-mixture-hardness-margins} ensures \eqref{eq: roi-discrete-lower-transfer} for all bidders, and condition (d) follows from $\gamma\leq\lambda$.
\end{itemize}

Thus the constructed solution is a $\lambda$-approximate discrete autobidding equilibrium. The reduction construction can clearly be done in polynomially many operations.

It only remains for us to argue that $b$ and $A$ can be chosen as constants for the hard instances of \citet{li2024vulnerabilities}. Different from the budget setting, the received values and the multipliers are not directly bounded by the source instance, and we derive their bounds from the conditions of the fluid equilibrium.

\paragraph{Useful bounds for the hard family.} We provide parameters about the (discrete) hard market instances, which will be useful in the proof of \Cref{thm: fixed-polynomial-ROI-regret}. We refer the readers to \citet[Appendix~G]{li2024vulnerabilities} for details of the hardness construction.

We follow the notation of \citet[Appendix~G]{li2024vulnerabilities}, except that we write $\theta$ for their construction parameter $\eta$, to avoid confusion with our regularizer weight. This parameter can be taken sufficiently small, and we take $\theta\leq1/100$. In the hard instances, each vertex bidder values three incoming items at $1/3$ each, its lower-bound item at $(1-\theta)/2$, and its upper-bound item at $1/[4(1-\theta)]$. It also values at most three outgoing items at $(1-\theta)/14$ each, and a missing incoming edge leaves its item valued only by the head bidder. The incoming items have reserve price $(1-\theta)/7$, and the two bound items have reserve price $1$. Each reserve price $r$ on an item $j_0$ is replaced by two auxiliary bidders $i_1,i_2$ and two auxiliary items $j_1,j_2$, with the following values (all unspecified values are zero),
\begin{center}
\begin{tabular}{c|ccc}
 & $j_1$ & $j_2$ & $j_0$\\ \hline
$i_1$ & $0$ & $2r$ & $r$\\
$i_2$ & $r/2$ & $r$ & $r/2$
\end{tabular}
\end{center}
Therefore, each vertex corresponds to $11$ bidders and $15$ items, so $m/n=15/11$, and $\vmax=2$. Moreover, all reserve prices are at least $1/8$, each bidder has total value $\sum_jv_{ij}\leq3$, and each item has at most four positive values, which are all at least $(1-\theta)/14\geq1/15$. For a vertex bidder, we have the sharper bound
\begin{equation*}
\sum_jv_{ij}
\leq1+\frac{3(1-\theta)}{14}+\frac{1-\theta}{2}+\frac1{4(1-\theta)}
=\frac{55}{28}-\theta\left(\frac57-\frac1{4(1-\theta)}\right)
\leq\frac{55}{28}.
\end{equation*}

\paragraph{Bounding the received values and multipliers.}
We take $\lambda\leq1/100$, fix $\gamma=\eta_0=\lambda/100$, and consider $\eta,\eta'>0$ with $\eta+\eta'\leq\eta_0$ and $6\eta\leq(1-\gamma)\eta'$. We will show that every bidder receives value at least $b=1/16$ and has multiplier at most $A=6$ in the restriction.

We first simplify \eqref{eq: roi-discrete-upper-transfer} for the hard family. Since $m/n=15/11$, $\vmax=2$, $\sum_jv_{ij}\leq3$, and $\eta+\eta'\leq\gamma\leq10^{-4}$, we have for all bidders,
\begin{equation}
\label{eq: roi-hard-family-upper-transfer}
\sum_jx_{ij}p(\vec v_j;\vec\alpha)
\leq(1+\gamma)\sum_jx_{ij}v_{ij}
+\frac{30(\eta+\eta')(1+\gamma)}{11(1-\eta-\eta')}
\leq(1+\gamma)\sum_jx_{ij}v_{ij}+3(\eta+\eta')
\leq\sum_jx_{ij}v_{ij}+6\gamma.
\end{equation}
Moreover, for a bidder with $\alpha_i\leq6$, we have $\alpha_i<M$ and $\alpha_i\eta\leq(1-\gamma)\eta'$, so \eqref{eq: roi-discrete-lower-transfer} also applies.

We next consider the auxiliary bidders $i_1,i_2$ for a reserve price $r$ on item $j_0$. Since bidder $i_2$ is the only bidder that values $j_1$, it receives all of $j_1$ at price zero, and thus receives value at least $r/2$. Since all multipliers are at least one, we have $p(\vec v_{j_0};\vec\alpha)\geq r/2$, and if $x_{i_2j_2}>0$, condition (a) gives $p(\vec v_{j_2};\vec\alpha)\geq2r(1-\gamma)$. Applying \eqref{eq: roi-hard-family-upper-transfer} to bidder $i_2$, we get
\begin{equation*}
(1-2\gamma)rx_{i_2j_2}-\frac r2
\leq\sum_jx_{i_2j}\bigl(p(\vec v_j;\vec\alpha)-v_{i_2j}\bigr)
\leq6\gamma,
\end{equation*}
and thus $x_{i_1j_2}=1-x_{i_2j_2}\geq1-(1/2+6\gamma/r)/(1-2\gamma)\geq1/4$. Therefore, both auxiliary bidders receive value at least $r/2\geq1/16$, and \eqref{eq: roi-hard-family-upper-transfer} gives for $i\in\{i_1,i_2\}$,
\begin{equation}
\label{eq: roi-auxiliary-relative-feasibility}
\sum_jx_{ij}p(\vec v_j;\vec\alpha)
\leq(1+\gamma+48(\eta+\eta'))\sum_jx_{ij}v_{ij}
\leq(1+\lambda)\sum_jx_{ij}v_{ij}.
\end{equation}
We use \eqref{eq: roi-auxiliary-relative-feasibility} to bound the auxiliary multipliers. On $j_0$, both auxiliary bids are half of their bids on $j_2$, so $p(\vec v_{j_0};\vec\alpha)\geq p(\vec v_{j_2};\vec\alpha)/2$. If $p(\vec v_{j_2};\vec\alpha)>2r(1+\lambda)$, the prices of both items valued by $i_1$ exceed $1+\lambda$ times its values, which contradicts \eqref{eq: roi-auxiliary-relative-feasibility} since $x_{i_1j_2}>0$. Hence $p(\vec v_{j_2};\vec\alpha)\leq2r(1+\lambda)$. If $\alpha_{i_2}<2(1-\gamma)\alpha_{i_1}$, then by condition (a), bidder $i_2$ only receives $j_1$ at price zero, and $\alpha_{i_2}=p(\vec v_{j_2};\vec\alpha)/r\leq2(1+\lambda)<6$, which contradicts \eqref{eq: roi-discrete-lower-transfer}. Together with condition (a) for $x_{i_1j_2}>0$, the two bids on $j_2$ differ by a factor of at most $1/(1-\gamma)$, so
\begin{equation*}
\max\{\alpha_{i_1},\alpha_{i_2}/2\}
\leq\frac{p(\vec v_{j_2};\vec\alpha)}{2r(1-\gamma)}
\leq\frac{1+\lambda}{1-\gamma}
\leq\frac{21}{20}.
\end{equation*}
In particular, both auxiliary multipliers are below $6$.

Finally, we consider a vertex bidder $i$. If $\alpha_i>9/2$, its bids on both bound items are at least $\alpha_i/4>9/8>21/[20(1-\gamma)]$, while the auxiliary bids on these items are at most $21/20$. By condition (a), it receives both bound items in full, and pays at least $1$ for each of them since $\alpha_{i_1}\geq1$. Since its value is at most $55/28$ and $2>55/28+6\gamma$, this contradicts \eqref{eq: roi-hard-family-upper-transfer}. Thus all vertex multipliers are at most $9/2$. On an incoming item, the head then bids at least $1/3$, the tail bids at most $9/28$, and the auxiliary bids are at most $3/20$, where the latter two bounds are below $(1-\gamma)/3$. By conditions (a) and (b), the head receives all three of its incoming items, with total value $1$.

We have shown that every bidder receives value at least $b=1/16$ and has multiplier at most $A=6<M$. We can then check \eqref{eq: roi-mixture-hardness-margins} for the whole family as follows,
\begin{equation*}
\gamma+\frac{m(\eta+\eta')(1+\gamma)\vmax}{(1-\eta-\eta')nb}
\leq\gamma+48(\eta+\eta')
\leq\frac{49\lambda}{100}<\lambda,
\qquad
A\eta=6\eta\leq(1-\gamma)\eta'.
\end{equation*}
\end{proof}

\subsection{\Cref{lem: ROI-fluid-stochastic-transfer} and Its Proof}
\label{app: roi-stochastic-transfer}
Since their pacing multipliers are unbounded, the payments of value maximizers can be arbitrarily large. We therefore truncate the prices at a level $P\geq\vmax$, and define the scaled value and payment as
\begin{equation*}
\widehat u_{i,t}=u_{i,t}\min\{1,P/p_t\},\qquad
\widehat c_{i,t}=c_{i,t}\min\{1,P/p_t\},
\end{equation*}
where the scaling factor is one when $p_t=0$. In other words, the scaled outcome allocates $x_{i,t}\min\{1,P/p_t\}\in[0,1]$ of the item to bidder $i$ at the same price. We define their fluid counterparts $\widehat u_{i,t}^F$ and $\widehat c_{i,t}^F$ by integrating the scaled outcomes over $F$, in the same way as $u_{i,t}^F$ and $c_{i,t}^F$.

\begin{lemma}
\label{lem: ROI-fluid-stochastic-transfer}
For any stochastic multi-agent bidding dynamic $\mathcal A$, we can efficiently obtain $\mathcal A^F$ for the fluid setting by simulating $\mathcal A$, and achieve
\begin{align*}
\E\left[\left|\regret_{i,\mathcal A^F}^F
                 -\regret_{i,\mathcal A}\right|\right]
&\leq\vmax\left[(1+H)\sqrt{2T\log\left(2(\lceil HT^2\rceil+1)\right)}
+\sqrt T+6+T\zeta\right],\\
\E\left[\left(\sumt(\widehat c_{i,t}^F-\widehat u_{i,t}^F)\right)_+\right]
&\leq\E\left[\left(\sumt(c_{i,t}-u_{i,t})\right)_+\right]
+(P+\vmax)\sqrt T,
\end{align*}
where $H\geq1$ and $\zeta\in[0,1]$ are such that the hindsight benchmarks $\opt_i(\bm v;\bm\alpha_{-i})$ and $\opt_i^F(\bm\alpha_{-i})$ both have optimal dual solutions in $[0,H]$ with probability at least $1-\zeta$.
\end{lemma}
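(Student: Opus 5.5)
We would follow the structure of the proof of \Cref{lem: uniform-fluid-stochastic-transfer}, using the same simulation \Cref{alg: stochastic-to-fluid-simulation}: the fluid multipliers $\vec\alpha_t$ are copied from $\mathcal A$, and at every support point the fluid allocation is the one $\mathcal A$ would choose. Conditional on the history $\mathcal H_t$ before $\vec v_t$ is drawn, we then have $\E[u_{i,t}\mid\mathcal H_t]=u_{i,t}^F$ and $\E[\widehat c_{i,t}-\widehat u_{i,t}\mid\mathcal H_t]=\widehat c_{i,t}^F-\widehat u_{i,t}^F$.

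\emph{ROI violation bound.} The scaled increments satisfy $|\widehat c_{i,t}-\widehat u_{i,t}|\leq P+\vmax$, so their martingale differences have second moment at most $(P+\vmax)^2$. The $L^2$ orthogonality and Cauchy--Schwarz argument from \eqref{eq: utility-payment-martingale} then gives
\[
\E\Bigl|\sumt(\widehat c_{i,t}-\widehat u_{i,t})-\sumt(\widehat c^F_{i,t}-\widehat u^F_{i,t})\Bigr|\leq(P+\vmax)\sqrt T .
\]
It remains to compare the scaled and unscaled stochastic quantities. If $p_t\leq P$, they coincide. If $p_t>P$ and $x_{i,t}>0$, then $p_t\geq P\geq\vmax\geq v_{i,t}$, so the round contributes a nonnegative amount $x_{i,t}(p_t-v_{i,t})$. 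Scaling multiplies this by $\min\{1,P/p_t\}\leq1$, so it can only decrease. Hence $\sumt(\widehat c_{i,t}-\widehat u_{i,t})\leq\sumt(c_{i,t}-u_{i,t})$. Combining this with $(a+b)_+\leq a_++|b|$ gives the second inequality.

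\emph{Regret bound.} The utility part is handled exactly as before, with error $\vmax\sqrt T$, since $f_{i,t}=v_{i,t}x_{i,t}\leq\vmax$. For the benchmarks, we work on the event $E$ that both dual optima lie in $[0,H]$; this event has probability at least $1-\zeta$. On $E$, each benchmark equals the minimum of its dual over $[0,H]$ minus a duality gap in $[0,\vmax]$. We define $Z_{i,T}(\mu)$ as the difference between the stochastic ROI duals $((1+\mu)v_{i,t}-\mu d_{i,t})_+$ and their fluid counterparts, and need a bound on $\sup_{\mu\in[0,H]}|Z_{i,T}(\mu)|$.

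This is the main obstacle, because $d_{i,t}$ is unbounded and the increments are not bounded by $\vmax$. We would use the monotonicity of $((1+\mu)v-\mu d)_+$ in $d$: the function lies in $[0,(1+\mu)\vmax]\subseteq[0,(1+H)\vmax]$ regardless of $d$. So each increment is bounded by $(1+H)\vmax$, and conditional Hoeffding gives sub-Gaussian parameter $(1+H)\vmax\sqrt t$. The Lipschitz constant in $\mu$ is also problematic, since $|\partial_\mu|\leq v+d$ with $d$ unbounded. To handle this, we would note that the function is $0$ whenever $\mu d\geq(1+\mu)\vmax$. The effective slope is therefore at most $\vmax+\max(d,\ldots)$, and is controlled by $\vmax(1+1/\mu)$ away from $0$. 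This suggests using a finer grid of spacing $1/T^2$ on $[0,H]$, with size $\lceil HT^2\rceil+1$. The slope near $\mu=0$ would be handled separately, using that the integrand at small $\mu$ is close to $v$ in a way uniformly controlled by a constant times $\vmax$. The log-sum-exp maximal inequality then gives $(1+H)\vmax\sqrt{2T\log(2(\lceil HT^2\rceil+1))}$, plus a discretization error of a constant times $\vmax$.

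Off the event $E$, we bound the regret difference crudely by $T\vmax$ in absolute value, since both benchmarks and both realized values lie in $[0,T\vmax]$. This contributes $\vmax T\zeta$ in expectation. Finally, interchanging the minima on $E$ and adding the two duality gaps and the discretization errors yields the constant $6$.
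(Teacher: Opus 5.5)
\paragraph{Where the proposal breaks down.} Your simulation, the two martingale bounds, the pointwise comparison $\widehat c_{i,t}-\widehat u_{i,t}\le c_{i,t}-u_{i,t}$ giving the second inequality, the increment bound $(1+H)\vmax$ for $Z_{i,T}$, and the $1/T^2$ grid all match the paper's proof. The gap is your treatment of the dual near $\mu=0$.

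The fact you invoke there is false. Write the per-round dual as $(v_{i,t}-\mu(d_{i,t}-v_{i,t}))_+$. It drops from $v_{i,t}$ at $\mu=0$ to $0$ already at $\mu=v_{i,t}/(d_{i,t}-v_{i,t})$, which is arbitrarily small when competing multipliers are large. So neither the per-round duals nor $Z_{i,T}$ stay uniformly close to their values at $0$.

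Where the dual is positive, its slope is only bounded by $\max\{\vmax,\vmax/\mu\}$. Extending from the grid by Lipschitz continuity therefore costs about $2T\vmax/k$ on the cell $[k/T^2,(k+1)/T^2]$. That is order $T\vmax$ on the first cells, with no Lipschitz bound at all on the cell at $0$. The union-bound-plus-Lipschitz argument thus does not control $\sup_{\mu\in[0,H]}|Z_{i,T}(\mu)|$, and your sketch supplies nothing to replace it.

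\paragraph{The missing idea.} You never need $Z_{i,T}$ near $0$. The slope of $\mu\mapsto((1+\mu)v-\mu d)_+$ is at most $v\le\vmax$ from above. Raising $\mu$ by $h$ therefore raises each per-round dual by at most $h\,v_{i,t}$. Hence rounding a minimizer in $[0,1/T)$ up to $1/T$ increases either cumulative dual objective (stochastic or fluid) by at most $\vmax$.

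On $E$, you then compare the minima over $[1/T,H]$ only. There each dual loss is $T\vmax$-Lipschitz and $Z_{i,T}$ is $2T^2\vmax$-Lipschitz. The grid $\{k/T^2:k=T,\ldots,\lfloor HT^2\rfloor\}\cup\{H\}$ then gives discretization error $2\vmax$. The two rounding errors and the two duality gaps account for the remaining $4\vmax$, which together give the constant $6$.

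\paragraph{A smaller slip.} Off $E$, apply the crude bound only to $|\opt_i^F(\bm\alpha_{-i})-\opt_i(\bm v;\bm\alpha_{-i})|\le T\vmax$, not to the whole regret difference. The whole difference can reach $2T\vmax$, since realized values may exceed the benchmark when the ROI constraint is violated. The utility term is already covered by your global $\vmax\sqrt T$ martingale bound.
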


\begin{proof}
We construct $\mathcal A^F$ by simulating $\mathcal A$ as in \Cref{alg: stochastic-to-fluid-simulation}, and let $\mathcal H_t$ be the history up to step $t$ before $\vec v_t$ is generated. Similar to the proof of \Cref{lem: uniform-fluid-stochastic-transfer}, we have
\begin{equation*}
\E[u_{i,t}\mid\mathcal H_t]=u_{i,t}^F,\qquad
\E[\widehat c_{i,t}-\widehat u_{i,t}\mid\mathcal H_t]=\widehat c_{i,t}^F-\widehat u_{i,t}^F.
\end{equation*}
Since $u_{i,t}\in[0,\vmax]$ and $\widehat c_{i,t}-\widehat u_{i,t}\in[-\vmax,P]$, the same argument as \eqref{eq: utility-payment-martingale} gives
\begin{equation}
\label{eq: ROI-utility-payment-martingale}
\E\left[\left|\sumt(u_{i,t}-u_{i,t}^F)\right|\right]
\leq\vmax\sqrt T,\quad
\E\left[\left|\sumt
\bigl(\widehat c_{i,t}-\widehat u_{i,t}-\widehat c_{i,t}^F+\widehat u_{i,t}^F\bigr)\right|\right]
\leq(P+\vmax)\sqrt T.
\end{equation}
Since $P\geq\vmax$, we have $c_{i,t}-u_{i,t}\geq0$ whenever $p_t>P$, so $\widehat c_{i,t}-\widehat u_{i,t}=(c_{i,t}-u_{i,t})\min\{1,P/p_t\}\leq c_{i,t}-u_{i,t}$. We then have
\begin{align*}
\E\left[\left(\sumt(\widehat c_{i,t}^F-\widehat u_{i,t}^F)\right)_+\right]
&\leq\E\left[\left(\sumt(\widehat c_{i,t}-\widehat u_{i,t})\right)_+\right]
+(P+\vmax)\sqrt T\\
&\leq\E\left[\left(\sumt(c_{i,t}-u_{i,t})\right)_+\right]
+(P+\vmax)\sqrt T,
\end{align*}
which proves the second inequality in \Cref{lem: ROI-fluid-stochastic-transfer}.

We next bound the difference between the two hindsight benchmarks through their duals. Since the dual losses $D_{i,t}(\mu;\vec v_t,\vec\alpha_{-i,t})=((1+\mu)v_{i,t}-\mu d_{i,t})_+$ are not Lipschitz near $\mu=0$, here we compare the two dual objectives only on $[1/T,H]$. 
On the event where both dual objectives have minimizers in $[0,H]$, rounding a minimizer in $[0,1/T)$ up to $1/T$ increases either cumulative dual objective by at most $\vmax$, i.e.,
\begin{equation*}
0\leq\min_{\mu\in[1/T,H]}\sumt D_{i,t}(\mu;\vec v_t,\vec\alpha_{-i,t})
-\min_{\mu\geq0}\sumt D_{i,t}(\mu;\vec v_t,\vec\alpha_{-i,t})
\leq\vmax,
\end{equation*}
and the same bound holds for the fluid dual objective.

We define $Z_{i,t}(\mu)$ as in the proof of \Cref{lem: uniform-fluid-stochastic-transfer}. For $\mu\in[1/T,H]$, both dual losses lie in $[0,(1+H)\vmax]$, so $|Z_{i,t}(\mu)-Z_{i,t-1}(\mu)|\leq(1+H)\vmax$. Moreover, both dual losses are $T\vmax$-Lipschitz on $[1/T,H]$.

We take the grid discretization $\mathcal G=\{k/T^2:k=T,\ldots,\lfloor HT^2\rfloor\}\cup\{H\}$. 
Similar to \eqref{eq: explicit-uniform-dual-transfer}, applying the conditional Hoeffding bound with $s=\sqrt{2\log(2|\mathcal G|)/[T(1+H)^2\vmax^2]}$, and using the fact that $Z_{i,T}$ is $2T^2\vmax$-Lipschitz on $[1/T,H]$, we obtain
\begin{equation*}
\E\left[\sup_{\mu\in[1/T,H]}|Z_{i,T}(\mu)|\right]
\leq(1+H)\vmax\sqrt{2T\log\left(2(\lceil HT^2\rceil+1)\right)}
+2\vmax.
\end{equation*}

On the event where both dual objectives have minimizers in $[0,H]$, we compare the minima of the two dual objectives over $[1/T,H]$ using $\sup_{\mu\in[1/T,H]}|Z_{i,T}(\mu)|$, and then combine with the rounding error and duality gap. 
Outside this event, both benchmarks are in $[0,T\vmax]$, so their difference is at most $T\vmax$. Therefore,
\begin{equation*}
\E\left[\left|\opt_i^F(\bm\alpha_{-i})-\opt_i(\bm v;\bm\alpha_{-i})\right|\right]
\leq\E\left[\sup_{\mu\in[1/T,H]}|Z_{i,T}(\mu)|\right]
+4\vmax+T\vmax\zeta.
\end{equation*}
Combining with \eqref{eq: ROI-utility-payment-martingale},
\begin{align*}
\E\left[\left|\regret_{i,\mathcal A^F}^F-\regret_{i,\mathcal A}\right|\right]
&\leq\E\left[\left|\opt_i^F(\bm\alpha_{-i})-\opt_i(\bm v;\bm\alpha_{-i})\right|\right]
+\E\left[\left|\sumt(u_{i,t}^F-u_{i,t})\right|\right]\\
&\leq\vmax\left[(1+H)\sqrt{2T\log\left(2(\lceil HT^2\rceil+1)\right)}
+\sqrt T+6+T\zeta\right].
\end{align*}
This proves the first inequality in \Cref{lem: ROI-fluid-stochastic-transfer}.
\end{proof}

\subsection{Discretization}
\label{app: ROI-discretization}

We will use the discretized regularizer $F_{0,K}$ in \Cref{def: F0-distribution-discretized}, and consider $\widetilde F=(1-\eta-\eta')F_1+\eta F_{0,K}+\eta'F_0'$. We remark that \Cref{lemma: roi-fluid-hardness} also applies when the regularizer is $F_{0,K}$ in place of $F_0$.

We show that the discretized regularizer also provides strong-convexity-like properties in the following \Cref{lem: ROI-discrete-curvature}. Since the dual iterates can be unbounded, we state it with the scaled outcomes in \Cref{app: roi-stochastic-transfer}, and project the dual iterate $\mu_{i,t}$ onto $[0,H+1]$. Without the scaling, the left-hand side is exactly $D_{i,t}^{\widetilde F}(\mu)-D_{i,t}^{\widetilde F}(\mu_{i,t})-g(\mu-\mu_{i,t})$ with the subgradient $g=u_{i,t}^{\widetilde F}-c_{i,t}^{\widetilde F}$, as in \Cref{lem: finite-dual-curvature}.

\begin{lemma}
\label{lem: ROI-discrete-curvature}
For any $H>0$, $L>1$, $P\geq L\vmax$, and $\mu\in[0,H]$, if $\alpha_{j,t}\leq L$ for some $j\neq i$, it holds that
\begin{equation*}
D_{i,t}^{\widetilde F}(\mu;\vec\alpha_{-i,t})
-(1+\mu)\widehat u_{i,t}^{\widetilde F}
+\mu\widehat c_{i,t}^{\widetilde F}
\geq\frac{\widetilde\sigma}{4}\bigl(\min\{\mu_{i,t},H+1\}-\mu\bigr)^2
-\frac{8\eta\vmax[H+2+(H+1)L]}{n(n-1)K},
\end{equation*}
where $\widetilde\sigma=\dfrac{2\eta\vmax}{3n(n-1)L(H+2)^3}$.
\end{lemma}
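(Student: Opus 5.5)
We argue pointwise in the value vector and then integrate, mirroring the proof of \Cref{lem: finite-dual-curvature}. Write $\bar\mu=\min\{\mu_{i,t},H+1\}$. The plan is to split the left-hand side into a non-negative \emph{scaling/projection} part and a \emph{curvature} part. The curvature part is the Bregman-type gap $D_{i,t}^{\widetilde F}(\mu)-D_{i,t}^{\widetilde F}(\bar\mu)-\bar g(\mu-\bar\mu)$, where $\bar g$ is the subgradient at $\bar\mu$ induced by the (scaled) allocation. The curvature part is then controlled exactly as in \Cref{lem: finite-dual-curvature}, using \Cref{lem: ROI-strong-convexity} together with a grid error bound analogous to \Cref{lem: regularizer-grid-error}.

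\textbf{Step 1 (pointwise reduction).} For a fixed value point $\vec v$, let $d=\max_{j\ne i}\alpha_{j,t}v_j$, and let $\hat x\in[0,1]$ be the scaled allocation $x_{i,t}(\vec v)\min\{1,P/p\}$. The integrand of $D_{i,t}$ is $\phi(\mu)=((1+\mu)v_i-\mu d)_+\ge \hat x((1+\mu)v_i-\mu d)$ for every $\hat x\in[0,1]$, so the pointwise left-hand side is non-negative; in particular this holds at $\mu$. Since $\hat x\le 1$ and $\hat x>0$ only if bidder $i$ wins, where $p\le\alpha_{i,t}v_i$, we use $\mu_{i,t}$'s optimality: $\hat x$ agrees with the maximizer of $\phi$ at $\mu_{i,t}$ except when truncation is active. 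Truncation is active only if $p>P\ge L\vmax$, which forces $d>L\vmax$, i.e.\ all competitors' multipliers exceed $L$. On such points the pointwise term is simply non-negative, and we drop it.

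\textbf{Step 2 (projection).} If $\mu_{i,t}>H+1\ge\mu$, then monotonicity of subgradients shows that replacing $\mu_{i,t}$ by $H+1$ only decreases the pointwise gap. Indeed, the allocation at $\mu_{i,t}$ (winning iff $(1+\mu_{i,t})v_i\ge\mu_{i,t}d$) is a sub-allocation of that at $H+1$ when $d>v_i$, and equals it when $d\le v_i$. A one-dimensional check on the piecewise-linear $\phi$ gives $\phi(\mu)-\hat x\,((1+\mu)v_i-\mu d)\ge \phi(\mu)-\phi(\bar\mu)-\phi'(\bar\mu)(\mu-\bar\mu)$ with $\bar\mu=H+1$. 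Integrating, the left-hand side dominates $\mathrm{Breg}_{\bar\mu}(\mu)$ for $D^{\widetilde F}$ with $\bar\mu,\mu\in[0,H+1]$.

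\textbf{Step 3 (curvature).} The $F_1$ and $F_0'$ parts contribute non-negative Bregman gaps by convexity, so only the $\eta F_{0,K}$ part remains. Compare it with $\eta F_0$ via the coupling of \Cref{lem: regularizer-grid-error}, which perturbs each value by at most $\vmax/K$. In the ROI dual, both the $v_i$ term and the $\mu d$ term are affected, giving a sup-error at most $\frac{2}{n(n-1)}\sum_{j}\frac{\vmax}{K}(H+2+(H+1)L)$ per pair, restricted to the pair containing the competitor $j$ with $\alpha_{j,t}\le L$. Converting the sup-error of the function into a Bregman error via the midpoint trick of \Cref{lem: finite-dual-curvature} costs a factor $4$, yielding the stated $8\eta\vmax[\cdot]/(n(n-1)K)$ term. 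For $F_0$ itself, condition on the pair $\{i,j\}$ with $\alpha_{j,t}\le L$ (probability $2/(n(n-1))$). By the computation in \Cref{lem: ROI-strong-convexity}, the second derivative is at least $\vmax/[3\max\{\alpha_{j,t},1\}(H+2)^3]\ge \vmax/[3L(H+2)^3]$ on $[0,H+1]$. Hence the Bregman gap is at least $\frac{\eta}{2}\cdot\frac{2}{n(n-1)}\cdot\frac{\vmax}{3L(H+2)^3}(\bar\mu-\mu)^2$, which is at least $\frac{\widetilde\sigma}{4}(\bar\mu-\mu)^2$.

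The main obstacle is Step 2 combined with the truncation in Step 1. We must verify carefully that the scaled subgradient still yields a lower bound by a genuine Bregman gap at the projected point, including the tie cases and the case where the truncated allocation is a fractional sub-allocation. The plan is to handle this entirely pointwise, where $\phi$ is a maximum of two affine functions and the claim reduces to an elementary inequality.
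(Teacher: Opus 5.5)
Your proposal is correct and follows essentially the paper's route. You restrict to the $F_{0,K}$ event where $\{i,j\}$ is the active pair; there $d=\alpha_{j,t}v_j\le L\vmax\le P$, so the scaling is inactive, and all other value points are dropped by pointwise nonnegativity. You then replace the online allocation by the optimal allocation at $H+1$ when $\mu_{i,t}>H+1$. Finally, you combine the affine-tangent/midpoint trick with the grid error and the strong convexity of the continuous conditional loss computed in \Cref{lem: ROI-strong-convexity}.

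Two small slips do not affect the argument. First, truncation forces only \emph{some} competitor's multiplier above $L$, not all of them; this is irrelevant on the $\{i,j\}$ event. Second, the midpoint route yields curvature $\sigma/4$ for the grid function rather than the $\sigma/2$ you quote for the Bregman gap, and $\sigma/4$ is exactly the $\widetilde\sigma/4$ the statement requires.
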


\begin{proof}
We condition on the event that $\{i,j\}$ is the pair of bidders with nonzero values under the component $F_{0,K}$, which has probability $2\eta/[n(n-1)]$ under $\widetilde F$. Let $f$ and $f_K$ be the conditional dual losses under the continuous and the grid distributions, respectively. By a rounding argument, we have for $\mu'\in[0,H+1]$,
\begin{equation*}
|f(\mu')-f_K(\mu')|
\leq\frac{\vmax}{K}\bigl[(1+\mu')+\mu'\alpha_{j,t}\bigr]
\leq\frac{\vmax}{K}\bigl[H+2+(H+1)L\bigr].
\end{equation*}
By the computation in the proof of \Cref{lem: ROI-strong-convexity}, $f$ is $\vmax/[3L(H+2)^3]$-strongly convex on $[0,H+1]$. Meanwhile, conditional on this event, the scaled outcomes of bidder $i$ equal the original outcomes, since its payment is at most $\alpha_{j,t}\vmax\leq L\vmax\leq P$.

We first consider the case $\mu_{i,t}\leq H+1$. Let $\phi(\mu')$ be the conditional expectation of $((1+\mu')v_i-\mu'\alpha_{j,t}v_j)x_{i,t}(\vec v)$ under the grid distribution, which is affine in $\mu'$. By \Cref{prop: multipliers}, the online allocation attains $\phi(\mu_{i,t})=f_K(\mu_{i,t})$, and we always have $\phi\leq f_K$. Therefore,
\begin{align*}
f_K(\mu)-\phi(\mu)
&=f_K(\mu)+f_K(\mu_{i,t})-2\phi\left(\frac{\mu+\mu_{i,t}}2\right)
\geq f_K(\mu)+f_K(\mu_{i,t})-2f_K\left(\frac{\mu+\mu_{i,t}}2\right)\\
&\geq f(\mu)+f(\mu_{i,t})-2f\left(\frac{\mu+\mu_{i,t}}2\right)
-\frac{4\vmax[H+2+(H+1)L]}{K}\\
&\geq\frac{\vmax}{12L(H+2)^3}(\mu_{i,t}-\mu)^2
-\frac{4\vmax[H+2+(H+1)L]}{K}.
\end{align*}
If $\mu_{i,t}>H+1$, bidder $i$ bids lower than under the dual variable $H+1$, so the online allocation is dominated by the optimal allocation at $H+1$. Since $\mu\leq H$, the coefficient $(1+\mu)v_i-\mu\alpha_{j,t}v_j$ is nonnegative at every value vector where the optimal allocation at $H+1$ gives the item to bidder $i$, because there $(H+2)v_i\geq(H+1)\alpha_{j,t}v_j$. Thus replacing the online allocation with the optimal allocation at $H+1$ only decreases $f_K(\mu)-\phi(\mu)$, and the above calculation applies with $H+1$ in place of $\mu_{i,t}$.

\end{proof}

\subsection{Proof of \Cref{thm: fixed-polynomial-ROI-regret}}

\roiimpossibility*

\begin{proof}
In this proof, we use the hard instances and constants $\lambda,\eta_0,\gamma$ from the proof of \Cref{lemma: roi-fluid-hardness}, where $\vmax=2$. We will choose constants $\tau>0$, $M\geq10$, $\eta,\eta'>0$ with $\eta+\eta'\leq\eta_0$, and $c\in(0,1]$ in this order, where $\tau$, $\eta/\eta'$, and $c$ are sufficiently small and $M$ is sufficiently large; we specify these requirements where they are used. For notational simplicity, we denote $H=20/\eta'$ and $C=32(H+3)/c$; these parameters are constants independent of $n$, $T$, and $\varepsilon$. For the sake of contradiction, suppose both expectations in the theorem statement are at most $T/(Cn^\varepsilon)$ for all bidders on every instance with $n\geq N_0$ and $T=\lceil q(n)\rceil$.

\paragraph{Construction of the fluid bidding instance.}
Consider the hard instance from the proof of \Cref{lemma: roi-fluid-hardness} with $s$ bidders and $m=15s/11$ items. For $r=\max\{14,\lceil4/\varepsilon\rceil\}$, we construct a fluid instance with $n=s^r$ bidders by adding $n-s$ auxiliary bidders who form a disjoint market component. Specifically, let $\mathcal I=\{1,\ldots,s\}$ be the set of original bidders from the hard instance, and $\widetilde F$ be defined as in \Cref{app: ROI-discretization} with $K=2\lceil s/c\rceil$. To construct the joint value distribution $\widehat F$ of $n$ bidders, we draw $(\widetilde v_i)_{i\in\mathcal I}\sim\widetilde F$ and independently choose one of the following two cases,
\begin{equation*}
v_i=\begin{cases}
\widetilde v_i, & i\in\mathcal I,\\
0, & i\notin\mathcal I,
\end{cases}
\quad\text{w.p. }1/2,
\qquad
v_i=\begin{cases}
0, & i\in\mathcal I,\\
1, & i\notin\mathcal I,
\end{cases}
\quad\text{w.p. }1/2.
\end{equation*}

As in the proof of \Cref{thm: fixed-polynomial-regret}, restricting to the $s$ bidders in $\mathcal I$ gives a fluid bidding process under $F'=\frac12\widetilde F+\frac12\delta_{\vec0}$, with the same values, payments, and hindsight benchmarks for bidders in $\mathcal I$. Since $\vec0$ contributes zero value and payment, using the same pacing multipliers and allocations under $\widetilde F$ exactly doubles all these quantities. Consequently, for $i\in\mathcal I$,
\begin{equation*}
\regret_i^{\widetilde F}=2\regret_i^{\widehat F},\qquad
\left(\sumt(\widehat c_{i,t}^{\widetilde F}-\widehat u_{i,t}^{\widetilde F})\right)_+
=2\left(\sumt(\widehat c_{i,t}^{\widehat F}-\widehat u_{i,t}^{\widehat F})\right)_+.
\end{equation*}

\paragraph{Low regret of the $s$-bidder fluid process.}
We show that if the $n$-bidder bidding process has low regret and ROI violation in expectation, then the $s$-bidder fluid bidding process under $\widetilde F$ will also inherit these properties. We first bound the parameters of $\widetilde F$. By the useful bounds for the hard family in the proof of \Cref{lemma: roi-fluid-hardness}, we have for all $i\in\mathcal I$,
\begin{equation*}
\int v_i\,\dd\widetilde F(\vec v)
\leq\frac3m+\frac{(\eta+\eta')\vmax}{s}
\leq\frac5s.
\end{equation*}
The following bound on the optimal fluid dual solution $\mu_i^{*,\widetilde F}$ will be useful. Since $D_{i,t}^{\widetilde F}(\mu;\vec\alpha_{-i,t})\geq\eta'\vmax\mu/s$ from the component $F_0'$, by its optimality,
\begin{equation*}
\frac{T\eta'\vmax\mu_i^{*,\widetilde F}}{s}
\leq\sumt D_{i,t}^{\widetilde F}(\mu_i^{*,\widetilde F};\vec\alpha_{-i,t})
\leq\sumt D_{i,t}^{\widetilde F}(0;\vec\alpha_{-i,t})
=T\int v_i\,\dd\widetilde F(\vec v)
\leq\frac{5T}{s}.
\end{equation*}
Thus $\mu_i^{*,\widetilde F}\leq5/(\eta'\vmax)<H$. We use the scaled outcomes in \Cref{app: roi-stochastic-transfer}, with $P=\lceil80\vmax s^2/c\rceil$.

We now show bounds on the expected regret and scaled ROI violation of the fluid process under $\widetilde F$. Since $n^\varepsilon=s^{r\varepsilon}\geq s^4$, by our assumption for contradiction, we have for all $i\in\mathcal I$,
\begin{equation*}
\E[\regret_{i,\mathcal A}(\bm v)]\leq\frac{T}{Cs^4},\ \ \text{and}\ \
\E\left[\left(\sumt(c_{i,t}-u_{i,t})\right)_+\right]\leq\frac{T}{Cs^4}.
\end{equation*}
In the fluid setting, the same bounds hold for the $n$-bidder fluid process under $\widehat F$, and the scaled ROI violation is also at most $T/(Cs^4)$ since $\widehat c_{i,t}-\widehat u_{i,t}\leq c_{i,t}-u_{i,t}$.

In the stochastic setting, we first apply \Cref{lem: ROI-fluid-stochastic-transfer} to the $n$-bidder process under $\widehat F$ and then restrict to $\mathcal I$ similarly. Different from the budget setting, we need the optimal dual solutions of both benchmarks to lie in $[0,H]$. For the fluid benchmark, this holds since the fluid dual objective under $\widehat F$ is half of that under $\widetilde F$ and $\mu_i^{*,\widetilde F}<H$. For the stochastic benchmark, bidder $i$ has value $\vmax$ and faces zero competing bids with probability $\eta'/(2s)$ in each round, from the component $F_0'$. The total value from these arrivals has mean $\eta'\vmax T/(2s)$, and $\E[\sumt v_{i,t}]\leq5T/(2s)$. By the Chernoff bounds,
\begin{align*}
\Pr\left(\text{total value of bidder $i$ from $F_0'$}<\frac{\eta'\vmax T}{8s}\right)
&\leq\exp\left(-\frac{\eta'T}{32s}\right),\\
\Pr\left(\sumt v_{i,t}>\frac{5T}{s}\right)
&\leq\exp\left(-\frac{5T}{6\vmax s}\right)
\leq\exp\left(-\frac{\eta'T}{32s}\right).
\end{align*}
Outside these two events, any optimal dual solution $\mu$ of the stochastic benchmark satisfies
\begin{equation*}
\frac{\mu\eta'\vmax T}{8s}
\leq\sumt D_{i,t}(\mu;\vec v_t,\vec\alpha_{-i,t})
\leq\sumt D_{i,t}(0;\vec v_t,\vec\alpha_{-i,t})
=\sumt v_{i,t}
\leq\frac{5T}{s},
\end{equation*}
so $\mu\leq40/(\eta'\vmax)=H$. We then apply \Cref{lem: ROI-fluid-stochastic-transfer} with $\zeta=\min\{1,2\exp(-\eta'T/(32s))\}$. For a bidder $i\in\mathcal I$, the expected fluid regret and scaled ROI violation are at most their stochastic counterparts plus $\vmax[(1+H)\sqrt{2T\log(2(\lceil HT^2\rceil+1))}+\sqrt T+6+T\zeta]$ and $(P+\vmax)\sqrt T$, respectively. We have $T=\lceil q(s^r)\rceil=\Omega(s^{14})$, $\log T=O(\log s)$, and $P=O(s^2)$, because $r$, $q$, and $c$ are fixed. Therefore,
\begin{align*}
&\frac{\vmax\left[(1+H)\sqrt{2T\log(2(\lceil HT^2\rceil+1))}+\sqrt T+6+T\zeta\right]}{T/(Cs^4)}
=O\left(s^{-3}\sqrt{\log s}+s^4e^{-\Omega(s^{13})}\right)\longrightarrow0,\\
&\frac{(P+\vmax)\sqrt T}{T/(Cs^4)}=O\left(s^{-1}\right)\longrightarrow0.
\end{align*}
We take $s$ sufficiently large that $n\geq N_0$, $T\geq Cs^4$, and both ratios are at most one. In both settings, bidders in $\mathcal I$ then have expected regret and scaled ROI violation at most $2T/(Cs^4)$ under $\widehat F$. Doubling these bounds, the fluid process under $\widetilde F$ satisfies
\begin{equation}
\label{eq: ROI-final-fluid-bounds}
\E[\regret_i^{\widetilde F}]\leq\frac{4T}{Cs^4},\qquad
\E\left[\left(\sumt
(\widehat c_{i,t}^{\widetilde F}-\widehat u_{i,t}^{\widetilde F})
\right)_+\right]\leq\frac{4T}{Cs^4}
\quad(i\in\mathcal I).
\end{equation}

\paragraph{Extracting autobidding equilibrium from the fluid process.}
We next extract an $s$-bidder approximate fluid autobidding equilibrium from the fluid bidding process, with probability at least $3/4$. Similar to \eqref{eq: padded-fluid-regret-identity}, we decompose the regret, but with the scaled outcomes since only the scaled ROI violation is controlled,
\begin{equation}
\label{eq: ROI-fluid-regret-identity}
\begin{aligned}
\regret_i^{\widetilde F}+\Delta_i^{\widetilde F}
+\sumt(u_{i,t}^{\widetilde F}-\widehat u_{i,t}^{\widetilde F})
={}&\sumt\left[D_{i,t}^{\widetilde F}(\mu_i^{*,\widetilde F};\vec\alpha_{-i,t})
-(1+\mu_i^{*,\widetilde F})\widehat u_{i,t}^{\widetilde F}
+\mu_i^{*,\widetilde F}\widehat c_{i,t}^{\widetilde F}\right]\\
&+\mu_i^{*,\widetilde F}\sumt
(\widehat u_{i,t}^{\widetilde F}-\widehat c_{i,t}^{\widetilde F}).
\end{aligned}
\end{equation}
where $0\leq\Delta_i^{\widetilde F}\leq\vmax$ is the duality gap, and both sides equal $\sumt D_{i,t}^{\widetilde F}(\mu_i^{*,\widetilde F};\vec\alpha_{-i,t})-\sumt\widehat u_{i,t}^{\widetilde F}$. The bracketed terms are nonnegative, since the scaled allocation belongs to $[0,1]$ and pays the competing price per unit. Moreover, $u_{i,t}=\widehat u_{i,t}$ unless $p_t>P$, in which case $\widehat c_{i,t}=Px_{i,t}\geq(P/\vmax)u_{i,t}$. Thus $u_{i,t}-\widehat u_{i,t}\leq(\vmax/P)\widehat c_{i,t}$, and integrating this inequality gives
\begin{equation*}
\sumt(u_{i,t}^{\widetilde F}-\widehat u_{i,t}^{\widetilde F})
\leq\frac{\vmax}{P}\sumt\widehat c_{i,t}^{\widetilde F}
\leq\frac{5\vmax T}{sP}
+\frac{\vmax}{P}\left(\sumt
(\widehat c_{i,t}^{\widetilde F}-\widehat u_{i,t}^{\widetilde F})\right)_+,
\end{equation*}
where we use $\sumt\widehat u_{i,t}^{\widetilde F}\leq T\int v_i\,\dd\widetilde F(\vec v)\leq5T/s$. To bound the right-hand side of \eqref{eq: ROI-fluid-regret-identity} as a whole, for notational simplicity we define
\begin{equation*}
Y_i=\regret_i^{\widetilde F}
+(H+2)\left(\sumt
(\widehat c_{i,t}^{\widetilde F}-\widehat u_{i,t}^{\widetilde F})\right)_+
+\vmax+\frac{5\vmax T}{sP}.
\end{equation*}
Using $\mu_i^{*,\widetilde F}\leq H$, $\Delta_i^{\widetilde F}\leq\vmax$, and $P\geq\vmax$, we obtain
\begin{align*}
Y_i-\left(\sumt(\widehat c_{i,t}^{\widetilde F}
-\widehat u_{i,t}^{\widetilde F})\right)_+
&\geq\regret_i^{\widetilde F}+\Delta_i^{\widetilde F}
+\sumt(u_{i,t}^{\widetilde F}-\widehat u_{i,t}^{\widetilde F})
+\mu_i^{*,\widetilde F}\sumt(\widehat c_{i,t}^{\widetilde F}-\widehat u_{i,t}^{\widetilde F})\\
&\overset{\text{(a)}}{=}\sumt\left[D_{i,t}^{\widetilde F}(\mu_i^{*,\widetilde F};\vec\alpha_{-i,t})
-(1+\mu_i^{*,\widetilde F})\widehat u_{i,t}^{\widetilde F}
+\mu_i^{*,\widetilde F}\widehat c_{i,t}^{\widetilde F}\right]
\geq0,
\end{align*}
where step (a) is by \eqref{eq: ROI-fluid-regret-identity}. In particular, $Y_i\geq0$. Since the bracketed terms in \eqref{eq: ROI-fluid-regret-identity} are nonnegative, we also have
\begin{align*}
\mu_i^{*,\widetilde F}\sumt(\widehat u_{i,t}^{\widetilde F}
-\widehat c_{i,t}^{\widetilde F})
&\leq\regret_i^{\widetilde F}+\Delta_i^{\widetilde F}
+\sumt(u_{i,t}^{\widetilde F}-\widehat u_{i,t}^{\widetilde F})
\leq Y_i.
\end{align*}

We choose $\delta=c/s^2$, and bound the expectation of $\sum_{i\in\mathcal I}Y_i$ as follows,
\begin{align*}
\frac1T\E\left[\sum_{i\in\mathcal I}Y_i\right]
&\overset{\text{(a)}}{\leq}s\left[(H+3)\frac{4}{Cs^4}+\frac{\vmax}{T}+\frac{5\vmax}{sP}\right]
\overset{\text{(b)}}{\leq}\frac{c}{8s^3}+\frac{c}{16s^3}+\frac{c}{16s^2}
\leq\frac{\delta}{4},
\end{align*}
where step (a) is by \eqref{eq: ROI-final-fluid-bounds}, and step (b) is by $C=32(H+3)/c$, $T\geq Cs^4$, $P\geq80\vmax s^2/c$, and $\vmax=2$. Since $Y_i\geq0$, we have by Markov's inequality that $\Pr(\sum_{i\in\mathcal I}Y_i>T\delta)\leq1/4$. On the event $\sum_{i\in\mathcal I}Y_i\leq T\delta$, we have
\begin{equation}
\label{eq: ROI-preparation-bounds}
\begin{aligned}
\frac1T\sumt(\widehat c_{i,t}^{\widetilde F}-\widehat u_{i,t}^{\widetilde F})
&\leq\delta,\quad i\in\mathcal I,\\
\frac1T\sum_{i\in\mathcal I}\sumt\left[D_{i,t}^{\widetilde F}(\mu_i^{*,\widetilde F};\vec\alpha_{-i,t})
-(1+\mu_i^{*,\widetilde F})\widehat u_{i,t}^{\widetilde F}
+\mu_i^{*,\widetilde F}\widehat c_{i,t}^{\widetilde F}\right]
&\leq\delta,\\
\frac{\mu_i^{*,\widetilde F}}{T}\sumt(\widehat u_{i,t}^{\widetilde F}-\widehat c_{i,t}^{\widetilde F})
&\leq\delta,\quad i\in\mathcal I.
\end{aligned}
\end{equation}

Since the pacing multipliers may exceed the cap $M$, \eqref{eq: ROI-preparation-bounds} does not directly give \eqref{eq: ROI-extraction-payment-hypothesis}--\eqref{eq: ROI-extraction-slack-hypothesis}, i.e., the conditions we require to apply \Cref{lem: low-regret-implies-approximate-ABE}. To this end, we first discard a small fraction of exceptional rounds, then cap the pacing multipliers at $M$ in the remaining rounds, and finally check the conditions of \Cref{lem: low-regret-implies-approximate-ABE}.

\paragraph{Exceptional rounds.}
For notational simplicity, we let $\beta=\gamma\eta'\vmax/(320M)$, $L=32(\vmax+1)/(\eta\vmax\beta)$, and $\kappa=\gamma/(M-1)$. We consider the set of exceptional rounds $\mathcal T=\mathcal T_1\cup\mathcal T_2$, where
\begin{equation*}
\mathcal T_1=\left\{t\in[T]:\bigl|\{j\in\mathcal I:\alpha_{j,t}\leq L\}\bigr|<2\right\},
\qquad
\mathcal T_2=\left\{t\notin\mathcal T_1:\max_{i\in\mathcal I}\bigl|\min\{\mu_{i,t},H+1\}-\mu_i^{*,\widetilde F}\bigr|>\kappa\right\}.
\end{equation*}

\begin{claim}
\label{clm: ROI-exceptional-rounds}
We have $|\mathcal T|\leq\beta T$, and $|\mu_{i,t}-\mu_i^{*,\widetilde F}|\leq\kappa$ for all $i\in\mathcal I$ and $t\notin\mathcal T$.
\end{claim}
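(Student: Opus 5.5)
The second assertion is nearly immediate, so I would prove it first and then bound $|\mathcal T_1|$ and $|\mathcal T_2|$ separately.

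\textbf{Deviation bound outside $\mathcal T$.} Fix $t\notin\mathcal T$. Then $t\notin\mathcal T_2$, so $|\min\{\mu_{i,t},H+1\}-\mu_i^{*,\widetilde F}|\leq\kappa$. We have $\mu_i^{*,\widetilde F}<H$ and $\kappa=\gamma/(M-1)<1$. Hence $\min\{\mu_{i,t},H+1\}<H+1$, so the minimum equals $\mu_{i,t}$, and $|\mu_{i,t}-\mu_i^{*,\widetilde F}|\leq\kappa$ follows. The main part is therefore $|\mathcal T_1|\leq\beta T/2$ and $|\mathcal T_2|\leq\beta T/2$.

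\textbf{Bounding $\mathcal T_2$.} I would use the strong-convexity-like property. For $t\notin\mathcal T_1$, at least two bidders in $\mathcal I$ have $\alpha_{j,t}\leq L$. So for every $i\in\mathcal I$ there is some $j\neq i$ with $\alpha_{j,t}\leq L$. Since $P=\lceil 80\vmax s^2/c\rceil\geq L\vmax$ for large $s$, \Cref{lem: ROI-discrete-curvature} applies with $n=s$ and $\mu=\mu_i^{*,\widetilde F}\in[0,H]$. It lower-bounds each bracketed term in the second line of \eqref{eq: ROI-preparation-bounds} by $\frac{\widetilde\sigma}{4}(\min\{\mu_{i,t},H+1\}-\mu_i^{*,\widetilde F})^2$, minus an error of order $\eta\vmax(H+2+(H+1)L)/(s^2K)$. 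The bracketed terms are nonnegative in every round, so I can drop the rounds in $\mathcal T_1$. Summing over $i\in\mathcal I$ and $t\in\mathcal T_2$, where some $i$ has squared deviation above $\kappa^2$, gives
\[
|\mathcal T_2|\,\frac{\widetilde\sigma\kappa^2}{4}\leq \delta T+ sT\cdot O\!\left(\frac{\eta\vmax(H+2+(H+1)L)}{s^2K}\right).
\]
With $\delta=c/s^2$, $K=2\lceil s/c\rceil$ and $\widetilde\sigma=\Theta(\eta/(s^2L(H+2)^3))$, the powers of $s$ cancel. This yields $|\mathcal T_2|/T=O\bigl(cL(H+2)^3(1+L(H+1))/(\eta\kappa^2)\bigr)$. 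All of $L,H,\kappa,\eta$ are fixed before $c$, so choosing $c$ small makes this at most $\beta/2$.

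\textbf{Bounding $\mathcal T_1$ (main obstacle).} In a round $t\in\mathcal T_1$, at least $s-1$ bidders in $\mathcal I$ have $\alpha_{j,t}>L$. I would show that in such rounds the total scaled ROI violation $\sum_{i\in\mathcal I}(\widehat c_{i,t}^{\widetilde F}-\widehat u_{i,t}^{\widetilde F})$ is large. In every other round this sum is at least $-\sum_i\widehat u_{i,t}^{\widetilde F}\geq-\vmax$, since at most one unit of item is allocated per value point. Summing the first line of \eqref{eq: ROI-preparation-bounds} over $i\in\mathcal I$ bounds the total over all rounds by $s\delta T=cT/s$.

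For the lower bound in a $\mathcal T_1$ round, restrict attention to the $F_{0,K}$ component with a selected pair $\{i,j\}$ of two high-multiplier bidders. Such pairs carry probability roughly $\eta(1-2/s)$. On them the winner pays $\min\{\alpha_{i,t}v_i,\alpha_{j,t}v_j\}$, scaled at level $P\geq L\vmax$. Integrating over the grid, the expected scaled payment should be of order $\eta L\vmax$, up to an absolute constant. The expected value received is at most $\vmax$. Care is needed with the scaling factor $\min\{1,P/p\}$ and with arbitrary tie-breaking.

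The value $L=32(\vmax+1)/(\eta\vmax\beta)$ is chosen so that this per-round excess is at least $2(\vmax+1)/\beta$. Then
\[
|\mathcal T_1|\cdot\frac{2(\vmax+1)}{\beta}-T\vmax\leq cT/s,
\]
which gives $|\mathcal T_1|\leq\beta T/2$. Getting the explicit constant in the payment lower bound right is the step I expect to be delicate. If it falls short, I would enlarge the constant in $L$, which only feeds into the choice of $c$ above.
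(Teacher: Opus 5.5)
Your proposal is correct and follows essentially the same route as the paper. The paper bounds $|\mathcal T_1|$ by comparing a per-round scaled payment of at least $\eta L\vmax/16$ in exceptional rounds against total value plus $s\delta T$, bounds $|\mathcal T_2|$ by summing \Cref{lem: ROI-discrete-curvature} at $\mu=\mu_i^{*,\widetilde F}$ over $t\notin\mathcal T_1$ and applying Markov, and finishes with $\kappa<1$ and $\mu_i^{*,\widetilde F}<H$. The constant you flagged is not actually delicate: with probability at least $\frac{s-2}{s}\cdot\frac14\geq\frac18$ (recall $K$ is even), the $F_{0,K}$ pair consists of two bidders with multipliers above $L$ and grid values at least $\vmax/2$. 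On that event the scaled price is at least $\min\{L\vmax/2,P\}=L\vmax/2$, which gives exactly $\eta L\vmax/16$ and no need to enlarge $L$.
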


\begin{proof}[Proof of the claim]
We first bound $|\mathcal T_1|$. For $t\in\mathcal T_1$, with probability at least $1/8$, the selected pair of $F_{0,K}$ has two multipliers exceeding $L$ and two grid values at least $\vmax/2$. Since $P\geq L\vmax$ for sufficiently large $s$, the total scaled payment in such a round is at least $\eta L\vmax/16$. On the other hand, by the first inequality in \eqref{eq: ROI-preparation-bounds},
\begin{equation*}
\frac1T\sum_{i\in\mathcal I}\sumt\widehat c_{i,t}^{\widetilde F}
\leq\frac1T\sum_{i\in\mathcal I}\sumt\widehat u_{i,t}^{\widetilde F}+s\delta
\leq\vmax+1.
\end{equation*}
Thus $|\mathcal T_1|/T\leq16(\vmax+1)/(\eta L\vmax)=\beta/2$.

We next bound $|\mathcal T_2|$. For $t\notin\mathcal T_1$, every bidder has an opponent with multiplier at most $L$. We sum \Cref{lem: ROI-discrete-curvature} with $\mu=\mu_i^{*,\widetilde F}\in[0,H]$ over $i\in\mathcal I$ and $t\notin\mathcal T_1$, and use the second inequality in \eqref{eq: ROI-preparation-bounds} and $K\geq s/c$. Requiring
\begin{equation*}
c\leq\frac{\beta\eta\vmax\kappa^2}{24L(H+2)^3},\qquad
c\leq\frac{\beta\kappa^2}{192L(H+2)^3[H+2+(H+1)L]},
\end{equation*}
we obtain
\begin{align}
\frac1T\sum_{t\notin\mathcal T_1}\sum_{i\in\mathcal I}
\bigl(\min\{\mu_{i,t},H+1\}-\mu_i^{*,\widetilde F}\bigr)^2
&\leq\frac{4}{\widetilde\sigma}\left[\delta+\frac{8\eta\vmax[H+2+(H+1)L]}{(s-1)K}\right]\nonumber\\
&=\frac{6c(s-1)L(H+2)^3}{\eta\vmax s}
+\frac{48sL(H+2)^3[H+2+(H+1)L]}{K}\nonumber\\
&\leq\frac{\beta\kappa^2}{4}+\frac{\beta\kappa^2}{4}
=\frac{\beta\kappa^2}{2},
\label{eq: ROI-projected-dual-deviation}
\end{align}
where $\widetilde\sigma=2\eta\vmax/[3s(s-1)L(H+2)^3]$. By Markov's inequality, $|\mathcal T_2|/T\leq\beta/2$.

Finally, for $t\notin\mathcal T$, we have $\mu_{i,t}\leq H+1$, as $\kappa<1$ and $\mu_i^{*,\widetilde F}\leq H$. Thus $|\mu_{i,t}-\mu_i^{*,\widetilde F}|\leq\kappa$.
\end{proof}

\paragraph{Capping the pacing multipliers.}
We show that capping the pacing multipliers at $M$ does not change the outcomes on the atoms of $F_1$ in the rounds $t\notin\mathcal T$. We first claim the following.

\begin{claim}
\label{clm: ROI-no-two-small-duals}
No atom of $F_1$ has two positive-value bidders $i$ with $\mu_i^{*,\widetilde F}\leq\tau$.
\end{claim}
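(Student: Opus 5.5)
The plan is to argue by contradiction, using the ROI-violation bound in the first line of \eqref{eq: ROI-preparation-bounds} together with \Cref{clm: ROI-exceptional-rounds}. Suppose some atom $\vec v_j$ of $F_1$ has two bidders $i_1,i_2\in\mathcal I$ with $v_{i_1j},v_{i_2j}>0$ and $\mu_{i_1}^{*,\widetilde F},\mu_{i_2}^{*,\widetilde F}\leq\tau$. A small optimal dual corresponds to a huge pacing multiplier. If two such bidders share an item, the second-highest bid on that item is enormous compared with any value. So whoever wins it in the non-exceptional rounds pays far more than the total value they could ever collect, which contradicts the ROI bound.

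\textbf{Step 1 (multipliers are large off $\mathcal T$).} By \Cref{clm: ROI-exceptional-rounds}, for $t\notin\mathcal T$ we have $\mu_{i_k,t}\leq\tau+\kappa$ for $k=1,2$. Since $\kappa=\gamma/(M-1)$, I will choose $M$ large enough (after $\tau$) that $\kappa\leq\tau$. By \Cref{prop: multipliers} with $r_i=1$, this gives $\alpha_{i_k,t}=1+1/\mu_{i_k,t}\geq1/(2\tau)$. By the useful bounds for the hard family, every positive value is at least $1/15$. Hence in every round $t\notin\mathcal T$ both bids on $\vec v_j$ are at least $1/(30\tau)$, and so is the price $p(\vec v_j;\vec\alpha_t)$.

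\textbf{Step 2 (lower bound on scaled payment).} The atom has mass $(1-\eta-\eta')/m\geq 11/(16s)$ under $\widetilde F$, and it is fully allocated among the top bidders. For this atom to yield a large payment for $i_1$ or $i_2$, the winner must be one of them. The first thing to verify is therefore that no third bidder outbids both of them on $\vec v_j$; otherwise we simply bound the payment of whoever wins, and the per-bidder argument below is unchanged. Next, the scaled payment per unit is $\min\{p,P\}$. Since $P=\lceil 80\vmax s^2/c\rceil\geq 1/(30\tau)$ for large $s$, the scaled price is at least $1/(30\tau)$. Summing over the at least $(1-\beta)T\geq T/2$ rounds outside $\mathcal T$, the total scaled payment on this atom is at least $\frac{T}{2}\cdot\frac{11}{16s}\cdot\frac{1}{30\tau}$. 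It is split among at most a bounded number of winners, and the item has at most four positive-value bidders. So some single bidder $i$ has $\sum_t\widehat c_{i,t}^{\widetilde F}\geq T/(C'\tau s)$ for an absolute constant $C'$.

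\textbf{Step 3 (contradiction).} From the first line of \eqref{eq: ROI-preparation-bounds},
\[
\sum_t\widehat c_{i,t}^{\widetilde F}\leq\sum_t\widehat u_{i,t}^{\widetilde F}+\delta T\leq\frac{5T}{s}+\frac{cT}{s^2},
\]
using $\int v_i\,\dd\widetilde F\leq 5/s$. Choosing $\tau<1/(6C')$ makes $T/(C'\tau s)$ exceed $6T/s$, which is a contradiction. Note that $\tau$ is fixed before $M$, and $M$ is chosen before $c$, consistent with the stated order of constants.

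The main obstacle is Step 2. It requires making sure the large price is actually charged to a bidder whose ROI is controlled, namely a bidder in $\mathcal I$, which all bidders on $F_1$ atoms are. It also requires that the truncation at $P$ does not erase the lower bound, and that the atom-mass and minimum-value constants from the hard family are correctly threaded through.
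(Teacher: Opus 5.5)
Your proposal is correct and follows essentially the same argument as the paper. It uses \Cref{clm: ROI-exceptional-rounds} with $\kappa$ small relative to $\tau$ to get bids of at least $1/(30\tau)$ and hence a scaled price of at least $1/(30\tau)$ once $P\geq 1/(30\tau)$, accumulates this over the atom's mass in the $(1-\beta)T$ non-exceptional rounds, and contradicts the first line of \eqref{eq: ROI-preparation-bounds} together with $\int v_i\,\dd\widetilde F\leq 5/s$. The only difference is cosmetic: the paper sums the ROI bound over the at most four positive-value bidders of the atom, giving the bound $4(5/s+\delta)$, whereas you pigeonhole onto a single bidder, which is equivalent.
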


Intuitively, two bidders with very high multipliers cannot compete for the same item, since both would then pay too much.

\begin{proof}[Proof of the claim]
Suppose otherwise. For $t\notin\mathcal T$, both bidders have $\mu_{i,t}\leq\tau+\kappa\leq2\tau$ by \Cref{clm: ROI-exceptional-rounds}, where we take $M$ large enough that $\kappa=\gamma/(M-1)\leq\tau/2$. As positive values in the hard instances are at least $1/15$, both bid at least $1/(30\tau)$. For sufficiently large $s$, we have $P\geq1/(30\tau)$, so the scaled second price on this atom is at least $1/(30\tau)$. The atom has mass $(1-\eta-\eta')/m$ and at most four positive-value bidders. Using $|\mathcal T|\leq\beta T$, $(1-\eta-\eta')(1-\beta)\geq1/2$, and the first inequality in \eqref{eq: ROI-preparation-bounds}, we obtain
\begin{equation*}
\frac{11}{900\tau s}=\frac1{60\tau m}
\leq\frac{(1-\eta-\eta')(1-\beta)}{30\tau m}
\leq\frac1T\sum_{i:v_i>0}\sumt\widehat c_{i,t}^{\widetilde F}
\leq4\left(\frac5s+\delta\right)
\leq\frac{24}s,
\end{equation*}
where the sum is over the positive-value bidders of the atom. This is impossible if we choose a sufficiently small constant $\tau$.
\end{proof}

For $t\notin\mathcal T$, we define the capped pacing multipliers $\alpha_{i,t}'=\min\{\alpha_{i,t},M\}$. Equivalently, $\alpha_{i,t}'=1+1/\mu_{i,t}'$, where $\mu_{i,t}'=\max\{\mu_{i,t},1/(M-1)\}$ is the projected dual variable. We also let $\mu_i'=\max\{\mu_i^{*,\widetilde F},1/(M-1)\}$.

\begin{claim}
\label{clm: ROI-capping}
For $t\notin\mathcal T$, the winner and the price on each atom of $F_1$ are the same under $\vec\alpha_t$ and $\vec\alpha_t'$.
\end{claim}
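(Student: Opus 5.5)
Only bidders whose multiplier exceeds $M$ are affected by capping, and the plan is to show that such a bidder, on every atom of $F_1$ where it has positive value, is the unique highest bidder both before and after capping and is never the second-highest bidder. Then the winner is unchanged and the price is still set by the same competitor. Fix $t\notin\mathcal T$ and an atom $\vec v_j$ of $F_1$. By \Cref{clm: ROI-exceptional-rounds}, $|\mu_{i,t}-\mu_i^{*,\widetilde F}|\leq\kappa$ for every $i\in\mathcal I$. Call bidder $i$ \emph{large} at round $t$ if $\alpha_{i,t}>M$, i.e. $\mu_{i,t}<1/(M-1)$; only large bidders change bids under capping. A large bidder satisfies $\mu_i^{*,\widetilde F}<1/(M-1)+\kappa=(1+\gamma)/(M-1)\leq\tau$, once $M$ is large enough relative to $\tau$.

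By \Cref{clm: ROI-no-two-small-duals}, each atom has at most one positive-value bidder $i$ with $\mu_i^{*,\widetilde F}\leq\tau$. So on a given atom at most one positive-value bidder is large. Every other positive-value bidder $k$ has $\mu_k^{*,\widetilde F}>\tau$, hence $\mu_{k,t}\geq\tau-\kappa\geq\tau/2$ (using $\kappa\leq\tau/2$). Its multiplier is then at most $1+2/\tau$, and its bid is at most $\vmax(1+2/\tau)$. Bidders with zero value on the atom bid zero, so they are irrelevant.

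If no positive-value bidder on the atom is large, nothing changes. Otherwise let $i$ be the unique large one. Its bid before capping is at least $Mv_{ij}$ and after capping equals $Mv_{ij}$. Since positive values in the hard family are at least $1/15$ and $\vmax=2$, taking $M>30(1+2/\tau)$ gives
\[
Mv_{ij}\geq M/15>\vmax(1+2/\tau)\geq\max_{k\neq i}\alpha_{k,t}v_{kj}.
\]
Thus $i$ is the strict unique highest bidder under both $\vec\alpha_t$ and $\vec\alpha_t'$, so there is no tie and the tie-breaking rule plays no role. The second-highest bid is $\max_{k\neq i}\alpha_{k,t}v_{kj}=\max_{k\neq i}\alpha'_{k,t}v_{kj}$, which capping leaves unchanged because no other bidder on the atom is large. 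Hence the winner and the price coincide under $\vec\alpha_t$ and $\vec\alpha_t'$.

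The main obstacle is ordering the constants consistently. We need $M$ chosen after $\tau$ so that both $(1+\gamma)/(M-1)\leq\tau$ and $\kappa=\gamma/(M-1)\leq\tau/2$ hold, and so that $M>30(1+2/\tau)$. This matches the stated order in which $\tau$ is fixed before $M$, and it uses \Cref{clm: ROI-no-two-small-duals} with the same $\tau$.
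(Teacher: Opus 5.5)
Your proposal is correct and follows essentially the same argument as the paper. Both combine the exceptional-rounds claim and the no-two-small-duals claim to show that at most one positive-value bidder per atom is capped. Both then bound every competitor's bid by $(1+2/\tau)\vmax$ and compare that with the capped bid $M/15$, choosing $M$ after $\tau$. Your explicit remark that the price is unchanged because no other positive-value bidder on the atom is capped fills in a step the paper leaves implicit.
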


\begin{proof}[Proof of the claim]
If no positive bid on the atom is capped, the two profiles coincide on the atom. Otherwise, let $i$ be a bidder whose positive bid is capped. Then $\mu_{i,t}<1/(M-1)$, so $\mu_i^{*,\widetilde F}<1/(M-1)+\kappa$ by \Cref{clm: ROI-exceptional-rounds}, which is less than $\tau$ for sufficiently large $M$. By \Cref{clm: ROI-no-two-small-duals}, every other positive-value bidder $k$ has $\mu_k^{*,\widetilde F}>\tau$. Thus $\mu_{k,t}\geq\tau-\kappa\geq\tau/2$, and bidder $k$ bids at most $(1+2/\tau)\vmax$. The capped bid of bidder $i$ is at least $M/15$, which exceeds $(1+2/\tau)\vmax$ for sufficiently large $M$. Hence bidder $i$ is the unique winner under both profiles with the same price.
\end{proof}

\paragraph{Checking the extraction conditions.}
In each round $t\notin\mathcal T$, we use the pacing multipliers $\vec\alpha_t'$, and keep the original allocations on the atoms of $F_1$ and on $F_0'$. On all other value vectors, we allocate the item to a highest bidder under $\vec\alpha_t'$. Let $u_{i,t}'$ and $c_{i,t}'$ be the resulting fluid values and payments. By \Cref{clm: ROI-capping}, the prices on the atoms of $F_1$ are at most $M\vmax$, which is at most $P$ for sufficiently large $s$. Thus the outcomes on the atoms of $F_1$ and on $F_0'$ equal the original scaled outcomes. From $F_0'$, each bidder receives value $\eta'\vmax/s$ in each round, so
\begin{equation*}
\frac1T\sum_{t\notin\mathcal T}u_{i,t}'
\geq\left(1-\frac{|\mathcal T|}{T}\right)\frac{\eta'\vmax}{s}
\geq\frac{\eta'\vmax}{2s}.
\end{equation*}

We now check \eqref{eq: ROI-extraction-payment-hypothesis}--\eqref{eq: ROI-extraction-slack-hypothesis} for the modified history, with $T-|\mathcal T|$ rounds and $\mu_i=\mu_i'$. Its pacing multipliers are in $(1,M]$, and $\mu_i'\geq1/(M-1)$.

\begin{itemize}
\item \textit{ROI violation.} Compared with $\sumt(\widehat c_{i,t}^{\widetilde F}-\widehat u_{i,t}^{\widetilde F})$, discarding the rounds in $\mathcal T$ adds at most $\sum_{t\in\mathcal T}\widehat u_{i,t}^{\widetilde F}\leq5\beta T/s$. Changing the outcomes on $F_{0,K}$ adds at most $\eta(M+1)\vmax/s$ per round, as a new payment is at most $Mv_i$ and a removed scaled value is at most $v_i$. Requiring $\eta/\eta'\leq\gamma/[64(M+1)]$ and $c\leq\gamma\eta'\vmax/(128M)$, the first inequality in \eqref{eq: ROI-preparation-bounds} gives
\begin{align*}
\frac{\sum_{t\notin\mathcal T}(c_{i,t}'-u_{i,t}')}
{\sum_{t\notin\mathcal T}u_{i,t}'}
&\leq\frac{2s}{\eta'\vmax}\left[\delta+\frac{5\beta}s
+\frac{\eta(M+1)\vmax}s\right]
=\frac{2c}{s\eta'\vmax}+\frac{10\beta}{\eta'\vmax}+\frac{2\eta(M+1)}{\eta'}\\
&\leq\frac{\gamma}{64M}+\frac{\gamma}{32M}+\frac{\gamma}{32}
\leq\frac\gamma8.
\end{align*}

\item \textit{Dual deviation.} By \Cref{clm: ROI-exceptional-rounds}, for $t\notin\mathcal T$ we have $|\mu_{i,t}'-\mu_i'|\leq|\mu_{i,t}-\mu_i^{*,\widetilde F}|=|\min\{\mu_{i,t},H+1\}-\mu_i^{*,\widetilde F}|$. Since $|\mathcal T|\leq\beta T\leq T/4$, \eqref{eq: ROI-projected-dual-deviation} gives
\begin{equation*}
\frac1{T-|\mathcal T|}
\sum_{i\in\mathcal I}\sum_{t\notin\mathcal T}(\mu_{i,t}'-\mu_i')^2
\leq\frac{T}{T-|\mathcal T|}\cdot\frac{\beta\kappa^2}{2}
\leq\beta\kappa^2.
\end{equation*}

\item \textit{ROI slack.} Consider a bidder with $\mu_i'>1/(M-1)$, so that $\mu_i'=\mu_i^{*,\widetilde F}$. Since the dual loss is nonnegative, at each value vector we have
\begin{equation*}
\frac{\mu_i'}{1+\mu_i'}\widehat c_{i,t}
\leq\widehat u_{i,t}
+\frac{D_{i,t}(\mu_i';\vec v,\vec\alpha_{-i,t})
-(1+\mu_i')\widehat u_{i,t}+\mu_i'\widehat c_{i,t}}{1+\mu_i'}.
\end{equation*}
We apply this inequality to the outcomes removed from the original process, i.e., those in the rounds of $\mathcal T$ and on $F_{0,K}$. These outcomes have total scaled value at most $5\beta T/s+\eta\vmax T/s$. Their total of $D_{i,t}-(1+\mu_i')\widehat u_{i,t}+\mu_i'\widehat c_{i,t}$ is at most $\delta T$ by the second inequality in \eqref{eq: ROI-preparation-bounds}. The new outcomes on $F_{0,K}$ have value at most $\eta\vmax T/s$. Together with the third inequality in \eqref{eq: ROI-preparation-bounds}, we obtain
\begin{align*}
\frac{\mu_i'}{1+\mu_i'}
\frac{\sum_{t\notin\mathcal T}(u_{i,t}'-c_{i,t}')}
{\sum_{t\notin\mathcal T}u_{i,t}'}
&\leq\frac{2s}{\eta'\vmax}\left[\frac{2\delta}{1+\mu_i'}+\frac{5\beta}s
+\frac{2\eta\vmax}s\right]
\leq\frac{4c}{s\eta'\vmax}
+\frac{10\beta}{\eta'\vmax}+\frac{4\eta}{\eta'}\\
&\leq\frac{\gamma}{32M}+\frac{\gamma}{32M}+\frac{\gamma}{16M}
=\frac{\gamma}{8M}.
\end{align*}
\end{itemize}

We take $\delta_1=\gamma/8$, $\delta_2=\beta\kappa^2$, and $\delta_3=\gamma/(8M)$, and check \Cref{lem: low-regret-implies-approximate-ABE}'s requirement on $\delta_2$. Since each bidder receives value $\eta'\vmax/s$ from $F_0'$ in each round and $\int v_i\,\dd\widetilde F(\vec v)\leq5/s$, we have for all $i\in\mathcal I$,
\begin{equation*}
\delta_2=\beta\kappa^2
=\frac{\gamma^3\eta'\vmax}{320M(M-1)^2}
\leq\frac{\gamma^3\sum_{t\notin\mathcal T}u_{i,t}'}
{32M(M-1)^2(T-|\mathcal T|)\int v_i\,\dd\widetilde F(\vec v)}.
\end{equation*}

By \Cref{lem: low-regret-implies-approximate-ABE}, we obtain a $\gamma$-approximate fluid autobidding equilibrium with cap $M$ for $\widetilde F$ with probability at least $3/4$. By our remark in \Cref{app: ROI-discretization} and $6\eta\leq(1-\gamma)\eta'$, \Cref{lemma: roi-fluid-hardness} also applies to $\widetilde F$, so this solves the source \ppad-hard problem. Therefore, we get a polynomial-time randomized algorithm for the $\mathsf{PPAD}$-hard problem, contradicting the assumption of \Cref{thm: fixed-polynomial-ROI-regret}. For details of the computation model, see \Cref{app: ROI-complexity-model}.
\end{proof}

\subsection{Computational Details}
\label{app: ROI-complexity-model}

Similar to \Cref{app: complexity-model}, we argue that the reduction in the proof of \Cref{thm: fixed-polynomial-ROI-regret} can be implemented as a polynomial-time randomized algorithm that succeeds with probability at least $2/3$. We can check the encoding length and running time similarly. 
 
 Different from the budget setting, the reduction additionally computes an optimal fluid dual solution $\mu_i^{*,\widetilde F}$ for each $i\in\mathcal I$, which gives the exceptional rounds $\mathcal T_2$ and the fixed multipliers $\mu_i'=\max\{\mu_i^{*,\widetilde F},1/(M-1)\}$ used in \Cref{lem: low-regret-implies-approximate-ABE}. Since $\widetilde F$ has finite support, the fluid dual objective is convex and piecewise linear with at most $T|\supp(\widetilde F)|$ rational breakpoints, so we find a minimizer by evaluating it at these breakpoints and at the endpoints of $[0,H]$. 
 The rest of the reduction is the same as in \Cref{app: complexity-model}.

\
\end{document}